\newif\ifdraft\draftfalse
\newif\iflater\latertrue  
\newif\ifpostreview\postreviewfalse
\newif\ifextended\extendedtrue
\definecolor{dkblue}{rgb}{0,0.1,0.5}
\definecolor{dkgreen}{rgb}{0,0.5,0}
\definecolor{dkred}{rgb}{0.7,0,0}
\definecolor{dkpurple}{rgb}{0.7,0,0.4}
\definecolor{olive}{rgb}{0.4, 0.4, 0.0}
\definecolor{teal}{rgb}{0.0,0.5,0.5}
\definecolor{azure}{rgb}{0.0, 0.4, .8}
\definecolor{edoyellow}{rgb}{0.568, 0.568, 0.008}
\long\def\comment#1{}
\newcommand{\comm}[3]{\ifdraft\textcolor{#1}{[#2: #3]}\fi}
\newcommand{\bcp}[1]{\comm{dkpurple}{BCP}{#1}}
\newcommand{\Fuzzi}{Fuzzi\xspace}
\newcommand{\aprhl}{apRHL\xspace}
\newcommand{\tyint}[0]{\mathtt{int}}
\newcommand{\tyfloat}[0]{\mathtt{real}}
\newcommand{\tybool}[0]{\mathtt{bool}}
\newcommand{\laplace}[2]{\mathcal{L}_{#1}(#2)}
\newcommand{\cmdif}[3]{\bm{\mathtt{if}}\, #1\, \bm{\mathtt{then}}\, #2\, \bm{\mathtt{else}}\, #3\, \bm{\mathtt{end}}}
\newcommand{\cmdwhile}[2]{\bm{\mathtt{while}}\, #1\, \bm{\mathtt{do}}\, #2\, \bm{\mathtt{end}}}
\newcommand{\cmdskip}[0]{\bm{\mathtt{skip}}}
\newcommand{\cmdbmap}[6]{\bm{\mathtt{bmap}}(#1, #2, #3, #4, #5, #6)}
\newcommand{\cmdamap}[6]{\bm{\mathtt{vmap}}(#1, #2, #3, #4, #5, #6)}
\newcommand{\gsep}[0]{\,|\,}
\newcommand{\denote}[1]{{[\![#1]\!]}}
\newcommand{\hastype}[4]{#1 \vdash #2 \in_{#3} #4}
\newcommand{\tytriple}[3]{\left\{#1\right\}\,\allowbreak #2\,\allowbreak \left\{#3\right\}}
\newcommand{\aprhlstmt}[5]
{\vdash\, #1\,\allowbreak
              \sim_{#2}\,\allowbreak
          #3\,\allowbreak
              :\,\allowbreak #4
          \Rightarrow \allowbreak #5}
\newcommand{\ag}[1]{\langle #1 \rangle}
\newcommand{\sapprox}{\mathtt{approx}}
\newcommand{\pr}{\mathbb{P}}
\newcommand{\fv}{\mathtt{fvs}}
\newcommand{\mv}{\mathtt{mvs}}
\newcommand{\distr}{\mathop{\bigcirc}}
\newcommand{\term}[2]{#1\vdash #2\, \mathtt{term}}
\newcommand{\determ}{\mathtt{determ}}
\newcommand{\linear}{\mathtt{linear}}
\newcommand{\shape}{\mathtt{shape}}
\newcommand{\llet}{\mathtt{let}}
\newcommand{\lin}{\mathtt{in}}
\newcommand{\literal}{\mathtt{literal}}
\newcommand{\widegsep}{\,\;\gsep\;\,}
\newenvironment{mathparsmall}
  {\begin{small}\begin{mathpar}}
  {\end{mathpar}\end{small}}
\definecolor{gray}{rgb}{0.4, 0.4, .4}
\newcommand{\tinytext}[1]{\mbox{\tiny\color{gray}{#1:}}}
\newtheorem{lem}{Lemma}
\newtheorem{defn}{Definition}
\newcolumntype{L}{>{$}l<{$}}
\newcolumntype{C}{>{$}c<{$}}
\newcolumntype{R}{>{$}r<{$}}
\author{Hengchu Zhang}
\affiliation{
  \institution{University of Pennsylvania}            
  \country{USA}
}
\email{hengchu@seas.upenn.edu}          
\author{Edo Roth}
\affiliation{
  \institution{University of Pennsylvania}            
  \country{USA}
}
\email{edoroth@seas.upenn.edu}          
\author{Andreas Haeberlen}
\affiliation{
  \institution{University of Pennsylvania}            
  \country{USA}
}
\email{ahae@cis.upenn.edu}          
\author{Benjamin C. Pierce}
\affiliation{
  \institution{University of Pennsylvania}            
  \country{USA}
}
\email{bcpierce@cis.upenn.edu}          
\author{Aaron Roth}
\affiliation{
  \institution{University of Pennsylvania}            
  \country{USA}
}
\email{aaroth@cis.upenn.edu}          
\begin{document}

\title{Fuzzi: A Three-Level Logic for Differential Privacy}

\keywords{Differential privacy, typechecking, static analysis, \aprhl, Fuzz,
\Fuzzi}

\begin{abstract}

Curators of sensitive datasets sometimes need to know whether queries against
the data are {\em differentially private}~\cite{Dwork:2006:CNS:2180286.2180305}.
Two sorts of logics have been proposed for checking this property: (1) {\em type
systems} and other static analyses, which fully automate straightforward
reasoning with concepts like ``program sensitivity'' and ``privacy loss,'' and
(2) full-blown program logics such as \aprhl (an approximate, probabilistic,
relational Hoare logic)~\cite{Barthe:2016:PDP:2933575.2934554}, which support
more flexible reasoning about subtle privacy-preserving algorithmic techniques
but offer only minimal automation.

We propose a {\em three-level logic} for differential privacy in an
imperative setting and present a prototype implementation
called \Fuzzi.  \Fuzzi's lowest level is a general-purpose logic; its middle
level is \aprhl; and its top level is a novel {\em sensitivity logic} adapted
from the linear-logic-inspired type system of Fuzz, a differentially private
functional language~\cite{Reed:2010:DMT:1932681.1863568}. The key novelty is a
high degree of integration between the sensitivity logic and the two lower-level
logics: the judgments and proofs of the sensitivity logic can be easily translated
into \aprhl; conversely, privacy properties of key algorithmic
building blocks can be proved manually in \aprhl and the base logic, then
packaged up as typing rules that can be applied by a checker for the sensitivity
logic to automatically construct privacy proofs for composite programs of
arbitrary size.

We demonstrate \Fuzzi's utility by implementing four different private
machine-learning algorithms and showing that \Fuzzi's checker is able to
derive tight sensitivity bounds.


\end{abstract}

\maketitle

\section{Introduction}
\label{sec:intro}

Differential privacy~\cite{Dwork:2006:CNS:2180286.2180305} has become the gold
standard for privacy-preserving statistical analysis in the academic
community, and it is being adopted by a growing number of industry and
government organizations, including Apple~\cite{AppleDP},
Google~\cite{Erlingsson:2014:RRA:2660267.2660348}, Microsoft~\cite{MSDP} and the
US Census Bureau~\cite{CensusDP}. Differential privacy makes minimal assumptions
about an adversary's knowledge, allowing analysts to quantitatively estimate
privacy loss.  However, the reasoning needed to correctly achieve differential
privacy can be rather subtle, as multiple errors in published algorithms
attest~\cite{DBLP:journals/corr/ChenM15e, DBLP:journals/corr/LyuSL16}.

\citet{Barthe:2016:PDP:2933575.2934554} developed the first program logic for formalizing proofs of differential privacy for imperative
programs, called \aprhl (Approximate, Probabilistic, Relational Hoare Logic).  The abstractions provided by \aprhl are expressive enough to
capture the essence of many complex differentially private algorithms,
allowing experts to prove differential privacy for small, tricky code
sequences at a fairly high level of abstraction. However, proving
differential privacy in \aprhl for large programs can be a rather tedious
endeavor. Fortunately, in many proofs for larger private data analysis
programs (either in \aprhl or on paper), the expert knowledge of
differential privacy is concentrated in the analysis of small differentially
private subroutines, while the rest of the proof basically just propagates
``sensitivity'' information and aggregate privacy costs between subroutines.
This suggests that one could considerably increase the range of possible
use cases, especially for analysts who are not privacy experts, by combining
a small but extensible set of building blocks (and the corresponding manual
proofs) with a largely automated analysis that mechanically completes the proof
for a given program.

To enable this approach, we build a new layer of abstraction
over \aprhl to automate the mechanical parts of this process. This layer
tracks \textit{sensitivities} for program
variables and \textit{privacy costs} of commands using Hoare-triple-style
proof rules. This information about sensitivity and privacy cost has a
direct translation to lower-level \aprhl assertions. This allows
information in the higher-level logic to seamlessly interact with expert
proofs of differential privacy that have been carried out using the two lower layers.
Since the top layer is
entirely automated, we will often refer to it as a type system (and to its
proof rules as typing rules).

We use the term \emph{mechanisms} to refer to building blocks of
differentially private
programs. Many differentially private mechanisms can be viewed as
parameterized program templates, where the differential privacy properties
depend on properties of the instantiation parameters, which can themselves be
program expressions or commands. In order to integrate expert reasoning about
such mechanisms, we develop a framework for expressing program templates and the
corresponding parameterized proofs of differential privacy. This allows experts
to extend the sensitivity type system with a specialized typing rule for each
template, allowing non-expert programmers to write application programs that
combine these templates in straightforward ways.  This framework uses \aprhl
directly to give structured proofs of privacy property, while using the
general-purpose base logic to establish lower-level semantic properties that go
beyond the capabilities of \aprhl.

We instantiate these ideas in the design and implementation of \Fuzzi, a
small imperative language for differentially private queries with
automatic and extensible typechecking.  Following a brief review of
technical background on differential privacy (\Cref{sec:review}) and a
high-level overview of \Fuzzi's design (\Cref{sec:overview}), we offer the
following contributions:
\begin{enumerate}
\item We propose a high-level {\em sensitivity logic} for tracking
differential privacy (\Cref{sec:senslogic}). This logic is expressive enough
to capture detailed
sensitivity properties for a simple imperative core language; its soundness
is established via a straightforward embedding into \aprhl.
\item We show how to connect manual proofs for privacy properties of
algorithmic building blocks to the sensitivity logic and develop proofs for
 several mechanisms that transform private datasets, plus a mechanism that
 aggregates
 privacy costs better than straightforward composition (\Cref{sec:extension}).

\item Using a prototype implementation of \Fuzzi (\Cref{sec:impl}), we
implement private machine learning algorithms from four different classes of
learning methods (discriminative models, ensemble models, generative models
and instance-based learning) and show that \Fuzzi's checker is able to
derive tight sensitivity bounds (\Cref{sec:eval}).
\end{enumerate}
\Cref{sec:limitations} discusses limitations of the current design.
\Cref{sec:related-works,sec:future} survey related and future work.
%


\section{Background}
\label{sec:review}

\subsection{Differential Privacy}

Differential privacy is an indistinguishability property of randomized programs
on neighboring input datasets. Informally, a function is differentially private
if removing or adding a single row in the input results in at most a small
change in the output distribution.
\begin{defn}[Neighboring Dataset]
Two datasets are neighbors if one can be transformed into the other by adding or removing
a single row of data.
\end{defn}\vspace{-0.4em}
Let $D$ and $D'$ be two neighboring datasets, and let $f$ be a randomized
program. The output of $f$ is a sample from some distribution parameterized by
the input datasets. We write $f(D)$ and $f(D')$ for these two distributions.\vspace{-0.4em}
\begin{defn}[$(\epsilon, \delta)$-Differential Privacy~\cite{Dwork:2006:CNS:2180286.2180305}]
\label{defn:dp}
The program $f$ is $(\epsilon, \delta)$-differentially private if, for any
set of possible outputs $E$, the probability of observing $E$ satisfies the relation
$\pr_{x \sim f(D)}[x \in E] \leq e^\epsilon \pr_{x \sim f(D')}[x \in E] + \delta$.
\end{defn}\vspace{-0.4em}
The parameters $\epsilon$ and $\delta$ quantify different aspects of the privacy
cost of a differentially private computation. Informally, the value of
$\epsilon$ measures the ability of an observer to distinguish whether $f$ was
run with $D$ or $D'$ after observing $E$ in the ``common case'', while $\delta$
serves as an upper bound on the probability that $f$ fails to provide the
privacy guarantee implied by $\epsilon$. The parameter $\epsilon$ is typically
taken to be a small constant (say, $1$), whereas $\delta$ must be set so that
$\delta \ll 1/n$, where $n$ is the number of dataset rows, in order for the
privacy guarantees to be non-trivial (otherwise an algorithm which outputs a
dataset row uniformly at random satisfies $(0,\delta)$-differential privacy).

\subsection{Sensitivity}
The notion of \textit{sensitivity} is crucial to differential privacy. Many
differentially private mechanisms release data by adding noise proportional to
the sensitivity of a private value. In \Fuzzi, the term ``sensitivity''
specifically refers to an upper bound on the \textit{distance} between the
values held by some variable between any two runs.

Distance may be calculated differently for values of different types. For
primitive values with type $\tyint$ and $\tyfloat$, distance is the magnitude of
the two values' difference. However, for arrays, there are two important
distance definitions for differential privacy: {\em database distance} and {\em
L1 distance}. The database distance measures the number of rows that need to be
added or removed in order to make two datasets indistinguishable up to
permutation; while the L1 distance measures the sum of element-wise distance
between vectors. To avoid confusion in later discussions, we refer to
arrays for which distance is intended to be measured as database distance
as \textit{bags} and arrays with L1 distance as \textit{vectors}. When we come
to defining the type system, we will write $\{\tau\}$ for the type of bags
holding values of type $\tau$ and $[\tau]$ for vectors of $\tau$
(\Cref{fig:syntax-core}). As an example, the two arrays $[1, 2, 5]$ and $[1, 3, 4]$
have vector distance $2$, but they have bag distance $4$, since we need to
remove elements $2$ and $5$ and add elements $3$ and $4$ to the first bag in
order to make it a permutation of the second one.%
\footnote{%
Database distance can actually be viewed as just L1 distance on a
different representation of datasets. The differential privacy literature
sometimes uses the ``histogram representation'' for datasets. For a universe of
possible elements $U$, the histogram representation maps each $x \in U$ to a
count of how many times $x$ appears, and the L1 distance of this representation
corresponds to the database distance. However, in order to keep \Fuzzi's
semantics minimal as a core language, we choose to represent datasets as arrays,
rather than maps from records to counts.}

Formally, we write $d_\tau$ for the
distance function at type $\tau$, with type $\tau \times \tau \rightarrow \mathbb{R}^+ \cup
\{\infty\}$---i.e., it maps
two values of type $\tau$ to a non-negative real number or infinity.\vspace{-0.4em}
\begin{defn}[Vector distance]
\label{defn:array-dist}
  If $a_1$ and $a_2$ are two vectors of the same length and their elements have
  type $\tau$, then the vector distance is defined as $d_{[\tau]}(a_{1}, a_{2})
  = \sum_{i = 0}^{L - 1} d_{\tau}(a_{1}[i], a_{2}[i])$ where $L$ is the
  length of both vectors.
Vectors of different lengths are assigned the distance $\infty$.
\end{defn}\vspace{-0.4em}
\begin{defn}[Bag distance]
\label{defn:bag-dist}
  Let $a_{1}$ and $a_{2}$ be two bags; their distance is defined as
  $d_{\{\tau\}}(a_{1}, a_{2}) = |a_{1} \setminus a_{2}| + |a_{2} \setminus
  a_{1}|$.
\end{defn}\vspace{-0.4em}
The backslash operator is multiset difference. Note that bag distance
(unlike vector distance) is meaningful for bags of different sizes. This is
the same database distance introduced earlier.


\subsection{Laplace Mechanism}
The Laplace mechanism is an essential tool for releasing private data with
bounded sensitivities~\cite{Dwork:2006:CNS:2180286.2180305,
  Dwork:2014:AFD:2693052.2693053}.  \Fuzzi provides access to the Laplace
mechanism through the {\em sampling assignment} command
$x = \laplace{b}{e}$, which adds noise to the value of $e$ and assigns that
value to the variable $x$, with the constant literal $b$ determining the
scale of the noise. Adding noise scaled with $b$ to a value with sensitivity
$s$ incurs a privacy cost of $(s/b, 0)$.  \Fuzzi's type system will
statically keep track of each usage of the Laplace mechanism and report an
upper bound of the total privacy cost as part of a program's type.


\section{Overview}
\label{sec:overview}
\subsection{Core Language}
\label{sec:fuzzi-core-language}
\begin{figure}[t]
\centering
\begin{minipage}{0.45\textwidth}
\centering
\begin{align*}
  \sigma \;\;:=\;\; &\tyint \widegsep \tyfloat \widegsep \tybool\\
  \tau   \;\;:=\;\; &\sigma \widegsep [\tau] \widegsep \{\tau\} \\
  \mathtt{op} \;\;:=\;\; & + \widegsep - \widegsep \cdot \widegsep /
  \widegsep \mathtt{\&\!\&} \widegsep\, \mathtt{||} \widegsep \\
   &< \widegsep \leq
  \widegsep > \widegsep \geq \widegsep \neg\\
    e, i   \;\;:=\;\; &x \widegsep \mathtt{lit} \widegsep e\, \mathtt{op}\, e
  \widegsep e[i] \widegsep e.length
\end{align*}
\end{minipage}
\begin{minipage}{0.45\textwidth}
  \centering
  \begin{align*}
  c \;\;:=\;\; &x = e
    \widegsep x[i] = e
    \widegsep
    x.length = e
    \widegsep \\ & x = \laplace{b}{e} \widegsep\\
     &\cmdif{e}{c}{c}
    \widegsep \\
    &\cmdwhile{e}{c} \widegsep
    c;c
  \end{align*}
\end{minipage}
\caption{Core Language Syntax}\vspace{-1.5em}
\label{fig:syntax-core}
\end{figure}
The core of \Fuzzi is a simple imperative programming language with while loops,
conditionals, and assignments (\Cref{fig:syntax-core}). It has just a few
built-in data types: reals, integers, booleans, and arrays (whose elements
can be reals, integers, booleans, or nested arrays).  Programs can modify the
length of arrays through
assignments of the form $x.\mathit{length} = e$. When the value of $e$ is less
than the current length of $x$, the array is truncated; and when the value of
$e$ is greater than the length of $x$, the array is padded with default values
that are of the same data type as elements in $x$. If $e$ evaluates to a
negative number, the length assignment diverges.

One slightly unusual feature of \Fuzzi is that {\em all} assignments are copying
assignments, including assignments to array variables. For
example, if $x$ holds an array value, then the assignment $y = x$ sets $y$ to a
copy of $x$, instead of making both $x$ and $y$ point to the same underlying
array.
We make this choice to avoid reasoning about sharing, which we
consider as out of scope for this work.

The special command $x = \laplace{b}{e}$ performs probabilistic
assignment to $x$ by sampling from a Laplace distribution centered at the value
of $e$, with width equal to the value of $b$ (which must be a
real-valued literal\ifpostreview\bcp{We might want to comment on how
  serious a restriction this is.}\fi).

\subsection{Type System}
From now on, we refer to the sensitivity logic as a {\em type system} to
emphasize
that it is a specialized and automated layer that tracks sensitivity and
privacy cost as types.
\ifpostreview\bcp{I'd prefer to call it a type system all the way
  through (starting in the abstract).  Also, I'm a little nervous about
  calling it a ``Sensitivity Type
System'' (or Logic), when it is critically tracking {\em both} sensitivity and
privacy loss.  Not sure what would be a better name, though.}\fi
The main data structure manipulated by \Fuzzi's type
system is {\em typing contexts}---maps from program variables to
sensitivities, represented as non-negative reals extended with infinity.
$$
  \phi   \;:=\; r \in \mathbb{R}^{\geq 0} \cup \{\infty\}\qquad\qquad
  \tau   \;:=\; \sigma \widegsep [\tau] \widegsep \{\tau\} \qquad\qquad
  \Gamma \;:=\;  \varnothing \widegsep x:_\phi \tau, \Gamma
$$
A typing context $\Gamma$ should be interpreted as a {\em relation} between
two different run-time stores---intuitively, the stores
arising from
two ``neighboring'' executions starting with neighboring
initial states.  For each variable $x$ with sensitivity $\Gamma(x)$, the
values in the two stores must be no more than $\Gamma(x)$ apart.

Typing judgements for commands have the form
$\tytriple{\Gamma}{c}{\Gamma', (\epsilon, \delta)}$, meaning that, if the
distance of the values in two run-time stores are described by the
initial typing context $\Gamma$, then executing $c$ will either both diverge or
else both terminate with two final stores described by $\Gamma'$, along the way
incurring a privacy cost of $(\epsilon, \delta)$.

For example, the typing rule for commands of the
form $x = e$ computes the sensitivity of $e$ using sensitivities of its free
variables, and maps $x$ to this sensitivity of $e$ in the output context.
\begin{mathparsmall}
\inferrule[Assign]
            {\hastype{\Gamma}{e}{s}{\tau}}
            {\tytriple{\Gamma}{x = e}{\Gamma[x \mapsto s], (0, 0)}}
\end{mathparsmall}%
The typechecker also computes the privacy cost
$(\epsilon, \delta)$ incurred by the analyzed command. In the case of
assignment, no privacy cost is incurred, so the output from the typechecker
after processing $x = e$ is the updated typing context $\Gamma[x\mapsto s]$ and
the pair of privacy costs $(0, 0)$, where $s$ is the derived sensitivity of $e$
under $\Gamma$.

A more interesting typing rule is the one for sequence commands of the form
$c_1; c_2$.  This rule chains together the typing judgements for each of the
commands, using the output context $\Gamma_i$ from analyzing $c_i$ as the input
context for processing the next command $c_{i+1}$. The privacy cost incurred by
the whole program is the sum of privacy costs $(\epsilon_i, \delta_i)$ incurred
by each $c_i$, following the ``simple composition theorem'' for differential
privacy~\cite{Dwork:2006:CNS:2180286.2180305}.
\begin{mathparsmall}
  \inferrule[Sequence]
            {\tytriple{\Gamma_1}{c_1}{\Gamma_2, (\epsilon_1, \delta_1)} \quad
             \tytriple{\Gamma_2}{c_2}{\Gamma_3, (\epsilon_2, \delta_2)}
            }
            {\tytriple{\Gamma_1}{c_1; c_2}{\Gamma_3, (\epsilon_1 +
                \epsilon_2, \delta_1 + \delta_2)}}
\end{mathparsmall}%
There are also core typing rules for simple loops and conditionals that do
not branch on sensitive data; we will see these in \Cref{sec:typing-fuzzi}.

\subsection{Typing Differentially Private Mechanisms}
The privacy properties of interesting differentially private mechanisms are
generally too subtle to be tracked by the core type system.  In \Fuzzi, such
mechanisms can be defined as extensions and equipped with specialized typing
rules whose soundness is proved manually.  Such proofs typically involve
reasoning about relational properties for distributions, as well as
aggregating privacy costs. The program logic \aprhl is tailored to tackle
both problems, making it a good choice for rigorous manual proofs of
differential privacy.

An \aprhl judgement has the form
$\aprhlstmt{c_1}{(\epsilon, \delta)}{c_2}{\Phi}{\Psi}$, where $c_1$ and $c_2$
are two commands to be related, $\Phi$ and $\Psi$ are relational assertions that
state pre- and post-conditions relating the program states before and after
executing $c_1$ and $c_2$. A sound \aprhl judgement can be roughly interpreted
as: if (1) some pair of program states satisfy the pre-condition $\Phi$, and (2)
executing $c_1$ in the first state terminates iff executing $c_2$ in the second
state does, then the pair of states after executing $c_1$ and $c_2$ will satisfy
the post-condition $\Psi$, incurring privacy cost $(\epsilon, \delta)$. To
express differential privacy as a post-condition, we can simply state
$\mathit{out}\ag{1} = \mathit{out}\ag{2}$ for the output of the programs.

Given a typing context $\Gamma$, we can interpret $\Gamma$ as a relation on
program states.  Writing $\ag{1}$ and $\ag{2}$ after variables to refer to
their values in the first or second execution, we translate
$x :_\sigma \tau \in \Gamma$ to the assertion
$d_\tau(x\ag{1}, x\ag{2}) \leq \sigma$; the conjunction of all these
pointwise distance assertions forms an \aprhl assertion that corresponds to
$\Gamma$.

Conversely, to connect manual proofs in \aprhl with the type system, we
phrase their premises and conclusions as typing judgements.  Indeed, we use
\aprhl not only for extensions but also for the soundness proofs of the core
typing rules. As a result, the privacy proofs implicitly constructed by the
typechecker are combinations of \aprhl proof objects, some of them generated by
the typechecker, others written manually by experts.

\subsection{Example}
To give a first taste of \Fuzzi's differential privacy typechecking process,
we present a simple program that computes a private
approximation for the average
income of a
group through private estimations of the
group's size and sum. First, we estimate the group's size with the Laplace mechanism.
\begin{lstlisting}[xleftmargin=1em]
size = &$\laplace{1.0}{\mathtt{group.length}}$&;
\end{lstlisting}
Assuming
that \lstinline|group| is a dataset with sensitivity $1$, \Fuzzi's
typechecker deduces its size is $1$-sensitive.  Applying
the Laplace mechanism then incurs a $(1.0, 0)$-privacy cost.

Next, we sum the group's incomes using the mechanism \lstinline|bsum|, pronounced ``bag sum'', which clips each
income value so that its magnitude is at most a given constant (here 1000).
\begin{lstlisting}[xleftmargin=1em]
bsum(group, sum, i, temp, 1000);
\end{lstlisting}
This clipping step ensures the sum does not vary too much on
neighboring datasets. Without clipping, a single outlier could sway the sum
substantially, revealing the outlier's existence, and violating
differential privacy. The parameters \lstinline{sum}, \lstinline{i}, and
\lstinline{temp} specify the names of variables that \lstinline{bsum} can
use for internal purposes. It is the programmer's responsibility to
make sure they do not clash with variables used elsewhere in the
program.  (It should not be hard to fix this infelicity by
making extensions themselves deal with fresh variable generation, but doing
so will introduce a few additional technicalities so we leave it for future work.)
\ifpostreview
\bcp{This is a wart that we should really clean up: It should be
  perfectly easy to make the extension
  mechanism generate its own variable names!}
\fi

The command \lstinline|bsum(...)| refers to an extension that expands to a
sequence of plain core-language commands implementing summing up a bag of
numbers with clipping:
\begin{lstlisting}
extension bsum(in, out, idx, t_in, bound) {
  idx = 0;
  out = 0.0;
  while idx < in.length do
    t_in = in[idx];
    if t_in < -1.0 * bound then
      out = out - bound;
    else
      if t_in > bound then
        out = out + bound;
      else
        out = out + t_in;
      end
    end;
    idx = idx + 1;
  end
};
\end{lstlisting}
This specifies the name of the extension, the names of its
parameters (which range over \Fuzzi expressions and commands), and how
it expands to core \Fuzzi
commands.  During typechecking, extension applications are replaced by their
expansions, with extension variables substituted by the snippets of \Fuzzi
syntax provided as parameters.
\ifpostreview
(The expansion process also leaves hints for the
typechecker around the expanded \Fuzzi core commands. The hints contain the
name of the extension,
and the list of parameters that the extension was invoked
with.\bcp{This is a bit mysterious.}\fi)

The typing rule for \lstinline{bsum} is:
\begin{mathparsmall}
\inferrule{\literal\,\mathit{bound} \quad \mathit{bound} \geq 0\\\\
           \phi = \Gamma(\mathit{in}) \quad \Gamma_\mathit{out}
           = \Gamma[\mathit{out} \mapsto \phi\cdot \mathit{bound}][i, \mathit{t_{in}} \mapsto \infty]
          }
          {\tytriple
            {\Gamma}
            {\bm{\mathtt{bsum}} (\mathit{in}, \mathit{out}, i,
              t_\mathit{in}, \mathit{bound})}
            {\Gamma_\mathit{out}, (0, 0)}
          }
\end{mathparsmall}%
It requires the last parameter \lstinline|bound| to be a
non-negative literal value---non-negative because \lstinline|bound| specifies
the clipping magnitude, and literal because the sensitivity of the output
variable depends on \lstinline|bound|. The inference rule updates the
sensitivity of the output sum variable to the product of \lstinline|bound| and
the sensitivity of $\Gamma(in)$. Intuitively, since up to $\phi$ elements may be
added or removed from $\mathit{in}$, and each can contribute up a value with
magnitude up to $\mathit{bound}$ toward the sum, the sum value itself will vary
by at most $\phi\cdot\mathit{bound}$. This intuition can be made rigorous,
as we show in
\ifextended
\Cref{ap:bag-sum-proof}.
\else
Appendix D.4 of the extended version.
\fi

The Haskell implementation of the \Fuzzi typechecker is likewise extended
with a piece of code implementing the typing rule as a function that
transforms an input typing context to an output typing context and privacy
costs.

\ifpostreview\bcp{I think this paragraph can be deleted.}
In general, the \Fuzzi syntax allows extension invocation in addition to core
commands:
\begin{align*}
c \;\;:=\;\; & \dots \widegsep \mathit{ext}(\mathit{param}_1, \dots, \mathit{param}_n) \\
\mathit{param} \;\;:=\;\; &e \widegsep c
\end{align*}
\fi

Continuing the example, we next compute differentially private estimates of the
clipped sums and calculate the group's average income using the size and sum
estimates:
\begin{lstlisting}[xleftmargin=1em]
noised_sum = &$\laplace{1000.0}{\mathtt{sum}}$&;
avg = noised_sum / size;
\end{lstlisting}
The \lstinline|sum| variable is $1000$-sensitive, so
releasing \lstinline|noised_sum| incurs another $(1.0, 0)$-privacy cost. The
typechecker
reports an aggregate privacy cost of $(2.0, 0)$.

\section{Sensitivity Type System}
\label{sec:senslogic}

\subsection{Notation and definitions}
\label{sec:def}
Throughout the paper, we will use the operator $\denote{\cdot}$ to denote the
semantic function for commands and expressions in \Fuzzi. We use the notation
$\distr S$ to denote sub-distributions over values in $S$. We will use
the letter
$M, N$ to stand for program states, which are finite maps from variable names to
the values they hold, and use the letter $\mathbb{M}$ to stand for the set of
all program states.

The semantics of a \Fuzzi program $c$ is a function from program states to
sub-distributions over program states $\denote{c} : \mathbb{M} \rightarrow
\distr \mathbb{M}$. Each type in \Fuzzi is associated with a set of values: $\tyint$ with the set
$\mathbb{Z}$, $\tyfloat$ with the set $\mathbb{R}$, and $[\tau]$ and $\{\tau\}$
with the set of finite sequences of values associated with $\tau$. The meaning
of a \Fuzzi expression $e$ with type $\tau$ is a partial function from program
states to associated values of that type $\denote{e}
: \mathbb{M} \rightharpoonup \tau$. Partiality of expressions stems from invalid
operations such as arithmetic between incompatible values, and out-of-bound
indexing. The complete definition of \Fuzzi semantics can be found
in
\ifextended
\Cref{ap:semantics}.
\else
Appendix A of the extended version.
\fi
Recall from Section~\ref{sec:fuzzi-core-language} that \Fuzzi assignments are
copy-assignments for all values, including vectors and bags.

Fuzzi's semantics directly follows from the work
of \citet{Barthe:2016:PDP:2933575.2934554}. It is worth noting that the
original \aprhl developed by \citet{Barthe:2016:PDP:2933575.2934554} only
reasons with discretized Laplace distributions, and \Fuzzi shares this
restriction in its semantic model. A later model based on category theory
enhances \aprhl's proof rules for continuous
distributions~\cite{SATO2016277}. However, the underlying proof method of this
model is not compatible with Fuzzi's development, and only recently
have new abstractions been proposed to generalize the original \aprhl proof
methods to continuous distributions~\cite{DBLP:journals/corr/abs-1710-09010}.

\label{sec:typing-fuzzi}
\Fuzzi's typing context $\Gamma$ tracks both the data type and the sensitivity of
variables. Typechecking involves checking data types as well as computing
sensitivities. We refer to data typechecking as \textit{shape checking} and
sensitivity computations as \textit{sensitivity checking}. We will elide details
of shape checking since it is the standard typechecking that rules out operations
between values of incompatible types. To emphasize \textit{sensitivity checking}
in \Fuzzi, and to reduce clutter in syntax, we will write $\Gamma(x)$ for the
sensitivity of the variable $x$ under the typing context $\Gamma$, and we write
$\Gamma[x \mapsto s]$ for a typing context which updates variable $x$'s
sensitivity to $s$, but does not alter its data type. We overload this syntax
when we update a set of variables $xs$ to the same sensitivity
$\Gamma[xs \mapsto s]$. We also overload the notation $\Gamma(e)$ to denote the
derived sensitivity of expression $e$ under typing context $\Gamma$. When we
need to refer to the data type of expression $e$, we will use the full typing
judgment of an expression $\hastype{\Gamma}{e}{\phi}{\tau}$, which we pronounce
``expression $e$ has sensitivity $\phi$ and type $\tau$ under context
$\Gamma$.''

We use the notation $\shape(\Gamma)$ to extract the shape checking context from
a typing context $\Gamma$, dropping all sensitivity annotations.

\subsection{Typing Expressions}
\label{subsec:typing-expressions}

In order to compute sensitivity updates throughout sequences of commands, the
type system needs to first compute sensitivities for expressions used within
each command.
We discuss the typing rules for addition and multiplication here as
examples. Intuitively, if the values of two expressions $e_l$ and $e_r$ can
each vary by $1$, then their sum can vary by at most $2$ (the sum of their
individual sensitivities) by the triangle inequality; and if the value of
$e$ can vary by at most $1$, then multiplying $e$ by a literal constant $k$
results in a value that can vary by at most $k$. The rules \textsc{Plus},
and \textsc{Mult-L-Constant} in~\Cref{fig:arith} capture these cases.

\begin{figure}[t]
  \begin{mathparsmall}
  \inferrule[Plus]
            {\hastype{\Gamma}{e_l}{s}{\tau} \quad
              \hastype{\Gamma}{e_r}{t}{\tau} \\\\
            \tau = \tyint \lor \tau = \tyfloat}
            {\hastype{\Gamma}{e_l+e_r}{s + t}{\tau}}

  \inferrule[Mult-L-Constant]
            {\literal\,k\quad
              \hastype{\Gamma}{e_r}{t}{\tau} \\\\
              \tau = \tyint \lor \tau = \tyfloat}
            {\hastype{\Gamma}{k \cdot e_r}{k \cdot t}{\tau}}


  \inferrule[Mult]
            {\hastype{\Gamma}{e_l}{s}{\tau} \quad
              \hastype{\Gamma}{e_r}{t}{\tau} \\\\
              \tau = \tyint \lor \tau = \tyfloat
            }
            {\hastype{\Gamma}{e_l \cdot e_r}{\sapprox(s, t)}{\tau}}
  \end{mathparsmall}
  \caption{Arithmetic Expression Typing Rules}
  \label{fig:arith}
\end{figure}

There are also expressions for which we cannot give precise sensitivity
bounds. For instance, if one of the operands for a multiplication between
$e_l$ and $e_r$ is
sensitive, then, without knowing the exact value of the other operand,
we cannot \textit{a priori} know how much the value of entire product can
change. This case is captured by the \textsc{Mult} rule, where the function
$\sapprox$ is defined by the equations
\begin{center}
\begin{tabular}{RLCL}
  \sapprox & (0,\hspace{0.5em} 0) & = & 0\\
  \sapprox & (s_1, s_2)\quad \mathrm{if}\, s_1 + s_2 > 0 & = & \infty
\end{tabular}
\end{center}
which conservatively take the sensitivity to be $\infty$ if at least one side of
the expression is sensitive.

\Fuzzi provides bag and vector index
operations, and \Fuzzi's typechecker supports sensitivity checking for
lookup
expressions on bags and vectors. These typing rules use the definition of bag
and vector distances to establish sound upper bounds of sensitivities on lookup
expressions.
\begin{mathparsmall}
\inferrule[Vector-Index]
          {\hastype{\Gamma}{e}{\phi}{[\tau]} \quad \hastype{\Gamma}{i}{0}{\tyint} \\\\ \phi < \infty}
          {\hastype{\Gamma}{e[i]}{\phi}{\tau}}

\inferrule[Bag-Index]
          {\hastype{\Gamma}{e}{0}{\{\tau\}} \quad \hastype{\Gamma}{i}{0}{\tyint}}
          {\hastype{\Gamma}{e[i]}{\infty}{\tau}}
\end{mathparsmall}%
\noindent The \textsc{Vector-Index} rule applies when the lookup expression is
$0$-sensitive. A $0$-sensitive index value must be the same across two
 executions, and the distance between two values at the same position must be
 bounded by the overall sensitivity of the vector itself according
 to \Cref{defn:array-dist}. As an example, given two vectors $[1, 2, 3]$ and
 $[1, 2, 4]$, if we indexed both vectors at the last position, then the
 resulting values $3$ and $4$ are at distance $1$ apart, which is bounded by the
 distance between the original vectors. The premise $\phi < \infty$ is necessary
 to ensure the indexed arrays have the same length in both executions, so that
 the lookup expression terminates in one execution if and only if it terminates
 in the other. We refer to this property as \textit{co-termination}.  It is
 discussed in \Cref{sec:fuzzi-aprhl}.

It may be surprising that \Fuzzi's typechecker only accepts bag lookup
operations
over non-sensitive bags. This is due to requirement of
\textit{co-termination} and the fact that bags with non-zero sensitivities
may have different lengths in neighboring runs. To see why the
bag lookup expression has sensitivity $\infty$, consider two bags $[1, 100, 2]$
and $[1, 2, 100]$; these are at distance 0, but if we access both bags with
index 1, the resulting values $100$ and $2$ are distance $98$ apart.

\subsection{Typing Commands}

The typing judgments for commands has the form $\tytriple{\Gamma}{c}{\Gamma',
(\epsilon, \delta)}$. We can think of these judgments as a
Hoare-triple---$\Gamma$ is a pre-condition of the program $c$, and $\Gamma'$ is
a post-condition for $c$---annotated with $(\epsilon, \delta)$, the total
privacy cost of running $c$.

\begin{figure}[t]
\begin{mathparsmall}
  \inferrule[Assign]
            {\hastype{\Gamma}{e}{\phi'}{\tau}}
            {\tytriple{\Gamma}{x = e}{\Gamma[x \mapsto \phi'], (0, 0)}}

  \inferrule[Assign-Vector-Index]
            {\hastype{\Gamma}{x}{\phi}{[\tau]}
             \quad \hastype{\Gamma}{e}{\sigma}{\tau} \\\\
             \phi < \infty \quad
             \hastype{\Gamma}{i}{0}{\tyint}}
            {\tytriple{\Gamma}{x[i] = e}{\Gamma[x \mapsto \phi + \sigma], (0, 0)}}


  \inferrule[Assign-Vector-Length]
            {\hastype{\Gamma}{x}{\phi}{[\tau]}
             \quad \hastype{\Gamma}{e}{0}{\tyint}}
            {\tytriple{\Gamma}{x.length = e}{\Gamma, (0, 0)}}

  \inferrule[Assign-Bag-Length]
            {\hastype{\Gamma}{x}{\phi}{\{\tau\}}
             \quad \hastype{\Gamma}{e}{0}{\tyint}}
            {\tytriple{\Gamma}{x.length = e}{\Gamma[x\mapsto \infty],(0, 0)}}

  \inferrule[Laplace]
            {\hastype{\Gamma, x:_\phi \tyfloat}{e}{\phi'}{\tyfloat}}
            {\tytriple{\Gamma, x:_\phi \tyfloat}{x = \laplace{b}{e}}
             {\Gamma, x:_0 \tyfloat, (\phi'/b, 0)}}

  \inferrule[Skip]
            { }
            {\tytriple{\Gamma}{\cmdskip}{\Gamma, (0, 0)}}

  \inferrule[Sequence]
            {\tytriple{\Gamma_1}{c_1}{\Gamma_2, (\epsilon_1, \delta_1)} \quad
             \tytriple{\Gamma_2}{c_2}{\Gamma_3, (\epsilon_2, \delta_2)}
            }
            {\tytriple{\Gamma_1}{c_1; c_2}{\Gamma_3, (\epsilon_1 + \epsilon_2, \delta_1 + \delta_2)}}

  \inferrule[If]
            {\tytriple{\Gamma}{c_t}{\Gamma_t, (\epsilon_t, \delta_t)} \quad
             \tytriple{\Gamma}{c_f}{\Gamma_f, (\epsilon_f, \delta_f)} \quad
             \hastype{\Gamma}{e}{0}{\tybool} \\\\
             \epsilon' = \max(\epsilon_t, \epsilon_f) \quad \delta' = \max(\delta_t, \delta_f)
            }
            { \tytriple
              {\Gamma}
              {\cmdif{e}{c_t}{c_f}}
              {\max(\Gamma_t, \Gamma_f), (\epsilon', \delta')}}

  \inferrule[While]
            {\tytriple{\Gamma}{c}{\Gamma, (0, 0)} \quad
             \hastype{\Gamma}{e}{0}{\tybool}
            }
            { \tytriple
              {\Gamma}
              {\cmdwhile{e}{c}}
              {\Gamma, (0, 0)}}
\end{mathparsmall}
\caption{Core Typing Rules \ifpostreview\bcp{The way epsilon and delta are
    formatted makes the rules a bit heavy.  Maybe we can experiment with
    making them subscripts or something...}\fi}
\label{fig:structural-rules1}
\end{figure}

There are three forms of assignment in \Fuzzi: (1) direct assignment to variables,
(2) indexed assignment to vectors and bags, and (3) length assignments to vectors
and bags.  There is a separate typing rule for each form of assignment
(\Cref{fig:structural-rules1}). The
\textsc{Assign} rule updates the LHS
variable's sensitivity to the derived sensitivity of the RHS
expression. The \textsc{Assign-Vector-Index} rule adds the derived sensitivity
of RHS expression to a vector's sensitivity provided the index itself is
non-sensitive. (For example, consider the vectors $xs\ag{1} = [1, 2, 3]$ and
$xs\ag{2} = [1, 2, 4]$. If we perform the assignment $xs[1] = e$ where
$e\ag{1} = 1$ and $e\ag{2} = 10$, then the two vectors become $[1, 1, 3]$ and
$[1, 10, 4]$, increasing the distance between them by $9$. We require finite
sensitivity of the vector variable on the left-hand-side to ensure  co-termination---only vectors with finite sensitivities must
have the same length.)

We have separate rules for vector and bag length updates. In
the \textsc{Assign-Vector-Length} rule, since the RHS length expression is
non-sensitive, the two vectors will be truncated or padded to the same
length. In the case of truncation, the distance between the two vectors will
decrease or remain the same; on the other hand, if both vectors were padded,
since the pad will be the same, they will not introduce any additional
distance. Thus, the LHS vector variable's sensitivity remains the same.

In the \textsc{Assign-Bag-Length} rule, it may be surprising that updating the
lengths bags---even with a $0$-sensitive new length---can result in
$\infty$-sensitivity for the bag. Consider two subsequences of the same length
$X$ and $Y$, let $L$ be their identical length. We can choose $X$ and $Y$ such
that their bag distance is $2L$. Now, the two bags $XY$ and $YX$ have distance
$0$ since they contain the same elements, but if we truncated both bags to
length $L$, then their distance grows to $2L$.  The
\textsc{Assign-Bag-Length} rule must account for this worst case scenario by
setting the sensitivity of $x$ to $\infty$.

The core typing rules for operations that involve bags are rather
restrictive. We will see in \Cref{sec:extension} how to operate more
flexibly over bags using extensions.

The \textsc{Laplace} rule computes the privacy cost of releasing a single
sensitive real value. The \text{Laplace} rule sets the sensitivity of $x$ to $0$
after noise is added, which may seem surprising since $x$'s value is
randomized. Intuitively, the $0$-sensitivity expresses that $x$'s value is now
public information and can be used in the clear. We justify $0$-sensitivity as
an upper bound on the distance between $x\ag{1}$ and $x\ag{2}$
in
\ifextended
\Cref{ap:aprhl}.
\else
Appendix B of the extended version.
\fi
However, readers do not need to look there to understand the
\Fuzzi's type sytem design.
%

The no-op command \lstinline|skip| does not alter the program state at all:
given any pre-condition $\Gamma$, we can expect the same condition to hold
after \lstinline|skip|. Also since \lstinline|skip| does not release any private
data, it has a privacy cost of $0$. This is described by the \textsc{Skip} rule.

As described in \Cref{sec:overview}, the \textsc{Sequence} rule chains together the intermediate $\Gamma$s
for two commands $c_1$ and $c_2$ and adds up the individual privacy costs for
each command.

The control flow command \lstinline|if| may modify the same variable with
different RHS expressions in each branch; if we allowed expressions with
arbitrary sensitivities as the branch condition, we would not be able to
derive valid
sensitivities for modified variables due to different execution paths. Consider
the following example, where $e$ is a sensitive boolean expression:
\lstinline|if e then x = y else x = z end|. In one execution, control flow may
follow the true branch, assigning $y$ to $x$, while, on the other execution, control
flow follows the false branch, assigning $z$ to $x$. Since the typing context
$\Gamma$ does not provide any information on the distance between $y$ and $z$,
we cannot derive a useful upper bound on $|x\ag{1} - x\ag{2}|$ after
the \lstinline|if| statement.

On the other hand, if the branch condition is a non-sensitive boolean, then we
know that control will go through the same branch in both executions. In this
case, we can take the pointwise maximum of the sensitivities from the
post-condition of both branches to derive a sound sensitivity for variables
modified by the \lstinline|if| statement. Similarly, the privacy cost of the
entire \lstinline|if| statement is bounded by the maximum of the two branches'
privacy costs.

The core typing rule for while loops require $\Gamma$ to be a loop invariant of
the loop body $c$, the loop guard $e$ a non-sensitive boolean value under
$\Gamma$, and the loop body $c$ incur no privacy cost. Had we allowed $e$ be a
sensitive boolean, then the while loop may diverge in one execution but
terminate in the other. In order to ensure that the two executions co-terminate,
we must force the values of $e\ag{1}$ and $e\ag{2}$ before each iteration. We
achieve this by checking that the invariant $\Gamma$ of the loop induces a $0$
sensitivity on the loop guard $e$.

The Simple Composition Theorem~\cite{Dwork:2006:CNS:2180286.2180305} implies
that the total privacy cost of a while loop is bounded by the sum of individual
privacy cost from each iterations. Even though we can ensure the two executions
of while loops co-terminate, we cannot always statically tell for how many
iterations both loops will run. In order to ensure the soundness of the total
privacy cost estimation, we conservatively forbid loop bodies from using any
commands that will increase $\epsilon$ or~$\delta$.

These core typing rules place rather heavy restrictions on \Fuzzi programs
operating over vectors and bags, or programs using conditionals and loops; for
example, the core rules grossly overestimate sensitivities for vectors and bags
and forbid sensitive booleans in branch and loop conditions. The core rules are
designed for chaining together blocks of differentially private mechanisms, and
these typing rules are often not enough to typecheck the implementation of
interesting differentially private algorithms.
We will see how to teach \Fuzzi's typechecker to derive more precise
sensitivities for differentially private mechanisms involving all these
constructs in
\Cref{sec:extension}.

\subsection{Soundness}
\label{sec:fuzzi-aprhl}
There has been a rich line of work on developing type systems and language
safety properties using foundational
methods~\cite[etc.]{Appel:2001:IMR:504709.504712,Appel:2007:VMM:1190216.1190235,10.1007/11693024/6,Jung:2017:RSF:3177123.3158154,Frumin:2018:RMR:3209108.3209174}. The
foundational approach develops the typing rules of a language as theorems in an
expressive logic. Type systems developed using foundational methods benefit from
the soundness of the underlying logic: if a typing rule is proven true as a
theorem, then adding new rules as theorems to the type system will not break
validity of existing rules. Most importantly, the foundational approach
allows \Fuzzi to mix typing rules for an automated typechecker with specialized
typing rules extracted from manual proofs of differential privacy.

We choose \aprhl~\cite{Barthe:2016:PDP:2933575.2934554} as the foundational
logic to build \Fuzzi's type system upon. The \aprhl logic extends Floyd-Hoare
Logic\,\cite{Hoare:1969:ABC:363235.363259} with relational assertions,
reasoning of probabilistic commands, and differential privacy cost
accounting. An \aprhl judgment has the form\ifpostreview\bcp{Could we format it so that it
looks closer to our typing judgments?}\fi{}
$\aprhlstmt{c_1}{(\epsilon, \delta)}{c_2}{\Psi}{\Phi}$.
The metavariables $c_1$ and $c_2$ stand for two programs related by this
judgment, the annotations $\epsilon$ and $\delta$ stand for the quantitative
``cost'' of establishing this relation, and $\Psi$ and $\Phi$ are both
assertions over pairs of program states, standing for the pre-condition and the
post-condition of this judgment respectively.

We have seen two kinds of rules in the \Fuzzi type system so far: expression
typing rules and core typing rules for commands. Although both are presented in
the form of inference rules, these two typing judgments are very different in
nature. The expression typing rules are defined as an inductive relation, while
the typing rules for commands are theorems to be proven. This choice is
motivated more by practicality and less by theory---foundational proofs are more
difficult to work with than inductive relations, since we do not plan on mixing
the typing rules for expressions with manual proofs, there is no need to use the
foundational methods for expressions.

Because expression typing rules of the form $\hastype{\Gamma}{e}{\phi}{\tau}$
are instances of an inductive relation, we need to prove a few soundness
properties that will make these expression typing rules useful in the
development of command typing rule proofs. In particular, we care about
soundness with respect to sensitivity and co-termination. We elide proofs by
straightforward induction.
\begin{lem}[Expression Sensitivity Sound]
\label{lem:expr-type-sens-sound}
Given $\hastype{\Gamma}{e}{\phi}{\tau}$ and two program states $M_1$ and
$M_2$ related by $\Gamma$, if $\denote{e}M_1 = v_1$ and $\denote{e}M_2 = v_2$,
then $d_\tau(v_1, v_2) \leq \phi$.
\end{lem}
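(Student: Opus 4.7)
The plan is to proceed by structural induction on the derivation of the typing judgment $\hastype{\Gamma}{e}{\phi}{\tau}$, where the induction hypothesis is exactly the statement of the lemma applied to any strict subderivation. For each rule, assuming both $\denote{e}M_1 = v_1$ and $\denote{e}M_2 = v_2$ are defined, I would verify that $d_\tau(v_1, v_2) \leq \phi$ under the chosen rule's assignment of $\phi$.

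The base cases are immediate. For the variable rule, $\phi = \Gamma(x)$ and the bound is precisely what it means for $M_1$ and $M_2$ to be related by $\Gamma$. For literals, evaluation yields the same value in every state, so the distance is $0$. The inductive arithmetic cases reduce to standard real-valued inequalities: \textsc{Plus} follows from the triangle inequality applied to the inductive hypotheses on $e_l$ and $e_r$, giving $|v_1 - v_2| \leq s + t$; \textsc{Mult-L-Constant} scales a single inductive hypothesis by the literal $k$, using $|k \cdot x - k \cdot y| = |k| \cdot |x - y|$; and \textsc{Mult} is vacuous unless both operands are $0$-sensitive, in which case the inductive hypothesis forces both operands to agree in $M_1$ and $M_2$, so their product agrees as well and the $\sapprox(0,0) = 0$ bound holds. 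Analogous reasoning handles the remaining (elided) operators such as subtraction, division on non-sensitive operands, comparisons, and boolean connectives.

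For the indexing rules, the \textsc{Vector-Index} case uses \Cref{defn:array-dist}: since the index expression has sensitivity $0$, the inductive hypothesis forces $i$ to evaluate to the same integer $n$ in both states, and since $\phi < \infty$ the two vectors $a_1, a_2$ have the same length, so the pointwise sum in \Cref{defn:array-dist} is a finite nonnegative quantity bounded by $\phi$, each individual summand $d_\tau(a_1[n], a_2[n])$ is bounded by that total, which is the desired bound on the lookup value. The \textsc{Bag-Index} rule trivially assigns $\phi = \infty$, so the bound holds vacuously once both lookups are defined.

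The only subtlety worth flagging is partiality of $\denote{e}$: the lemma's conclusion is conditional on $v_1$ and $v_2$ both existing, so the induction never has to argue that one execution terminates iff the other does. That global \emph{co-termination} obligation is what drives the side conditions such as $\phi < \infty$ in \textsc{Vector-Index} and the $0$-sensitivity of indices and guards in the command rules, but at the expression level it does not enter the proof, which keeps each inductive step purely local. I expect the only mildly tedious part to be exhaustively covering all of the operators in $\mathtt{op}$, but no case should require more than the triangle inequality or a direct appeal to the definitions of $d_\tau$.
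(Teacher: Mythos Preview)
Your proposal is correct and matches the paper's approach exactly: the paper simply states ``We elide proofs by straightforward induction'' immediately after the lemma, and you have filled in precisely that induction on the typing derivation, with each case handled soundly. Your observation about partiality being sidestepped by the hypothesis that both $v_1$ and $v_2$ exist is accurate and cleanly separates this lemma from the companion co-termination lemma.
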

%
\begin{lem}[Expression Co-termination]
\label{lem:expr-type-coterm}
Given $\hastype{\Gamma}{e}{\phi}{\tau}$ and two program states $M_1$ and
$M_2$ related by $\Gamma$, evaluating the expression $\denote{e} M_1$
yields some value $v_1$ if and only if $\denote{e} M_2$ yields some value $v_2$.
\end{lem}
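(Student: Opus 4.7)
The plan is to proceed by structural induction on the derivation of $\hastype{\Gamma}{e}{\phi}{\tau}$, showing that for each typing rule, evaluation in $M_1$ terminates with some value iff evaluation in $M_2$ does. Throughout, I will rely on the fact that $M_1$ and $M_2$ are both well-shaped with respect to $\shape(\Gamma)$, and on \Cref{lem:expr-type-sens-sound} whenever I need a subexpression's values to actually coincide across the two stores.

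The base cases are immediate: for a variable $x$, well-shapedness of both stores guarantees $x$ is in the domain of both $M_i$ with a value of the appropriate type; for a literal, the denotation is total and independent of the store. For binary operations $e_l \,\mathtt{op}\, e_r$, I would first apply the induction hypothesis to both subexpressions to get co-termination, then invoke shape-checking to rule out type mismatches (which is the only source of partiality for $+$, $-$, $\cdot$, boolean operators, and the comparisons). If the operator set is expanded to include operations that are partial on their values, such as division, then the typing rule for that operator must force the relevant operand to be $0$-sensitive so that \Cref{lem:expr-type-sens-sound} makes its value agree across $M_1$ and $M_2$; with that, co-termination of the operation follows. The $e.length$ case is immediate from the IH on $e$, since shape-checking guarantees $e$ evaluates to an array-typed value whose length is always defined.

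The main obstacle, and the place where the premises of the typing rules really matter, is the indexing case $e[i]$. Here partiality can arise from the index being out of bounds, so I need to establish that (a) the two underlying arrays have the same length, and (b) the index value is the same in both stores. For \textsc{Vector-Index}, the premise $\phi < \infty$ gives $d_{[\tau]}(\denote{e}M_1, \denote{e}M_2) < \infty$, which by \Cref{defn:array-dist} forces the two vectors to have equal length. For \textsc{Bag-Index}, the premise $\hastype{\Gamma}{e}{0}{\{\tau\}}$ together with \Cref{lem:expr-type-sens-sound} gives bag distance $0$, i.e.\ the two bags are equal as multisets and therefore have the same length. In both rules the index is required to be $0$-sensitive, so \Cref{lem:expr-type-sens-sound} yields $\denote{i}M_1 = \denote{i}M_2$; combined with equal array lengths, this index is in-bounds in $M_1$ iff it is in-bounds in $M_2$, so the lookup co-terminates.

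Combining these cases closes the induction. The only subtlety worth stressing in the write-up is that co-termination of indexing relies on \Cref{lem:expr-type-sens-sound} in an essential way (not merely on the IH), which is why the two lemmas are most naturally proved together or with \Cref{lem:expr-type-sens-sound} established first; since the paper already states \Cref{lem:expr-type-sens-sound} as a separate lemma, I would simply cite it in the indexing case.
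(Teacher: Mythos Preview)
Your proposal is correct and follows the same approach the paper indicates: the paper simply states that the proof is by ``straightforward induction'' and elides the details, so your structural induction on the typing derivation, with the indexing case handled via \Cref{lem:expr-type-sens-sound} and the finiteness/zero-sensitivity premises, is exactly in line with what the authors intend.
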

%
The command typing rules have the form $\tytriple{\Gamma}{c}{\Gamma',
(\epsilon, \delta)}$. And earlier, we described $\Gamma$ and $\Gamma'$ as
pre-condition and post-conditions. What does it mean to treat a typing context
as pre- and post-conditions?

Recall the translation from typing contexts to \aprhl assertions
in \Cref{sec:overview}. This translation naturally induces a relation on each
program variable: each variable's type information $x:_\phi \tau$ becomes the
relation $d_\tau(x\ag{1}, x\ag{2}) \leq \phi$. The entire typing context
$\Gamma$ is translated to conjunctions of the pointwise relation for each
program variable. As an example, the context $x :_1 \tyint, y :_2 \tyfloat$
corresponds to the relation $d_\tyint(x\ag{1}, x\ag{2}) \leq 1 \land
d_\tyfloat(y\ag{1}, y\ag{2}) \leq 2$. We also use the $\denote{\cdot}$ function
to denote the translation of typing contexts.

So a typing rule is in fact an \aprhl judgement in disguise:
$\aprhlstmt{c}{(\epsilon, \delta)}{c}{\denote{\Gamma}}{\denote{\Gamma'}}$. To
prove these judgments valid, we need to use the \aprhl proof rules. In fact,
many of \Fuzzi's core typing rules are specialized versions of the
corresponding \aprhl rule for that command. We list all \aprhl proof rules used
in this paper in
\ifextended
\Cref{ap:aprhl},
\else
Appendix B of the extended version,
\fi
but readers do not need to look there to
understand the following content in the main body of the paper.


As an example, the soundness of the \textsc{Assign} rule is justified by the
following lemma:
\begin{lem}
Given $\hastype{\Gamma}{e}{\phi'}{\tau}$, the judgement $\aprhlstmt{x = e}{(0,
0)}{x = e}{\denote{\Gamma}}{\denote{\Gamma[x\mapsto \phi']}}$ is true.
\end{lem}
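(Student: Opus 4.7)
The plan is to discharge the goal by applying the \aprhl assignment rule and then reducing the remaining obligation to the expression soundness lemmas already established. Recall that the \aprhl proof rule for synchronous deterministic assignment lets us conclude $\aprhlstmt{x = e}{(0,0)}{x = e}{\Psi}{\Phi}$ provided that $\Psi$ implies $\Phi[e\ag{1}/x\ag{1}, e\ag{2}/x\ag{2}]$. Specializing this to our situation, the task reduces to proving that $\denote{\Gamma}$ entails $\denote{\Gamma[x \mapsto \phi']}$ after substituting $e\ag{1}$ for $x\ag{1}$ and $e\ag{2}$ for $x\ag{2}$ in the post-condition.

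Next I would unfold the translation $\denote{\cdot}$. By definition, $\denote{\Gamma[x \mapsto \phi']}$ is a conjunction of pointwise distance assertions, one for each program variable. For every variable $y \neq x$, the assertion is identical to the one contributed by $\Gamma$, and after the substitution it still refers only to $y\ag{1}$ and $y\ag{2}$ (since $x$ does not appear in those atoms), so it is discharged immediately from the corresponding conjunct of $\denote{\Gamma}$. The only nontrivial atom is the one for $x$ itself, which after substitution becomes $d_\tau(\denote{e}M_1, \denote{e}M_2) \leq \phi'$ for the two states $M_1, M_2$ related by $\denote{\Gamma}$.

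This atom is exactly the conclusion of \Cref{lem:expr-type-sens-sound} applied to the premise $\hastype{\Gamma}{e}{\phi'}{\tau}$, so that bound holds whenever $e$ evaluates in both states. To ensure the substitution itself is well-defined in the sense required by the \aprhl assignment rule (namely that the two evaluations succeed together), I would invoke \Cref{lem:expr-type-coterm}, which guarantees that $\denote{e}M_1$ yields a value iff $\denote{e}M_2$ does. Together these two lemmas cover both the quantitative sensitivity requirement and the co-termination requirement.

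The main obstacle, as far as I can see, is mostly bookkeeping: making sure the unfolding of $\denote{\Gamma[x \mapsto \phi']}$ is handled cleanly so that the other variables' atoms are preserved under substitution, and confirming that the specific \aprhl assignment rule being invoked has the shape assumed above (with synchronous substitution and zero privacy cost). There is no subtlety in the probabilistic accounting since both sides are deterministic and no Laplace sampling is involved, so the $(0,0)$ cost is immediate.
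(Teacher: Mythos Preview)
Your proposal is correct and matches the paper's approach: the paper simply remarks that each core typing rule is justified ``using corresponding \aprhl proof rules,'' and for assignment that corresponding rule is exactly the \textsc{Assn} rule you invoke (together with \textsc{Conseq} to strengthen the precondition to $\denote{\Gamma}$). Your use of \Cref{lem:expr-type-sens-sound} for the $x$-atom and \Cref{lem:expr-type-coterm} for co-termination is precisely the intended way to discharge the resulting obligation.
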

We define one such lemma for each of the typing rules given
in \Cref{fig:structural-rules1}, and justify them using corresponding \aprhl
proof rules.

One important technical subtlety is that the original presentation of \aprhl only
reasons over terminating programs. Requiring \Fuzzi's typechecker to prove
termination for all programs would
unavoidably rule out some useful ones. Fortunately, we actually need only a
subset of \aprhl's proof rules, and these are all sound even if programs only
co-terminate~\cite{cotermination:privcomm}; we can thus we relax the ``all
programs terminate'' assumption of \aprhl in the development of \Fuzzi.

\paragraph*{Remark}
Although we carry out privacy proofs in \aprhl, the logic \aprhl does not fully
isolate its user from the underlying semantics of the language. For example,
some of the \aprhl proof rules used to develop \Fuzzi require us to prove
termination of commands, but \aprhl does not give proof rules for
termination. So we develop our own sound termination typing rules that
match \aprhl's termination definition, using the semantics of \Fuzzi. This
necessitates an even lower-level logic $\mathcal{L}$ to formalize the parts not
specified by \aprhl. In the following sections, we will explicitly call out
objects defined in $\mathcal{L}$.\footnote{The proof assistant
Coq~\cite{Coq:manual} is a suitable candidate of $\mathcal{L}$; indeed, we have
already formalized some parts of \Fuzzi in Coq.}

\section{Extensions}
\label{sec:extension}
In this section, we discuss how to integrate the core \Fuzzi type system with
specialized typing rules for \Fuzzi extensions; we then introduce several
concrete extensions that will be used later for our case studies: operations
for mapping a piece of code over all the cells in a bag or vector, an
operation for partitioning a bag into a collection of smaller bags
according to some criterion, an operation for summing the elements of a bag,
and an operation for sequencing several commands using an ``advanced
composition theorem'' from the differential privacy literature to obtain a
lower privacy cost than the one given by the plain sensitivity typing rule
for sequencing.
\begin{defn}
An extension is a 4-tuple $(\mathit{ext},
  f, \mathit{rule}, \mathit{proof})$. The first field $\mathit{ext}$ is the name
  of the extension. The second field is a function $f$ that maps \Fuzzi
  expressions or commands to a \Fuzzi command, we will call $f$ the syntax
  expansion function. Let $v_1, \dots, v_i$ be the syntactic variables bound in
  $f$; the third field is a typing rule, parameterized by the same $v_1, \dots,
  v_i$ syntactic variables. The typing rule may contain premises over any
  combination of $v_1, \dots, v_i$, and the typing rule's conclusion has the
  shape of a \Fuzzi typing triple for the expanded code of the
  extension. Finally, the last field $\mathit{proof}$ is a proof of the
  soundness of the typing rule.
\end{defn}
We will use the notation $\mathit{ext}(p_1, p_2, \dots, p_i)$ for the syntax of
invoking an extension. These extension commands are replaced by the expanded
body $f\,(p_1,\, p_2,\, \dots,\, p_i)$ with each $p_i$ substituting for each
syntax variable $v_i$.

Expressing the typing rule requires the full generality of the lowest level
logic $\mathcal{L}$, since extension typing rules contain side conditions (such
as termination) that are not captured by \aprhl. The general shape of this
theorem is of the form
$$\forall\, v_1,\dots,v_i,\, P_1 \land \dots \land
P_j \Rightarrow \tytriple{\Gamma}{\mathit{ext}(v_1, \dots, v_i)}{\Gamma',
(\epsilon, \delta)}$$ Each $P_j$ is a premise of the typing rule, and all
premises may bind any combination of $v_1, \dots, v_i$. The conclusion of the
theorem is always of the shape of a \Fuzzi typing triple, so that proofs of the
typing rule can mix with the soundness proofs of the core typing rules
introduced in \Cref{sec:typing-fuzzi}. The final component of an extension is a
soundness proof of the typing rule theorem.

Some of these premises are \Fuzzi typing judgments, while some others are
auxiliary judgments that asserts termination or describes a linear scaling
relationship between the pre-condition sensitivities and the post-condition
sensitivities. These two extra kinds of auxiliary judgments are defined in
$\mathcal{L}$, we will give definitions for these auxiliary judgments as we
encounter them. We will describe their proof rules in
\ifextended
\Cref{ap:aux-checker}.
\else
Appendix C of the extended version.
\fi

We will provide an overview of the use case for each extension, and only provide
a sketch of the proof of soundness to conserve space. Detailed soundness proofs
for all the extensions can be found in
\ifextended
\Cref{ap:proofs}.
\else
Appendix D of the extended version.
\fi
\begin{figure}[t]
  \begin{minipage}{0.45\textwidth}
  \centering
  \begin{mathparsmall}
    \mprset{flushleft}
    \inferrule[Bag-Map]
              {\tinytext{Termination, Deterministic}
               \\\\ \term{\Gamma}{c}
               \qquad \determ\; c
               \\\\ \tinytext{Should Not Modify}
               \\\\ \mathit{t_{in}}, \mathit{in},\mathit{out},\mathit{i} \notin \mv\; c
               \\\\ \tinytext{Abbreviation}
               \\\\ \sigma\hspace{0.3em} = [\mv\, c, i, in, out \mapsto \infty]
               \\\\ \sigma' = [\mv\,c, i, \mathit{t_{in}}, \mathit{t_{out}} \mapsto \infty]
               \\\\ \tinytext{Dependency}
               \\\\ \tytriple
               {\mathtt{stretch}\;\Gamma[\mathit{t_{in}} \mapsto \hspace{0.38em}0]\sigma}
               {c}
               {\Gamma_1, (0, 0)}
               \\\\ \tytriple{\mathtt{stretch}\;\Gamma[\mathit{t_{in} \mapsto \infty}]\sigma}
                     {c}
                     {\Gamma_2, (0, 0)}
               \\\\ \Gamma_1(t_\mathit{out}) = 0
               \\\\ \{x \mathrel{|} x\in\mv\; c \land \Gamma_2(x) > 0\} \subseteq \{t_{\mathit{out}}\}
               \\\\ \tinytext{Output Sensitivity}
               \\\\ \Gamma_1(\mathit{t_{out}}) = 0
               \quad \Gamma_{\mathit{out}} =
               \Gamma
               [\mathit{out} \mapsto \Gamma(\mathit{in})]\sigma'
               }
              {\tytriple
                {\Gamma}
                {\cmdbmap{\mathit{in}}{\mathit{out}}%
                         {t_\mathit{in}}{i}{t_\mathit{out}}{c}}
                {\Gamma_{\mathit{out}}, (0, 0)}}
  \end{mathparsmall}
  \end{minipage}
  \begin{minipage}{0.45\textwidth}
  \begin{lstlisting}[xleftmargin=3em]
    &$\mathit{i}$& = 0;
    &$\mathit{out}$&.length = &$\mathit{in}$&.length;
    while &$\mathit{i}$& < &$\mathit{in}$&.length do
      &$t_\mathit{in}$& = &$\mathit{in}$&[&$\mathit{i}$&];
      &$\mathit{c}$&;
      &$\mathit{out}$&[&$\mathit{i}$&] = &$t_\mathit{out}$&;
      &$\mathit{i}$& = &$\mathit{i}$& + 1;
    end
  \end{lstlisting}
  \end{minipage}
  \caption{\textsc{Bag-Map} typing rule and extension code pattern}
  \label{fig:bag-map}
\end{figure}
\subsection{Bag Map}

Our first extension, \textsc{Bag-Map}, takes an input bag variable,
an output bag variable, a few auxiliary variables used by the expanded loop, and
finally a ``bag-map body'' $c$ that reads from a single bag entry and outputs a
mapped value for that bag entry. This command $c$ represents a single step of
the ``map'' operation. \textsc{Bag-Map} applies this operation uniformly for all
entries in a bag.

The premises of \textsc{Bag-Map}'s typing rule use a few new ingredients---the
function $\mv$ collects the set of modified variables from a command $c$, and
the function $\mathtt{stretch}$ ``expands'' a typing context. The
$\mathtt{stretch}$ function takes a typing context $\Gamma$, and for each
variable $x \in \Gamma$, if $\Gamma(x) > 0$, sets $\Gamma(x) = \infty$,
otherwise leaves $\Gamma(x)$ as $0$.
\begin{center}
\begin{tabular}{RLCLL}
\mathtt{stretch}\, &\varnothing &=& \varnothing \\
\mathtt{stretch}\, &(x :_0 \tau, \Gamma) &=& x :_0 \tau, \mathtt{stretch}\,\, \Gamma \\
\mathtt{stretch}\, &(x :_\sigma \tau, \Gamma) &=& x :_\infty \tau,
\mathtt{stretch}\,\, \Gamma & \mathrm{if}\ \sigma>0
\end{tabular}
\end{center}
The function $\mathtt{stretch}$ is used when we need to verify that some
output variable's sensitive value derives solely from some input
variable.%
\footnote{This kind of dependency analysis is one of
the motivating examples for Benton's seminal work on
relational Hoare logic\,\cite{Benton:2004:SRC:964001.964003}.}

Since a variable can only become sensitive if its value derives from another
sensitive variable, it is perhaps not surprising that \Fuzzi's sensitivity type
system is capable of tracking dependency as well. We will use the function
$\mathtt{stretch}$ to help uncover this dependency analysis part of the
sensitivity type system.

Consider any program fragment $c$, for which we want to verify that for a single
variable $t$, after executing $c$, the sensitive data held in $t$ must only come
from $s$. If we had the following typing judgment about $c$:
$\tytriple{\mathtt{stretch}\, \Gamma_1[s\mapsto 0]}{c}{\Gamma_2, (0, 0)}$ where
$\Gamma_2(t) = 0$, then we know if $s$ is non-sensitive, $t$ is also
non-sensitive. This implies that the only sensitive dependency of $t$ is at most
the singleton set $\{s\}$.

However, this typing judgment only tells us the dependency of $t$ when all
sensitive variables are $\infty$-sensitive before executing $c$. Does the same
result hold when those variables have finite sensitivity? To arrive at this
conclusion, we need to apply the \textsc{Conseq} rule from \aprhl after
unfolding our previous typing judgment into an \aprhl judgement.
\begin{mathparsmall}
\inferrule[Conseq]
          {\aprhlstmt{c_1}{\epsilon', \delta'}{c_2}{\Phi'}{\Psi'} \quad
           \vDash \Phi \Rightarrow \Phi' \\\\
           \vDash \Psi' \Rightarrow \Psi \quad
           \vDash \epsilon' \leq \epsilon \quad \vDash \delta' \leq \delta}
          {\aprhlstmt{c_1}{(\epsilon, \delta)}{c_2}{\Phi}{\Psi}}
\end{mathparsmall}%
The \textsc{Conseq} rule allows us to strengthen the pre-condition. Now, in the
 context $\mathtt{stretch}\Gamma_1[s\mapsto 0]$, if we change any other
variable's sensitivity to some finite value, then we have a stronger statement,
since $\infty$-sensitivity is implied by finite sensitivity.  Thus,
the \textsc{Conseq} rule allows us to change the pre-condition to one that
implies the stretched typing context.

This technique allows us to verify $t$'s dependency is at most $\{s\}$ through
sensitivity analysis. However, the program $c$ may modify variables other than
$t$. In order to make sure there are no other sensitive output variables from
the program fragment $c$, we also want to verify the lack of dependency on
$s$. Can we re-use the sensitivity type system to check some other modified
variable $v$ does not depend on the variable $s$?  Indeed we can, using the
$\mathtt{stretch}$ function again in a slightly different way. Consider the
typing judgment
$\tytriple{\mathtt{stretch}\,\Gamma_1[s \mapsto \infty]}{c}{\Gamma_2, (0, 0)}$ where
$\Gamma_2(v) = 0$. This typing judgment tells us that no matter how much $s$
and the other sensitive values changes between two executions of $c$, the value
held in $v$ remains the same at the end of the execution. So indeed $v$ does not
depend on $s$ or any other sensitive variable. Again \textsc{Conseq} rule allows
us to strengthen the $\infty$-sensitivity to any finite sensitivity.

The \textsc{Bag-Map} typing rule applies this technique to check that, on each
iteration of $c$, the value of $t_\mathit{out}$ derives only from the
corresponding input bag entry $t_\mathit{in}$. The program fragment $c$ should
not access the original bag value directly, and neither should it write directly
to the output bag. Furthermore, each iteration of the bag map body should be
independent of each other, so the values of its modified variables should not
carry over to the next iteration. For these reasons, we set the sensitivity of
modified variables of $c$, the variable $i$, and the input and outputs bags
variables $\mathit{in}$ and $\mathit{out}$ to $\infty$ in the judgment
$\tytriple{\mathtt{stretch}\;\Gamma[\mathit{t_{in}} \mapsto 0]\sigma} {c}
{\Gamma_1, (0, 0)}$, and check $\Gamma_1(t_\mathit{out}) = 0$. We use the letter
$\sigma$ to abbreviate the update expression $[\mv\;
c,i, \mathit{in}, \mathit{out} \mapsto \infty]$.

We also want to verify that none of the modified variables, except for
$t_\mathit{out}$, has any dependency on sensitive data. This is why we check
the only variable that can potentially hold sensitive data is $t_\mathit{out}$
in the judgment
$\tytriple{\mathtt{stretch}\;\Gamma[\mathit{t_{in}} \mapsto \infty]\sigma} {c}
           {\Gamma_2, (0, 0)}$
The variables $\mv\; c, i, \mathit{in}, \mathit{out}$ are again set to $\infty$
to ensure these variables do not leak information across iterations as discussed
above.

We use the judgment $\determ\, c$ to assert that $c$ is a deterministic
\Fuzzi program. It is easy to show that any program $c$ that do not contain
the Laplace mechanism is deterministic. We use the judgment $\term{\Gamma}{c}$
to denote that command $c$ terminates for any program state that is well-typed
according to $\shape(\Gamma)$.  That is, for all $M$ in $\shape(\Gamma)$,
running the program $c$ with the well-typed program state $\denote{c}M$
terminates with probability~$1$.

Since the \aprhl judgment that corresponds to the conclusion of this typing
rule implies co-termination of \textsc{Bag-Map} programs, we need to prove the
expanded while loops actually co-terminate. However, because these two while
loops may have different number of iterations due to bags having different
sizes, thus executing $c$ for different number of times, we cannot simply show
$c$ co-terminates. So, we take the extra step of requiring termination of $c$ on
all well-shaped inputs, which ensures both loops will always terminate.

The soundness proof for the \textsc{Bag-Map} typing rule applies
dependency analysis
to ensure the map body $c$ maps the input value
$t_\mathit{in}$ deterministically to $t_\mathit{out}$, and that $c$ does not
store sensitive data in any of its other modified variables. This allows us to
ensure $in[i]$ maps deterministically and uniformly to $out[i]$ through each
iteration of $c$. Now, adding or removing any entry from the input bag will
correspondingly add or remove the mapped value from the output bag. So the
output bag must have the same sensitivity as the input bag.

\subsection{Vector Map}
Our second example, the \textsc{Vector-Map} extension is, very similar
to \textsc{Bag-Map} in that it also requires the ``map'' command to restrict its flow
of sensitive data from $t_\mathit{in}$ to only
$t_\mathit{out}$. However, \textsc{Vector-Map} has an additional requirement
that map body must be ``linear'':
\begin{defn}[Linear Commands]
\label{defn:linear-commands}
We write $k \Gamma$ to denote a typing context $\Gamma'$ where $\Gamma'(x) =
k \Gamma(x)$. A deterministic and terminating command $c$ is linear with respect
to $\Gamma_1$ and $\Gamma_2$, if for any $k > 0$, the scaled typing judgment
$\tytriple{k\Gamma_1}{c}{k\Gamma_2, (0, 0)}$ is true. We define $k\cdot \infty
= \infty$ for $k > 0$, and $0\cdot \infty = 0$.
\end{defn}
This definition tells us that the updates in sensitivity in the post-condition
scale linearly with respect to the sensitivities in the pre-condition. An
example of a linear command is $x = 2 y + 1$ with typing context $\Gamma_1 =
x:_0 \tyfloat, y :_1 \tyfloat$ and $\Gamma_2 = x:_2 \tyfloat, y :_1 \tyfloat$.
\begin{figure}[t]
  \begin{minipage}{0.45\textwidth}
  \centering
  \begin{mathparsmall}
    \mprset{flushleft}
    \inferrule[Vector-Map]
              {\tinytext{Termination, Deterministic}
               \\\\ \term{\Gamma}{c}
               \qquad \determ\, c
               \\\\ \tinytext{Should Not Modify}
               \\\\ \mathit{t_{in}}, \mathit{in},\mathit{out},\mathit{i} \notin \mv\, c
               \\\\ \tinytext{Abbreviation}
               \\\\ \sigma\hspace{0.3em} = [\mv\, c, i, in, out \mapsto \infty]
               \\\\ \sigma' = [\mv\, c, i, \mathit{t_{in}}, \mathit{t_{out}} \mapsto \infty]
               \\\\ \tinytext{Dependency}
               \\\\ \tytriple
               {\mathtt{stretch}\,\Gamma[\mathit{t_{in}} \mapsto \hspace{0.38em}0]\sigma}
               {c}
               {\Gamma_1, (0, 0)}
               \\\\ \tytriple{\mathtt{stretch}\,\Gamma[\mathit{t_{in}} \mapsto \infty]\sigma}
                     {c}
                     {\Gamma_2, (0, 0)}
               \\\\ \Gamma_1(t_\mathit{out}) = 0
               \\\\ \{x | x\in\mv\, c \land \Gamma_2(x) > 0\} \subseteq \{t_{\mathit{out}}\}
               \\\\ \tinytext{Linear}
               \\\\ \tytriple{\Gamma[\mathit{t_{in}}\mapsto 1]\sigma}
                     {c}
                     {\Gamma_3, (0, 0)}\,\linear
               \\\\ \tinytext{Output Sensitivity}
               \\\\ \Gamma_{\mathit{out}} =
               \Gamma[\mathit{out} \mapsto \Gamma(in) \cdot \Gamma_3(\mathit{t_{out}})]\sigma'
               }
              {\tytriple
                {\Gamma}
                {\cmdamap{\mathit{in}}{\mathit{out}}{t_\mathit{in}}{i}{t_\mathit{out}}{c}}
                {\Gamma_{\mathit{out}}, (0, 0)}}
  \end{mathparsmall}
  \end{minipage}
  \begin{minipage}{0.45\textwidth}
  \begin{lstlisting}[xleftmargin=2.5em]
    &$\mathit{i}$& = 0;
    &$\mathit{out}$&.length = &$\mathit{in}$&.length;
    while &$\mathit{i}$& < &$\mathit{in}$&.length do
      &$t_\mathit{in}$& = &$\mathit{in}$&[&$\mathit{i}$&];
      &$\mathit{c}$&;
      &$\mathit{out}$&[&$\mathit{i}$&] = &$t_\mathit{out}$&;
      &$\mathit{i}$& = &$\mathit{i}$& + 1;
    end
  \end{lstlisting}
  \end{minipage}
  \caption{\textsc{Vector-Map} typing rule and extension code pattern}
  \label{fig:array-map}
\end{figure}

A counterexample is $\cmdif{x > 0}{x = x + 1}{x = x + 2}$ with $\Gamma_1 = x
:_1 \tyfloat$ and $\Gamma_2 = x:_2 \tyfloat$. This command conditionally
increments $x$ by a constant of $1$ or $2$, so if $x$ was $1$ sensitive before
executing this command, then we can show in \aprhl that $x$ is $2$ sensitive
after this conditional command. Now, if we scaled $x$'s sensitivity by $0.5$,
then $x$ is $1.5$ sensitive after this conditional increment, rather than
$2 \cdot 0.5 = 1$. So this command is not linear with respect to the chose
$\Gamma_1$ and $\Gamma_2$. However, had we chosen $\Gamma_2 =
x:_\infty \tyfloat$, then this command is linear with respect to the new
post-condition, because $k\cdot \infty = \infty$, and no matter what the values
of $x\ag{1}$ and $x\ag{2}$ are, their difference is always bounded by
$\infty$. We present the proof rules for linear commands
in
\ifextended
\Cref{ap:aux-checker}.
\else
Appendix C of the extended version.
\fi

For vector map, we need to know the scaling relationship between the sensitivity
of $t_\mathit{out}$ and the sensitivity of $t_\mathit{in}$ in order to derive
the sensitivity of the output vector. With $c$ being a linear command for the
chosen pre-condition $\Gamma[\mathit{t_{in}} \mapsto 1]\sigma$ and
post-condition $\Gamma_3$, by the definition, for any scale factor $k > 0$ we
know if $d(t_\mathit{in}\ag{1}, t_\mathit{in}\ag{2}) \leq k$ before executing
$c$, then $d(t_\mathit{out}\ag{1}, t_\mathit{out}\ag{2}) \leq sk$ after
executing $c$, where $s = \Gamma_3(t_\mathit{out})$. Recall the definition of
vector distance, by instantiating $k$ with the actual distance for each pair of
$i$th entries from the input vectors
$d(\mathit{in}\ag{1}[i], \mathit{in}\ag{2}[i])$, we know the distance between
the output vectors satisfy the following condition:
\begin{align*}
d_{[\tau]}(\mathit{out}\ag{1}, \mathit{out}\ag{2}) &= \sum_i d_\tau(\mathit{out}\ag{1}[i],
 \mathit{out}\ag{2}[i]) \\
 &\leq \sum_i s\, d_\tau(\mathit{in}\ag{1}[i],\mathit{in}\ag{2}[i]) \\
 &= s d_{[\tau]}(\mathit{in}\ag{1}, \mathit{in}\ag{2}).
\end{align*}
This justifies the sensitivity derived by the typing rule for vector map.

\subsection{Partition}
Our third extension, \textsc{Partition}, allows programmers to break apart a
larger bag into a vector of smaller bags. \textsc{Partition} is parameterized by
an input bag variable, an output vector variable, a few auxiliary variables for
storing results from intermediate computations, and finally a command that maps
each input bag entry to a partition index.

The \textsc{Partition} extension is similar to \textsc{Bag-Map} in that it maps
each bag item to some value, but they differ in how the output value from the
each iteration is used. With \textsc{Bag-Map}, the output value from each
iteration is collected into the output bag as is. \textsc{Partition} uses the
output value as an assignment index into the output bag $\mathit{out}$, and
appends the bag entry at current iteration to the sub-bag at
$\mathit{out}[t_\mathit{idx}]$. As an example, if the input bag is $[1.2, 2.3,
3.4]$, and the map operation simply rounds down each value to the nearest
integer, then the output bag will be $\left[[], [1.2], [2.3], [3.4]\right]$.

\begin{figure}[t]
  \begin{mathparsmall}
    \mprset{flushleft}
    \inferrule[Partition]
              {
               \tinytext{Termination, Deterministic}
               \\\\ \term{\Gamma}{c} \quad \determ\, c
               \\\\ \tinytext{Should Not Modify}
               \\\\ \mathit{t_{in}}, \mathit{in}, \mathit{out}, i, \mathit{out_{idx}} \notin \mv\, c
               \\\\ \tinytext{Abbreviation}
               \\\\ \sigma\hspace{0.3em} = [\mv\, c, i, in, \mathit{out_{idx}} \mapsto \infty]
               \\\\ \sigma' = [\mv\, c, i,
                               \mathit{t_{in}}, \mathit{t_{idx}}, \mathit{out_{idx}},
                               \mathit{t_{part}} \mapsto \infty]
               \\\\ \tinytext{Number of Partitions Non-Sensitive}
               \\\\ \hastype{\Gamma}{\mathit{nParts}}{0}{\tyint}
               \\\\ \fv\, \mathit{nParts} \cap \mv\, \mathtt{c} = \varnothing
               \\\\ i, \mathit{t_{in}}, \mathit{t_{idx}}, \mathit{out_{idx}}, \mathit{t_{part}} \notin \fv\,\mathit{nParts}
               \\\\ \tinytext{Dependency}
               \\\\
               \tytriple
               {\mathtt{stretch}\,\Gamma[\mathit{t_{in}} \mapsto \hspace{0.38em}0]\sigma}
               {c}
               {\Gamma_1, 0}
               \\\\ \tytriple{\mathtt{stretch}\,\Gamma[\mathit{t_{in}} \mapsto \infty]\sigma}
                     {c}
                     {\Gamma_2, 0}
               \\\\ \Gamma_1(\mathit{t_{out}}) = 0
               \\\\ \{x | x\in\mv\, c \land \Gamma_2(x) > 0\} \subseteq \{t_{\mathit{out}}\}
               \\\\ \tinytext{Output Sensitivity}
               \\\\ \Gamma_{\mathit{out}} =
               \Gamma[\mathit{out} \mapsto \Gamma(\mathit{in})]
                     \sigma'
              }
              {\tytriple
                {\Gamma}
                {\bm{\mathtt{partition}}
                  (\mathit{in}, \mathit{out},
                  t_\mathit{in}, i, \mathit{t_{out}},
                  \mathit{t_{idx}}, \mathit{out_{idx}},
                  \mathit{t_{part}}, \mathit{nParts}, c)}
                {\Gamma_{\mathit{out}}, 0}}
  \end{mathparsmall}
  \begin{minipage}{0.45\textwidth}
  \begin{lstlisting}[xleftmargin=0em]
    &$i$& = 0;
    &$\mathit{out}$&.length = &$\mathit{nParts}$&;
    while &$i$& < &$\mathit{nParts}$& do
      &$\mathit{out}$&[&$i$&].length = 0;
      &$i$& = &$i$& + 1;
    end;
    bmap(&$\mathit{in}$&, &$\mathit{out_{idx}}$&, &$\mathit{t_{in}}$&, &$i$&, &$\mathit{t_{out}}$&, c);
    &$i$& = 0;
  \end{lstlisting}
  \end{minipage}
  \begin{minipage}{0.50\textwidth}
  \begin{lstlisting}[xleftmargin=-2em]
    while &$i$& < &$\mathit{out_{idx}}$&.length do
      &$\mathit{t_{idx}}$& = &$\mathit{out_{idx}}$&[&$i$&];
      if 0 <= &$t_\mathit{idx}$& &\verb|&&|& &$t_\mathit{idx}$& < &$\mathit{out}$&.length then
        &$\mathit{t_{part}}$& = &$\mathit{out}$&[&$t_\mathit{idx}$&];
        &$\mathit{t_{part}}$&.length = &$\mathit{t_{part}}$&.length + 1;
        &$\mathit{t_{part}}$&[&$\mathit{t_{part}}$&.length - 1] = &$\mathit{in}$&[&$i$&];
        &$\mathit{out}$&[&$t_{\mathit{idx}}$&] = &$\mathit{t_{part}}$&;
      else
        skip;
      end;
      &$i$& = &$i$& + 1;
    end
  \end{lstlisting}
  \end{minipage}
  \caption{\textsc{Partition} typing rule and extension expansion}
  \label{fig:partition}
\end{figure}

It may seem redundant that \textsc{Partition} takes the number of partitions as
a parameter. Shouldn't \textsc{Partition} be able to compute the number of
partitions as it processes the input bag values? It should not, because a
computed number of
partitions is a sensitive value that depends on the contents of the
input bag. Taking the previous example, if we add a value of $100.1$ to the input
bag and do not fix the number of partitions ahead of time, then the output vector
will have $97$ more sub-bags than the original output vector---i.e., the distance
between the output vectors can be made arbitrarily large by adding a single item
to the input bag.
This is why we fix the number of partitions
and drop the items whose partition indices are out of range.

The soundness of
the sensitivity check for partition comes from the fact that each index is
derived
only from its corresponding bag entry, thus adding or removing one bag entry can
cause at most one sub-bag in the output vector to vary by distance
$1$. Generalizing this fact shows that the output vector has the same sensitivity as
the input bag does to partition.

\begin{figure}[t]
\begin{minipage}{0.45\textwidth}
\begin{mathparsmall}
\inferrule[Bag-Sum]
          {\literal\,\mathit{bound} \quad \mathit{bound} \geq 0\\\\
           \phi = \Gamma(\mathit{in}) \quad \Gamma_\mathit{out}
           = \Gamma[\mathit{out} \mapsto \phi\cdot \mathit{bound}][i, \mathit{t_{in}} \mapsto \infty]
          }
          {\tytriple
            {\Gamma}
            {\bm{\mathtt{bsum}} (\mathit{in}, \mathit{out}, i, t_\mathit{in}, \mathit{bound})}
            {\Gamma_\mathit{out}, 0}
          }
\end{mathparsmall}
\end{minipage}
\begin{minipage}{0.45\textwidth}
\begin{lstlisting}
i = 0;
out = 0;
while i < in.length do
  &$t_\mathit{in}$& = in[i];
  if &$t_\mathit{in}$& < &$-\mathit{bound}$& then
    out = out - &$\mathit{bound}$&;
  else
    if &$t_\mathit{in}$& > &$\mathit{bound}$& then
      out = out + &$\mathit{bound}$&;
    else
      out = out + &$t_\mathit{in}$&;
    end
  end
  i = i + 1;
end
\end{lstlisting}
\end{minipage}
\caption{\textsc{Bag-Sum} typing rule and expansion}
\end{figure}
\subsection{Bag Sum}
Our fourth extension, \textsc{Bag-Sum}, works with bags of real-valued
data and adds these values up with clipping. The clipping process truncates a
value $s$ such that its magnitude is no larger than $\mathit{bound}$. This is
important to ensure the output of \textsc{Bag-Sum} has finite
sensitivity. Recall that the sensitivity definition on a bag places no
constraints on the distance of values held by the bag. If we na\"ively summed
the two bags $[1, 2]$ and $[1, 2, 100]$, although their bag distance is
bounded by $1$, their sums have distance $100$. Using only the bag distance, the typechecker will have no information
on the sensitivity of the sum. Truncating each value
into the range $[-\mathit{bound}, \mathit{bound}]$ allows us to bound the total
sensitivity of the sum value---if up to $\phi$ bag items may be added or
removed, and each can contribute up to $\mathit{bound}$ towards the total
sensitivity of \lstinline|out|, then at the end of the loop, \lstinline|out| must
be $(\phi\cdot \mathit{bound})$-sensitive.
\begin{figure}[t]
\begin{minipage}{0.45\textwidth}
\begin{mathparsmall}
\mprset{flushleft}
\inferrule[Adv-Comp]
          {\tinytext{Loop body}
           \\\\ \tytriple{\Gamma}{c}{\Gamma, (\epsilon, \delta)}
           \\\\ \tinytext{Privacy cost}
           \\\\ \epsilon^* = \epsilon\sqrt{2n \ln (1/\omega)} + n\epsilon(e^\epsilon-1)
           \\\\ \delta^* = n\delta + \omega
           \\\\ \tinytext{Should Not Modify}
           \\\\ i \notin \mv\, c
           \\\\ \tinytext{{Adv-Comp} Parameters}
           \\\\ \omega > 0
           \qquad n > 0
           \\\\ \literal\, \omega
           \qquad \literal\, n}
          {\tytriple{\Gamma}{\bm{\mathtt{ac}}(i, n, \omega, c)}{\Gamma, (\epsilon^*, \delta^*)}}
\end{mathparsmall}
\end{minipage}
\begin{minipage}{0.45\textwidth}
\begin{lstlisting}
i = 0;
while i < &$n$& do
  &$c$&;
  i = i + 1;
end
\end{lstlisting}
\end{minipage}
\caption{\textsc{Adv-Comp} typing rule and expansion}
\end{figure}

\subsection{Advanced Composition}
Our fifth extension, \textsc{Adv-Comp}, simply expands to a loop that runs
the supplied command $c$ for $n$ times. However, this extension provides a
special privacy cost accounting mechanism known as Advanced
Composition~\cite{DworkRoVa10}. Compared to Simple
Composition, \textsc{Adv-Comp} gives an asymptotically better $\epsilon$ that
grows at the rate of $O(\sqrt{n})$, at the cost of a small increase in
$\delta$. Simple composition of loop iterations will give privacy costs that
grow at the rate of $O(n)$ instead. The programmer chooses the increase in
$\delta$ by providing a positive real number $\omega$, which is used to compute
the aggregated privacy cost for the entire loop.

The \textsc{Adv-Comp} extension is useful for programs that iteratively release
 data, and it also allows programs to be run for more iterations
 while staying under the same privacy budget. We use \textsc{Adv-Comp} in our
 implementation of logistic regression in \Cref{sec:eval}.

It is worth noting that \textsc{Adv-Comp} does not always give a better privacy
cost than simple composition: when the $\epsilon$ cost of $c$ is large, the term
$n\epsilon(e^\epsilon-1)$ becomes the dominating term. This term again grows
linearly with $n$, and it has a multiplicative factor of $e^\epsilon -
1$. When \textsc{Adv-Comp} gives worse privacy cost in both $\epsilon$ and
$\delta$ in comparison to simple composition, the type system falls back to
simple composition for the expanded loop for privacy cost accounting.

\section{Implementation}
\label{sec:impl}
Our prototype \Fuzzi checker expands extensions before typechecking, leaving
hints in the expanded abstract syntax tree so that it can tell when to apply the
macro-typing rules that accompany each extension. The typechecking algorithm
includes three major components: 1) a checker that computes sensitivity, 2) a
checker for termination, and 3) a checker for linear properties\bcp{not sure
what this means} of commands.

The implementation uses three separate ASTs
types---called \textsc{Imp}, \textsc{ImpExt}, and \textsc{ImpTC}---to represent
a \Fuzzi program in different phases of checking.  The \textsc{Imp} AST is what
the parser produces---i.e., the constructs of the core language plus extension
applications.  \textsc{ImpExt} is a convenient language for entering extension
declarations for \Fuzzi.
Finally, \textsc{ImpTC} represents programs expanded from \textsc{Imp}
---while-language with no extension application nor extension definition, but
contains typechecker hints, and the typechecker expects terms
from \textsc{ImpTC}. The \textsc{ImpTC} language is not accepted by the parser,
as we do not anticipate users wanting to enter typechecker hints directly. We
use the extensible sum encoding described in \textit{data types \`{a} la carte}
~\cite{Swierstra:2008:DTL:1394794.1394795} to represent these ASTs in order to
avoid code duplication. We depend on the \texttt{compdata}
package~\cite{Bahr:2011:CDT:2036918.2036930} to manipulate ASTs in this
encoding.

The three checkers are implemented separately. A checker composition function
takes results from each checker, and produces the final type information for
a \Fuzzi program.

To efficiently execute \Fuzzi code, we compile \Fuzzi programs to
Python and use fast numeric operations from the \verb|numpy|
library~\cite{Oliphant:2015:GN:2886196} whenever appropriate.

\section{Evaluation}
\label{sec:eval}
To evaluate \Fuzzi's effectiveness, we implement four differentially private
learning algorithms from four diverse classes of learning
methods---discriminative models, ensemble models, generative models, and
instance-based learning.  The algorithms and datasets are both taken from
canonical sources. We want to know (1) whether \Fuzzi can express these
algorithms adequately, (2) whether the typechecker derives sensitivity
bounds comparable to results of a careful manual analysis, and (3) whether
the final privacy costs are within a reasonable range.

We use datasets obtained from the UCI Machine Learning repository
\cite{Dua:2017} and the MNIST database of handwritten digits
\cite{lecun-mnisthandwrittendigit-2010}. We focus on evaluating \Fuzzi's
usability on prototyping differentially private learning tasks in these
experiments, rather than trying to achieve state-of-the-art learning
performance.

We find that \Fuzzi can indeed express all four examples and that it
correctly derives sensitivity bounds comparable to results from a manual
analysis.  The examples also demonstrate that the extensions described in
\Cref{sec:extension} are useful for real-world differential privacy
programming, since each of the learning algorithms can be expressed as a
straightforward combination of extensions.

On the other hand, the privacy costs that \Fuzzi derives are arguably a bit
disappointing.  One reason for this is that we ran the experiments on fairly
small datasets.  A deeper reason is that \Fuzzi focuses on accurate
automatic inference of \textit{sensitivities}, an important building block
in differential privacy. Tracking sensitivities is somewhat orthogonal to the question
of how to most tightly track {privacy costs}, which is achieved via composition theorems that
sit on top of the sensitivity calculations. Our focus in this work has been
mainly on tracking sensitivity; in particular, we implement only simple
composition theorem in the core type system. The result is that \Fuzzi may
report a larger privacy cost than is optimal, even when it optimally
computes sensitivities. However, stronger composition theorems can be added
as extensions: we give an example of this by demonstrating an ``advanced
composition'' \cite{DworkRoVa10} extension in \Cref{sec:extension}. We view
adding extensions for more sophisticated methods of tracking privacy costs
as future work (see \Cref{sec:related-works} and \Cref{sec:future}).

\ifpostreview\bcp{Still not 100\% happy with all this.  In particular, I
  don't understand why we could not run similar experiments with more
  data and get better numbers...\fi

\subsection{Logistic Regression}
\label{sec:eval-logistic-regression}
We first investigate a binary classification problem. The dataset contains 12,665
digits (either $0$ or $1$) from the MNIST database. (We only work with $0$ and
$1$ digits because it simplifies our presentation. A $10$-class logistic
regression model that classifies all $10$ digits can be implemented using the
same methods we show here.)
We use 11,665 digits for training and leave 1,000 digits on the side for
evaluation. Each digit is represented by a $28\times 28$ grayscale image plus a
label indicating whether it is a $0$ or a $1$. The image and its label are
flattened into a $785$-dimensional vector. We then use gradient descent, a
simple and common machine learning technique to train a standard logistic
regression model that classifies these two shapes of digits. We apply
differential privacy here to protect the privacy of each individual image of the
digits. In other words, differential privacy limits an adversary's ability to
tell whether a particular image was used in training the classification model.
In particular, we modify gradient descent with a gradient clipping step to
achieve differential privacy. Gradient clipping is a common technique for
implementing differentially private gradient
descent~\cite{Abadi:2016:DLD:2976749.2978318,brendan2018learning}.

The logistic regression model is parameterized by a vector $\vec{w}$ of the same
dimension as the input data and a scalar $b$. A ``loss function'' $L(\vec{w},
b, \vec{x}_i)$ quantifies the mistakes the model makes given a pair of $\vec{w}$
and $b$, and an input image $x_i$. In ordinary (non-private) gradient
descent, we compute
the gradients $\frac{\partial\,L}{\partial\,\vec{w}}$ and
$\frac{\partial\,L}{\partial\,b}$ for each image $x_i$, and we move the current
parameters $\vec{w}$ and $b$ in the direction of the average of these gradients, decreasing
the value of the loss function $L$ (i.e., improving the quality of model parameters). To set the
initial values of $\vec{w}$ and $b$, we take random samples from a normal
distribution centered at $0$ and with variance $1$.

Since the gradients here are computed from private images, the model parameters
modified with these gradients are also sensitive information that cannot be
released directly. Instead, we release noised
estimations of the average gradients and use these values to update model
parameters. We apply \lstinline|bmap| over the input dataset, computing a bag of
both gradients for each image. We then use \lstinline|bsum| and the Laplace
mechanism to release the sum of the gradients. We also use the Laplace mechanism
to compute a noised estimate of the size of the dataset, and then update the model parameter with the
noised average gradient. The \lstinline|bsum| extension clips each gradient
value so that the final sum has a bounded sensitivity.

We iterate the gradient descent calculation
with the \lstinline|ac| (advanced
composition) extension. With $100$ passes over the training set, we reach a
training accuracy of $0.933$, and test accuracy of $0.84$. We measure accuracy as the fraction of images the trained model correctly classifies. The differentially private model's accuracy is comparable to the
accuracy of $0.88$ for a logistic regression model without differential
privacy~\cite{726791}. Training the model with $100$ passes incurs privacy cost
$\epsilon = 11.02$ and $\delta = 10^{-6}$.  Our $\epsilon$ privacy cost is larger than the results
achieved by \citet{Abadi:2016:DLD:2976749.2978318} ($\epsilon = 2.55, \delta =
10^{-5}$) on MNIST, due to our use of a simpler privacy composition theorem;
Abadi et al. invented a specialized ``moments accountant'' method to derive
tighter aggregated privacy costs given the same sensitivity analysis.
%
%
\subsection{Teacher Ensemble}
\label{sec:eval-teacher-ensemble}
Next, we build on the logistic regression from above, together with ideas
from \citet{papernot:semi-supervised:2016}, to design an {\em ensemble
  model}---a
collection of models---that classifies $0$ and $1$ digits from the MNIST
dataset. Papernot et al. demonstrated a general approach for providing
privacy guarantees with respect to the training data: first partition the
private training data into $k$ partitions, then apply a training procedure
to each partition to build a private model for it.
These $k$ private models form a Private Aggregation of Teacher Ensembles
(PATE). The PATE then predicts labels using differential privacy for another
unlabeled public data set. Since the resulting public dataset and its labels are
not private information, they can be used to train any other model, effectively
transferring knowledge from the PATE to a public model while preserving
privacy. Note that we do not require that the models used internally in creating
the PATE to be differentially private: only the aggregation step (used to
predict labels for the public dataset) involves differential privacy.

We split the training input dataset of MNIST digits into a bag of five parts.
Using the extension \lstinline|bmap| for each part, we independently train a
logistic regression model with non-private gradient descent. Assuming the input
dataset has sensitivity $1$, only one part of training data can change. \Fuzzi
correctly derives the fact that therefore at most $1$ trained logistic
regression model will change, resulting in a bag of model parameters with
sensitivity $1$.

We use the private ensemble of models to label another $100$ test images;
with privacy cost $\epsilon = 20.0$, and $\delta = 0.0$, we are able to reach
an accuracy of $0.82$. The large $\epsilon$ value here is related to the
small size
of the training set. To release a public label for a given image, the
private scores are collected from the PATE with \lstinline|bmap| and then a
noised average of the private scores is released by \lstinline|bsum| with the
Laplace mechanism. Since we only have $5$ private scores for each image, the
noise variance must be small so as not to destroy the utility of the scores,
resulting
in big $\epsilon$. To increase the stability of the released label (and hence decrease the privacy cost) we could increase the number of models, thus increasing the
number of private scores and the scale of the summed score, thereby allowing more noise added to their average. However, the result would be that each model would have been trained on correspondingly fewer images, resulting in worse classification performance on this dataset. On larger datasets, our \Fuzzi implementation of PATE
would provide the same level of classification performance with lower $\epsilon$
cost.
%

\subsection{Na\"ive Bayes}
\label{sec:eval-naive-bayes}
We next implement a simple spam detection algorithm using the Spambase dataset
from UCI Machine Learning Repository~\cite{Dua:2017}. The binary-labeled dataset
(spam or non-spam) consists of 57 features, mostly of word frequencies for a set
of given words, with additional features describing run lengths of certain
sequences. We binarize all features from the data set to simplify the
probability model described below (i.e., instead of how frequently a word
appears on a scale of $[0,100]$, we only know whether the word was used (1) or
not (0)). We can implement a more sophisticated Gaussian Na\"ve Bayes model that
takes advantage of the frequency data using the same principles as in this
experiment, but we chose to simplify the features to present a simpler model. We
use 4500 samples for training and 100 samples for evaluation. Our privacy goal
in this experiment is to limit an adversary's ability to guess whether a
particular document was used to train the classification model.

A key assumption of the Na\"ive Bayes model is that given the class $y$ of a data
point $\vec{x}$, all features are conditionally independent of each other. This
assumption allows us to decompose the joint probability $P(\vec{x}, y)$ into the
product $P(y)
\cdot \Pi_j
P(x_j | y)$, where $x_j$ represents the $j$-th coordinate of the binary vector
$\vec{x}$. In our experimental setup, the $j$-th coordinate of a data point
represents the presence of a word in the document. The goal of the Na\"ive Bayes
model is to estimate the probabilities of $P(y = 1)$ and $P(x_j = 1 | y = 1)$
and $P(x_j = 1 | y = 0)$ given the training data.
Thus, when we get a new document $\vec{x}'$, we can compare the
probabilities
$$P(y = 1) \Pi_j P(x_j = x_j' | y = 1) \leq?\; P(y = 0) \Pi_j P(x_j = x_j' | y =
0)$$ to make a prediction on whether the document is spam or not ($y=1$ or
$y=0$). Since we have $57$ features, this gives us $57 \cdot 2 + 1 = 115$ parameters
to estimate.

Estimating the parameter $P(y = 1)$ simply involves adding noise to the number
of spam documents in the training set and dividing that count by the noised
size of the training data set. We achieve this by first
applying \lstinline|bmap| to map each training data point to either $1$ or $0$
depending on its label, followed by \lstinline|bsum| and Laplace mechanism to
get a noised count. We can get a noised size of the training set through
applying Laplace mechanism the training set's size.

Estimating the parameters $P(x_j = 1 | y = 1)$ and $P(x_j = 1 | y = 0)$ follows
an essentially identical procedure. We first apply \lstinline|bmap| to each
training data point to map them to either $1$ or $0$ based on the values of
$x_j$ and $y$, followed by a \lstinline|bsum| operation to get the value of
these conditional counts. We already computed the noised count of training
samples with $y = 1$ when estimating $P(y = 1)$, but we perform the same
procedure to compute a noised count of $y = 0$. Dividing the noised conditional
counts by the noised counts of $y = 1$ and $y = 0$ gives us the final estimates
of the model parameters.

We achieve a training accuracy of $0.70$ and test accuracy of $0.69$, with
privacy costs $\epsilon = 7.70$ and $\delta = 0$. This classification accuracy
is only slightly worse than the accuracy $0.72$ of a non-private Na\"ive
Bayes model that we implemented using binarized features from the same dataset.

%
\subsection{K-Means}
\label{sec:eval-kmeans}
\hyphenation{un-supervised}
Finally, we perform a K-Means clustering experiment to evaluate \Fuzzi's
usability for an unsupervised learning task on the iris dataset from
\citet{fisher36lda}. This dataset contains three classes of iris flowers,
with 50 flowers from each class. Each flower comes with four numeric features of
petal and sepal dimensions and a label representing its class. Our experiment
randomly selected one data point from each of the three classes as the initial
public centroids; the \Fuzzi program uses \lstinline|partition| to map each
data point to its closest centroid and create partitions accordingly. Other
than the three data points used to initialize centroids, we used all other
data for
unsupervised training.  (This experiment assumes a small part---in this
case, three data points---of the training set is given as public
information. Past
work implementing differentially private K-Means made a similar
assumption~\cite{Reed:2010:DMT:1932681.1863568}.)

On each pass over the training set, we first compute a noised sum of data points
within each partition; we also compute noised sizes of each partition. We
use these values to compute each partition's average point as the new centroids
for the next pass. For evaluation, we classify all points within a partition
with the majority label, and we obtain the accuracy of the clustering with these
classifications. We do not use the labels for unsupervised training.

We found that the performance of the clustering algorithm varies depending on
the initial centroids selected: running the experiment 100 times, all within 5
passes over the data set, we reach lowest accuracy of 0.55 and highest accuracy
of 0.9, with a median accuracy of 0.69. Increasing the iteration count does not
reduce this spread. We implemented a non-private version of the same algorithm,
and achieved lowest accuracy $0.59$, highest accuracy $0.96$ and median accuracy
of $0.59$ on 100 experiments. Similar to Na\"ive Bayes, we see a slight drop in
classification accuracy compared to the non-private implementation.

Each run has privacy cost $\epsilon = 21.0$ and $\delta =
0.0$. The large $\epsilon$ cost here is again related to the small size of the
training set. In a small dataset, each data point has a larger impact on the
released centroids; in order to reach a reasonable level of classification
accuracy, we chose to apply the Laplace mechanism with a smaller noise level,
resulting in larger $\epsilon$ cost.



\section{Limitations}
\label{sec:limitations}

We briefly discuss some limitations and shortcomings of \Fuzzi.

\paragraph{Limitation of sensitivity}
\Fuzzi's type system interface strikes a careful balance between expressiveness and complexity.
Our approach is sufficient for expressing sensitivities of primitive values
such as $\tyint$ and $\tyfloat$ and can capture a top-level sensitivity for
vectors and bags; however, typing contexts cannot express sensitivities for
individual values within a vector. For example, McSherry and Mironov developed a
differentially private version of the recommender system based on Netflix user
ratings ~\cite{McSherry:2009:DPR:1557019.1557090}, where the sensitivity of
inputs to the system is defined by the changes that may happen to a single row
within a matrix, rather than the whole matrix. \Fuzzi currently cannot carry out
automatic inter-structure sensitivity derivation and cannot provide automatic
differential privacy checking for McSherry and Mironov's algorithm.

\paragraph{Lack of support for abstraction}
Vectors and bags are well studied objects in the differential privacy
literature, and they have first class support in \Fuzzi. However, \Fuzzi does
not provide facilities to specify general abstract data types and their
neighboring relations. \Fuzzi must know how to translate neighboring relation
into an \aprhl assertion, and this translation is not currently extensible. This
limitation may force programmers to contort their code in order to represent a
high-level concept through arrays. An example algorithm that cannot be
adequately expressed in \Fuzzi due to lack of abstraction is the {\em binary
mechanism}~\cite{Chan:2011:PCR:2043621.2043626}, which builds a tree of partial
sums of the input data and accumulates a statistic whose sensitivity is
proportional to the depth of the tree.

\paragraph{Potential Vulnerabilities}
\Fuzzi's semantics uses real numbers as a model for the type $\tyfloat$. However,
the implementation uses floating point numbers. As shown
by \citet{Mironov:2012:SLS:2382196.2382264}, using the Laplace mechanism in
this setting may result in
vulnerable distributions that can compromise the original sensitive data.
Although \Fuzzi guarantees co-termination over neighboring data, it is
vulnerable to timing channel
attacks~\cite{Haeberlen:2011:DPU:2028067.2028100}. A \Fuzzi program that uses
sensitive loop conditions may result in vastly different execution duration.
This side channel allows an attacker to distinguish runs with high confidence.
The first issue can be alleviated by a careful implementation of Laplace
mechanism that incorporates Mironov's mitigation strategy, while the second
issue is more fundamental---\Fuzzi's type system needs to approximately measure
the execution time, which we did not address in this work.

\paragraph{Performance concern due to copy assignments}
\Fuzzi uses copy assignments for arrays. We have worked with relatively small
datasets in the experiments, and the sizes of these arrays have not caused
severe performance problems in our experiments. However, today's machine
learning tasks typically operate on datasets that are many orders of magnitude
larger, and \Fuzzi likely cannot handle computations over these datasets
efficiently. To adapt \Fuzzi's theory for a semantics that allows sharing, we
need to create a new flavor of \aprhl that can reason about heaps. One potential
direction is to integrate separation
logic~\cite{Reynolds:2002:SLL:645683.664578} into \aprhl.

\section{Related Work}
\label{sec:related-works}
\paragraph{Query languages}
McSherry introduced Privacy Integrade Queries (PINQ) as an embedded query
language extension for the C\# programming
language~\cite{McSherry:2009:PIQ:1559845.1559850}. PINQ pioneered language-level
support for differential privacy by analyzing sensitivities for SQL-like
queries and releasing noised results of these queries using the Laplace
mechanism. \Fuzzi's \lstinline|partition| extension takes inspiration
from PINQ's partition operator, adapting it to an imperative
program that computes over arrays.

The FLEX framework~\cite{Johnson:2018:TPD:3187009.3177733} allows programmers to
run differentially private SQL queries against a private database. FLEX uses
an \textit{elastic sensitivity} technique to support SQL queries with joins
based on equality. \Fuzzi focuses on adding support of Differential Privacy to
a general-purpose imperative language; however, the theory around elastic
sensitivities could inspire future extensions to the \Fuzzi type system.

DJoin~\cite{Narayan:2012:DDP:2387880.2387895} runs SQL queries
over databases distributed over many machines. The distributed nature of the
data is not just a question of size, but may be due to the fact that
different databases may be owned by different organizations that do not wich
to share them; there simply is no single
way to get all the data in the same place. (For example, analysts may
want to correlate travel data with illness diagnosis data, with the former
provided by airline companies while the latter provided by hospitals.) \Fuzzi
does not address distributed computations: it runs on a single
machine and assumes data is already in the memory of this machine.
\ifpostreview
\bcp{Blah:}%
Running
typechecked differentially private code over separately curated private data is
another research direction that we can pursue for \Fuzzi.
\fi

\paragraph{Fuzz and related languages}
Fuzz is a higher-order functional programming language with a
sensitivity-tracking type system and differentially private primitives
~\cite{Reed:2010:DMT:1932681.1863568}. \Fuzzi's sensitivity type system is
inspired by Fuzz, but differs in that \Fuzzi separately tracks the
sensitivity of each value in the store (which may be change as the program
assigns to variables), while Fuzz tracks only function sensitivity.  Also,
Fuzz's type system is restricted to $(\epsilon, 0)$-differential privacy,
while \Fuzzi generalizes this to $(\epsilon, \delta)$-differential privacy.

DFuzz~\cite{Gaboardi:2013:LDT:2429069.2429113} extends Fuzz with linear indexed
types and dependent types, allowing
programmers to abstract types over sensitivity annotations. Compared to DFuzz,
both Fuzz and \Fuzzi only allow purely numeric values as sensitivity annotations
in types. This additional level of expressiveness admits programs whose
sensitivities and privacy costs scale with input sensitivities. Although \Fuzzi
does not allow such indexed types, the extension mechanism does allow language
developers to add typing rules quantified over unknown constants (such as
the loop
count in \textsc{Adv-Comp} for advanced composition); this provides another way
for programmers to write programs whose privacy costs scale with program
constants.

AdaptiveFuzz~\cite{Winograd-Cort:2017:FAD:3136534.3110254} extends
Fuzz by using staged computation and stream semantics to
implement a powerful composition mechanism called Privacy
Filters ~\cite{NIPS2016:6170}. These give programmers the freedom to
run future computations based on results released from earlier differentially
private computations. This allows, for example, programmers to stop a private
gradient descent loop as soon as accuracy reaches a desired threshold, rather
than fixing the number of iterations ahead of time. \Fuzzi implements advanced
composition for improved privacy cost aggregation, but privacy filters are
not yet formalized in either \aprhl or \Fuzzi.

Duet~\cite{duet} is a higher-order functional language that provides
$(\epsilon, \delta)$-differential privacy. Fuzz's original type system relies on
composition properties that break down when generalized to cases where $\delta >
0$. As an example, an $(\epsilon, 0)$-DP Fuzz function $f$ that takes a
$1$-sensitive dataset as input has the property that, when $f$ runs on a
$2$-sensitive dataset, the privacy costs scales accordingly to $(2\epsilon,
0)$-DP. However, if $f$ is a general $(\epsilon, \delta)$-DP computation on
$1$-sensitive datasets, it is \textit{not} true that running $f$ on a
$2$-sensitive dataset is $(2\epsilon, 2\delta)$-DP. Duet solves this problem by
separating its type system into two disjoint parts: one that keeps track of
sensitivities and allows scaling and another that keeps track of privacy costs
and disallows scaling. In \Fuzzi, we have a similar separation: the typing
contexts of a command sequence are strict pre-conditions and post-conditions and
do not allow scaling, except for commands typechecked with the \texttt{linear}
typing judgements that explicitly allow scaling of sensitivities in the pre- and
post-condition. These \texttt{linear} commands are deterministic and cannot use
the Laplace mechanism by definition, so scaling their typing judgements'
sensitivities does not bring up the same issues that Fuzz has.
\paragraph{Verification systems}
\citet{Albarghouthi:2017:SCP:3177123.3158146} developed an automated differential
privacy proof synthesis system based on the idea of \textit{coupling}. \Fuzzi
and \aprhl use the same mathematical device  to simplify
relational reasoning between non-deterministic outputs from the Laplace
mechanism.

LightDP is an imperative language with semi-automatic typechecking that uses
dependent types to prove differential privacy properties of a program
~\cite{Zhang:2017:LTA:3093333.3009884}. LightDP's type system also keeps track
of distances of variables between executions on neighboring inputs. A major
difference from
\Fuzzi is
that LightDP's type system tracks the {\em exact} distance between variables
through a dependent type system, while \Fuzzi tracks upper bounds on the
distance between variables. LightDP elaborates the source code into a slightly
extended language that explicitly keeps track of privacy costs in a
distinguished variable and then uses a MaxSMT solver to discharge the generated
verification conditions in the process of typechecking. \Fuzzi typing rules'
soundness are proven ahead of time, and \Fuzzi's sensitivity checking process
does not generate new proof obligations. Due to this design, \Fuzzi does not
require a constraint solver to aid in typechecking.

The EasyCrypt toolset allows developers to construct machine-checkable proofs of
relational properties for probabilistic
computations~\cite{EasyCrypt:manual}. EasyCrypt has a development branch that
focuses on Differential Privacy verification through \aprhl. EasyCrypt provides
built-in support of \aprhl proof rules, and also supports termination analysis
through a compatible program logic pHL (Probabilistic Hoare Logic). Fuzzi's
development does not connect with EasyCrypt's \aprhl implementation, but this is
a potential future direction for rigorously checking Fuzzi's theories.
\paragraph{Testing Differential Privacy}
\citet{Ding:2018:DVD:3243734.3243818} developed a statistical testing framework for
detecting violations of differential privacy of Python programs. This framework
performs static analysis on Python code and generates inputs that seem
likely to violate differential privacy based on this analysis. It also repeatedly executes the program to collect statistical evidence of
violations of differential privacy. This framework demonstrates the
potential for a lighter-weight approach to providing differential privacy
guarantees; we
could potentially apply the same methodology to aid \Fuzzi extension
designers by testing
typing rules before formally proving their soundness.

\paragraph{High-level frameworks}
The PSI private data sharing interface~\cite{DBLP:journals/corr/GaboardiHKNUV16}
is designed to enable non-expert users and researchers to both safely deposit
private data and apply a select set of differentially private algorithms to
collect statistics from the deposited data. \Fuzzi, on the other hand, is
designed only for the task of implementing differentially private
algorithms. It
expects its users to have some familiarity with key concepts such as sensitivity
and privacy budget, and it allows power users to extend its typechecker for more
sophisticated programs.

The $\epsilon$KTELLO Framework~\cite{Zhang2018EkteloAF} provides a set of
expressive  high-level combinators for composing algorithms, with the guarantee
that any algorithm composed of $\epsilon$KTELLO combinators automatically
satisfies differential privacy. The $\epsilon$KTELLO framework allows users
to customize differentially private algorithms in order to achieve higher
utility from querying private data. \Fuzzi, by contrast, is an attempt at
building a rather
{\em low-level} core language with support for Differential Privacy.  A future
direction could be to build a high-level framework like
$\epsilon$KTELLO
over \Fuzzi, providing both expressive combinators and automatic
verification of Differential Privacy for the implementation of these combinators
in the same system.

\section{Conclusion and Future Work}
\label{sec:future}

The rise of Differential Privacy calls for reliable, yet familiar tools that
help programmers control privacy risks. \Fuzzi gives programmers a standard
imperative language with automated privacy checking, which can be enriched
by expert users with extensions whose privacy properties are proved in a
special-purpose relational logic.

Many avenues for improvement still remain.
\begin{enumerate*}
\item We can enrich the set of \Fuzzi extensions to further increase
\Fuzzi's utility. For example, adding Report Noisy Max would allow an analyst to
find the largest value in a vector with small privacy cost; the Exponential
mechanism would allow programs to release categorical data (as opposed to
numerical) with differential privacy
guarantees~\cite{Dwork:2014:AFD:2693052.2693053}. We expect both mechanisms
can be formalized in \aprhl~\cite{Barthe:2016:PDP:2933575.2934554} and added to
\Fuzzi.
\item We can engineer typechecker plugins that dynamically load new
extension typing rules to make \Fuzzi's implementation more flexible.  (At
the moment, adding an extension typing rule requires editing the typechecker sources.)
\item We can formalize Privacy Filters~\cite{NIPS2016:6170} in \aprhl and add adaptive composition to \Fuzzi. This would allow programmers to use the adaptive aggregation mechanism from AdaptiveFuzz in \Fuzzi.
\item We can implement \Fuzzi as a formalized framework in
Coq~\cite{Coq:manual}. This would allow power users to write
machine-checked proofs of extension typing rules.
\item We can incorporate proof synthesis techniques to automatically search
for privacy proofs for extensions, following
\citet{Albarghouthi:2017:SCP:3177123.3158146}, who demonstrated the
effectiveness of synthesizing privacy proofs for interesting differential
privacy mechanisms.  Proof synthesis could streamline
prototyping new \Fuzzi extensions and their typing rules.
\end{enumerate*}

\section{Acknowledgments}
\label{sec:acks}

We are grateful to Justin Hsu, David Darais, and Penn PLClub for their
comments, and we thank the anonymous ICFP reviewers for their detailed and
helpful feedback. This work was supported in part by the National Science
Foundation under grants CNS-1065060 and CNS-1513694.

\bibliography{local}

\appendix

\ifextended

\section{Semantics}
\label{ap:semantics}
\begin{figure}[t]
\begin{align*}
\mathtt{bind} &:: \distr\, A \rightarrow (A \rightarrow \distr\, B) \rightarrow \distr\, B\\
\mathtt{ret} &:: A \rightarrow \distr A
\end{align*}
\caption{Operations over the distribution monad}
\label{fig:distr-monad-op}
\end{figure}
\Fuzzi's semantics directly follows from the work of \citet{Barthe:2016:PDP:2933575.2934554},
but we extend the language with operations over vectors and bags
(\Cref{fig:cmd-sem}). Due to the possibility of out-of-bounds indexing, \Fuzzi's
semantics for expression accounts for partiality by modeling expressions with
partial functions.
%
%
We use a list structure to model arrays in \Fuzzi.
The function $\mathtt{length}\; v$ returns the length of the list $v$. The
function $\mathtt{resize}\; \mathit{len}\; v$ updates the list $v$ such that it
has length equal to $\mathit{len}$ and pads the list with well-shaped default
values if necessary. The function $\mathtt{update}\; b\; i\; v$ returns another
list whose $i$-th element is set to $b$, or $\bot$ if $i$ is out of bounds. We
elide an implicit coercion from $\bot$ to the special distribution
$\mathtt{distr_0}$. The distribution $\mathtt{distr_0}$ is the empty
distribution; it is used to model non-termination.

The semantics of \Fuzzi commands are given as probabilistic functions from
program states to sub-distributions over program states. We write $\bigcirc\, A$
for the set of sub-distributions over $A$. Sub-distributions forms a monad with
two operators $\mathtt{ret}$ and $\mathtt{bind}$
(\Cref{fig:distr-monad-op}). The $\mathtt{ret}$ operator takes a value and
produces a distribution whose entire mass is concentrated on that single value;
the $\mathtt{bind}$ operator builds conditional distributions through
compositions between distributions and functions from samples to distributions.
\begin{figure}[t]
\begin{small}
\begin{tabular}{L}
\denote{x = e} = \\
\hspace{1em}\lambda M. \llet\, \mathtt{v} = \denote{e}\, M\, \lin\, \mathtt{ret}(M[x\mapsto v])\\
\denote{x[i] = e} = \\
\hspace{1em}\lambda M. \llet\, \mathtt{v} = \denote{e}\, M\, \lin\, \\
\hspace{2.6em}\llet\, \mathtt{idx} = \denote{i}\, M\,\lin\, \\
\hspace{2.6em}\mathtt{ret}(M[x \mapsto \mathtt{update}\, \mathtt{v}\, \mathtt{idx}\, M(x)]) \\
\denote{x.length = e} =\\
\hspace{1em}\lambda M. \llet\, \mathtt{len} = \denote{e}\, M\, \lin\, \\
           \hspace{2.6em}\mathtt{ret}(M[x \mapsto \mathtt{resize}\, \mathtt{len}\, M(x) ]) \\
\denote{\cmdif{e}{c_1}{c_2}} = \\
\hspace{1em}\lambda M. \llet\, \mathtt{cond} = \denote{e}\, M\, \lin\, \\
           \hspace{2.6em}\mathtt{if}\, \mathtt{cond}\,
           \mathtt{then}\, \denote{c_1}\, M\,
           \mathtt{else}\, \denote{c_2}\, M \\
\denote{\cmdwhile{e}{c}}_{n} =\\
\hspace{1em}\lambda\, M. \mathtt{if}\, n \leq 0\\
\hspace{2.7em}\mathtt{then}\, \mathtt{distr}_0\\
\hspace{2.7em}\mathtt{else}\, \mathtt{if}\, \denote{e}\, M\\
\hspace{2.7em}\mathtt{then}\, \mathtt{bind}\,(\denote{c}\,M)\,(\denote{\cmdwhile{e}{c}}_{n-1})\\
\hspace{2.7em}\mathtt{else}\, \mathtt{ret}\, M\\
\denote{\cmdwhile{e}{c}} = \bigsqcup_n\,\denote{\cmdwhile{e}{c}}_n
\end{tabular}
\end{small}
\caption{Semantics of \Fuzzi}
\label{fig:cmd-sem}
\end{figure}

\section{Proof Rules for \aprhl}
\label{ap:aprhl}
Reasoning about programs that implement differentially private mechanisms
involves both relational and probablistic reasoning. The authors of \aprhl
combined ingredients from program logic and probability theory to create a proof
system that manages the complexity of such proofs. The union of two key
ingredients--- relational Hoare logic and approximate liftings---results in a
proof system that is very well suited for differential privacy proofs.

\citet{Benton:2004:SRC:964001.964003} established relational Hoare logic (RHL).
RHL provides a system for writing down proofs of properties between two
executions of deterministic while-programs. This proof system would allow us to
embed a type system that tracked sensitivity (but not privacy costs) for the
deterministic fragment of \Fuzzi; however, RHL lacks the reasoning facility for
distributions.

Distributions over program states have complicated structures that arise from
branches and loops; proofs that directly manipulated these distributions may
get unwieldy due to the complexity of these distributions'
structures. Fortunately, \aprhl applies ``approximate liftings'' to
significantly simplify such proofs over distributions.

An approximate lifting is characterized by a relation $R$ between two
distributions' support: given $\mu_A$ a distribution in $\distr\,A$ and $\mu_B$
in $\distr\, B$, the relation $R$ is a subset of $A \times B$. This approximate
lifting based on $R$ allows us to only consider elements linked by $R$ while
proving relational properties on $\mu_A$ and $\mu_B$, and \aprhl applies this
abstraction to simplify proofs of probabilistic programs.

An \aprhl judgment has the form
$\aprhlstmt{c_1}{(\epsilon, \delta)}{c_2}{\Phi}{\Psi}$. The assertions $\Phi$
and $\Psi$ are relations over program states. The pre-condition $\Phi$ is a
deterministic relation that the input program states to $c_1$ and $c_2$
satisfies. Since $c_1$ and $c_2$ are probabilistic programs, the output program
states after running $c_1$ and $c_2$ are distributions of program states, and
the post-condition is used to construct an approximate lifting between the two
output program state distributions. The validity of an \aprhl judgment implies
a valid approximate lifting of the post-condition $\Psi$ over $\denote{c_1} M_1$
and $\denote{c_2} M_2$. The relations $\Phi$ and $\Psi$ may describe properties
as $x\ag{1} = x\ag{2}$---the value held in $x$ is the same in both executions.

An approximate lifting of $R \subseteq A_1 \times A_2$ that relates two
distributions $\mu_1 : \distr\, A_1$ and $\mu_2 : \distr\, A_2$ is justified by
two witness distributions and two cost parameters $(\epsilon, \delta)$. Let
$\star$ be a distinct element not in $A_1 \cup A_2$, and write the extended
relation $R^\star = R \cup A_1 \times \{\star\} \cup \{\star\} \times A_2$. Let
$A_1^\star$ and $A_2^\star$ be $A_1$ and $A_2$ extended with $\star$,
respectively. We define the $\epsilon$-distance between $\mu_L$ and $\mu_R$ as
following:
$$d_\epsilon(\mu_L, \mu_R) = \max\limits_{S \subseteq A_1^\star \times
A_2^\star}(\mu_L(S) - \exp(\epsilon)\cdot\mu_R(S))$$ Then the two witness
distributions $\mu_L$ and $\mu_R$ should satisfy the following
properties~\cite{DBLP:journals/corr/abs-1710-09951}:

\begin{tabular}{@{}rlr@{}}
1. & $\pi_1(\mu_L) = \mu_1$ and $\pi_2(\mu_R) = \mu_2$ & \hfill \textsc{Marginal}\\
2. & $\mathtt{supp}(\mu_L) \cup \mathtt{supp}(\mu_R) \subseteq R^\star$ & \hfill \textsc{Support}\\
3. & $d_\epsilon(\mu_L, \mu_R) \leq \delta$ & \hfill \textsc{Distance}
\end{tabular}

In the \textsc{Marginal} condition, $\pi_1$ and $\pi_2$ are first and second
projections of distributions that produces the first and second marginal
distributions on $A_1^\star \times A_2^\star$, this condition requires the
witnesses to ``cover'' the original $\mu_1$ and $\mu_2$ on their respective
projections.

The \textsc{Support} condition requires support of both witnesses to reside
within the relation $R^\star$.

The \textsc{Distance} condition gives bounds on the $\exp(\epsilon)$ multiplicative
difference and the $\delta$ additive difference between these two witness
distributions, and this distance definition matches up with the definition of
privacy costs used in differential privacy.

We can gain some intuition of approximate liftings
from~\Cref{fig:coupling}. Here we visualize an approximate lifting of the
equality relation over two Laplace distributions. The validity of the
approximate lifting gives a global bound on the difference in probability of the
linked elements in this plot, and this bound can be naturally interpreted as
privacy costs for differential privacy.
\begin{figure}[t]
\includegraphics[width=0.5\textwidth]{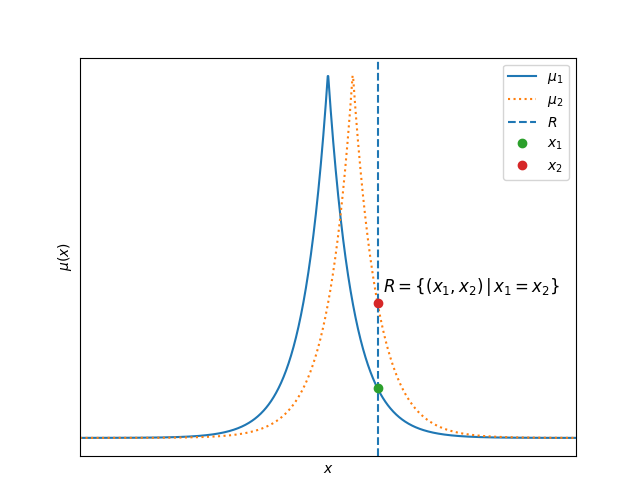}
\caption{Approximate lifting of $R$ between two Laplace distributions}
\label{fig:coupling}
\end{figure}

The proof rules from \aprhl constructs these witness distributions for
probabilistic imperative programs. In particular, the \textsc{Seq} rule allows
us to treat the approximate lifting of a post-condition as the pre-condition of
the next command. The resulting proof system effectively abstracts away explicit
reasoning of distributions. The relational assertions of \aprhl allows us to
naturally express predicates over pairs of program states, and approximate
liftings cleverly hides much of the plumbing for probabilistic reasoning. We have
listed a subset of \aprhl proof rules used in the development of \Fuzzi
in \Cref{apRHL:part1} and \Cref{apRHL:part2}.

The structural \aprhl proof rules relate programs that are structurally similar:
assignments are related with assignments, conditionals are related with
conditionals by relating their corresponding true and false branches, and loops
are related by synchronizing their loop bodies. Among the structural rules,
the \textsc{Lap} rule formalizes the Laplace mechanism---the pre-condition
states that a value to be released has sensitivity $k$, and releasing this
private value with noise scaled with $1/\epsilon$ is $(k\epsilon, 0)$-DP.

An important quirk of \aprhl is that it does \emph{not} have a conjunction rule
like this:
\begin{mathparsmall}
\inferrule[]
          {\aprhlstmt{c_1}{(\epsilon, \delta)}{c_2}{\Phi}{\Psi} \\\\
           \aprhlstmt{c_1}{(\epsilon, \delta)}{c_2}{\Phi}{\Theta}}
          {\aprhlstmt{c_1}{(\epsilon, \delta)}{c_2}{\Phi}{\Psi \land \Theta}}
\end{mathparsmall}%
Since an \aprhl judgement is justified by the existence of witness distributions
for the approximate lifting of the post-condition, the two judgements above the
inference bar tells us there exists witnesses separately justifying the
approximate liftings of $\Psi$ and $\Theta$. However, this does not guarantee
the existence of an approximate lifting for their conjunction. Thus, this rule
is not valid.
However, \aprhl does have a \textsc{Frame} rule, which allows us to conjunct
$\Theta$ with the pre- and post-conditions, as long as $\Theta$ does not mention
any modified variables of the two related programs.

Sometimes structural \aprhl rules are not enough, since sensitive values may
steer program control flow towards different sequences of code. The authors
of \aprhl account for reasoning of these programs through one-sided proof rules,
as shown in \Cref{apRHL:part2}.

The rules \textsc{While-L} and \textsc{While-R} deserve some special attention:
it allows us to relate a while loop with the skip command using a one-sided loop
invariant. This allows us to carry out standard Floyd-Hoare logic reasoning on a
single while loop. We will use these one-sided loop rules extensively later in
the proof of extension typing rules. These one-sided loop rules require a
side-condition of {lossless}-ness for the loop body---a program $c$ is lossless
if executing $c$ results in a proper distribution, i.e., the probability
distribution sums up to $1$. Although \aprhl gives the definition of lossless,
it does not provide proof rules for lossless-ness. In the development of \Fuzzi,
we developed a termination type system compatible with this definition of
lossless. Details can be found in \Cref{ap:aux-checker}.

\begin{figure*}[t]
\begin{mathparsmall}
\inferrule[Skip]
          { } {\aprhlstmt{\cmdskip}{(0, 0)}{\cmdskip}{\Phi}{\Phi}}

\inferrule[Assn]
          { }
          {\aprhlstmt
           {x_1\ag{1} = e_1\ag{1}}
           {(0, 0)}
           {x_2\ag{2} = e_2\ag{2}}
           {\Phi[e_1\ag{1}, e_2\ag{2}/x_1\ag{1}, x_2\ag{2}]}{\Phi}}

\inferrule[Lap]
          {\Phi \triangleq |e_1\ag{1} - e_2\ag{2}| \leq k}
          {\aprhlstmt
                {x_1\ag{1} = \laplace{1/\epsilon}{e_1\ag{1}}}
                {(k\epsilon, 0)}
                {x_2\ag{2} = \laplace{1/\epsilon}{e_2\ag{2}}}
                {\Phi}
                {x\ag{1} = x\ag{2}}
          }

\inferrule[Seq]
          {\aprhlstmt{c_1}{(\epsilon, \delta)}{c_2}{\Phi}{\Psi} \quad
           \aprhlstmt{c_1'}{(\epsilon', \delta')}{c_2'}{\Psi}{\Theta}}
          {\aprhlstmt{c_1;c_1'}{(\epsilon + \epsilon', \delta + \delta')}{c_2;c_2'}{\Phi}{\Theta}}

\inferrule[Cond]
          {\vDash \Phi \Rightarrow e_1\ag{1} = e_2\ag{2} \\\\
           \aprhlstmt{c_1}{(\epsilon, \delta)}{c_2}{\Phi \land e_1\ag{1}}{\Psi} \\\\
           \aprhlstmt{c_1'}{(\epsilon, \delta)}{c_2'}{\Phi \land \neg e_1\ag{1}}{\Psi}}
          {\aprhlstmt{\cmdif{e_1}{c_1}{c_1'}}
                     {(\epsilon, \delta)}{\cmdif{e_2}{c_2}{c_2'} }{\Phi}{\Psi}}
%

\inferrule[While*]
          {\vDash \Phi \Rightarrow e_1\ag{1} = e_2\ag{2} \\\\
           \aprhlstmt{c_1}{(0, 0)}{c_2}{\Phi\land e_1\ag{1}}{\Phi}
          }
          {\aprhlstmt{\cmdwhile{e_1}{c_1}}{(0, 0)}{\cmdwhile{e_2}{c_2}}{\Phi}{\Phi \land \neg e_1\ag{1}}}

\inferrule[Conseq]
          {\aprhlstmt{c_1}{(\epsilon', \delta')}{c_2}{\Phi'}{\Psi'} \\\\
           \vDash \Phi \Rightarrow \Phi' \quad
           \vDash \Psi' \Rightarrow \Psi \quad
           \vDash \epsilon' \leq \epsilon \quad
           \vDash \delta' \leq \delta}
          {\aprhlstmt{c_1}{(\epsilon, \delta)}{c_2}{\Phi}{\Psi}}

\inferrule[Equiv]
          {\aprhlstmt{c_1'}{(\epsilon, \delta)}{c_2'}{\Phi}{\Psi} \quad
           c_1 \equiv c_1' \quad
           c_2 \equiv c_2'}
          {\aprhlstmt{c_1}{(\epsilon, \delta)}{c_2}{\Phi}{\Psi}}

\inferrule[Frame]
          {\aprhlstmt{c_1}{(\epsilon, \delta)}{c_2}{\Phi}{\Psi}
           \quad \fv\, \Theta \cap \mv\, (c_1, c_2) = \varnothing
          }
          {\aprhlstmt{c_1}{(\epsilon, \delta)}{c_2}{\Phi \land \Theta}{\Psi \land \Theta}}

\end{mathparsmall}
\caption{Proof rules for \aprhl (Structural Rules)}
\label{apRHL:part1}
\end{figure*}

\begin{figure*}[t]
\begin{mathparsmall}
\inferrule[While-L]
          {\aprhlstmt{c_1}{(0, 0)}{\cmdskip}{\Phi \land e_1\ag{1}}{\Phi} \\\\
           \vDash \Phi \Rightarrow \Phi_1\ag{1} \quad
           \Phi_1\ag{1} \vDash \cmdwhile{e_1}{c_1}\,\,\text{lossless}}
          {\aprhlstmt{\cmdwhile{e_1}{c_1}}{(0, 0)}{\cmdskip}{\Phi}{\Phi\land \neg e_1\ag{1}}}

\inferrule[While-R]
          {\aprhlstmt{\cmdskip}{(0, 0)}{c_2}{\Phi \land e_2\ag{2}}{\Phi} \\\\
           \vDash \Phi \Rightarrow \Phi_2\ag{2} \quad
           \Phi_2\ag{2} \vDash \cmdwhile{e_2}{c_2}\,\,\text{lossless}}
          {\aprhlstmt{\cmdskip}{(0, 0)}{\cmdwhile{e_2}{c_2}}{\Phi}{\Phi\land \neg e_2\ag{2}}}

\inferrule[Assn-L]
          { }
          {\aprhlstmt{x_1 = e_1}{(0, 0)}{\cmdskip}{\Phi[e_1\ag{1}/x_1\ag{1}]}{\Phi}}

\inferrule[Assn-R]
          { }
          {\aprhlstmt{\cmdskip}{(0, 0)}{x_2 = e_2}{\Phi[e_2\ag{2}/x_2\ag{2}]}{\Phi}}

\inferrule[Cond-L]
          {\aprhlstmt{c_1}{(\epsilon, \delta)}{c}{\Phi\land e_1\ag{1}}{\Psi}
           \qquad \aprhlstmt{c_1'}{(\epsilon, \delta)}{c}{\Phi\land \neg e_1\ag{1}}{\Psi}}
          {\aprhlstmt{\cmdif{e_1}{c_1}{c_1'}}{(\epsilon, \delta)}{c}{\Phi}{\Psi}}

\inferrule[Cond-R]
          {\aprhlstmt{c}{(\epsilon, \delta)}{c_2}{\Phi\land e_2\ag{2}}{\Psi}
           \qquad \aprhlstmt{c}{(\epsilon, \delta)}{c_2'}{\Phi\land \neg e_2\ag{2}}{\Psi}}
          {\aprhlstmt{c}{(\epsilon, \delta)}{\cmdif{e_2}{c_2}{c_2'}}{\Phi}{\Psi}}
\end{mathparsmall}
\caption{Proof rules for apRHL (One-sided Rules)}
\label{apRHL:part2}
\end{figure*}

\section{Rules for $\mathtt{term}$ and $\mathtt{linear}$}
\label{ap:aux-checker}
\begin{figure}[t]
\begin{mathparsmall}
\inferrule{x \in \Gamma}{\term{\Gamma}{x}}

\inferrule{\mathit{lit} \in \tyint \lor \mathit{lit} \in \tyfloat \lor \mathit{lit} \in \tybool}{\term{\Gamma}{\mathit{lit}}}

\inferrule{\term{\Gamma}{e_1} \qquad \term{\Gamma}{e_2}}{\term{\Gamma}{e_1\,\mathtt{op}\,e_2}}

\inferrule{\term{\Gamma}{e}}{\term{\Gamma}{e.\mathit{length}}}

\inferrule{ }{\term{\Gamma}{\cmdskip}}

\inferrule{\term{\Gamma}{e}}{\term{\Gamma}{x = e}}

\inferrule{\term{\Gamma}{c_1}\qquad \term{\Gamma}{c_2}}{\term{\Gamma}{c_1; c_2}}

\inferrule{\term{\Gamma}{e} \qquad \term{\Gamma}{c_1}\qquad \term{\Gamma}{c_2}}{\term{\Gamma}{\cmdif{e}{c_1}{c_2}}}

\inferrule{\term{\Gamma}{c}}
          {\term{\Gamma}{\cmdbmap{\mathit{in}}{\mathit{out}}{\mathit{t_{in}}}{i}{\mathit{t_{out}}}{c}}}

\inferrule{\term{\Gamma}{c}}
          {\term{\Gamma}{\cmdamap{\mathit{in}}{\mathit{out}}{\mathit{t_{in}}}{i}{\mathit{t_{out}}}{c}}}

\inferrule{\term{\Gamma}{c}
           \qquad \term{\Gamma}{\mathit{nParts}}}
          {\term{\Gamma}
                {\bm{\mathtt{partition}}
                (\mathit{in},\mathit{out},\mathit{t_{in}},i,\mathit{t_{out}},\mathit{t_{idx}},\mathit{out_{idx}},\mathit{t_{part}},\mathit{nParts},c)}
          }

\inferrule{ }{\term{\Gamma}{\bm{\mathtt{bsum}}(\mathit{in},\mathit{out},i,\mathit{t_{in}},\mathit{bound})}}
\end{mathparsmall}
\caption{Rules for $\mathtt{term}$}
\end{figure}

\begin{figure}[t]
\begin{mathparsmall}
\inferrule{ }{\tytriple{\Gamma}{\cmdskip}{\Gamma, (0, 0)}\,\linear}

\inferrule{\hastype{\Gamma}{e}{\sigma}{\tau}}
          {\tytriple{\Gamma}{x = e}{\Gamma[x \mapsto \sigma], (0, 0)}\,\linear}

\inferrule{\tytriple{\Gamma_1}{c_1}{\Gamma_2, (0, 0)}\,\linear
           \qquad \tytriple{\Gamma_2}{c_2}{\Gamma_3, (0, 0)}\,\linear}
          {\tytriple{\Gamma_1}{c_1;c_2}{\Gamma_3, (0, 0)}\,\linear}

\inferrule{\tytriple{\Gamma}{c_1}{\Gamma_1, (0, 0)}\,\linear
           \qquad \tytriple{\Gamma}{c_2}{\Gamma_2, (0, 0)}\,\linear
           \qquad \hastype{\Gamma}{e}{0}{\tybool}}
          {\tytriple{\Gamma}{\cmdif{e}{c_1}{c_2}}{\mathtt{max}(\Gamma_1, \Gamma_2), (0, 0)}\,\linear}

\mprset{flushleft}
\inferrule{\tinytext{Termination, Deterministic}
           \\\\ \term{\Gamma}{c}
           \qquad \determ\; c
           \\\\ \tinytext{Should Not Modify}
           \\\\ \mathit{t_{in}}, \mathit{in},\mathit{out},\mathit{i} \notin \mv\; c
           \\\\ \tinytext{Abbreviation}
           \\\\ \sigma\hspace{0.3em} = [\mv\, c, i, in, out \mapsto \infty]
           \\\\ \sigma' = [\mv\,c, i, \mathit{t_{in}}, \mathit{t_{out}} \mapsto \infty]
           \\\\ \tinytext{Dependency}
           \\\\ \tytriple
           {\mathtt{stretch}\;\Gamma[\mathit{t_{in}} \mapsto \hspace{0.38em}0]\sigma}
           {c}
           {\Gamma_1, (0, 0)}
           \\\\ \tytriple{\mathtt{stretch}\;\Gamma[\mathit{t_{in} \mapsto \infty}]\sigma}
                 {c}
                 {\Gamma_2, (0, 0)}
           \\\\ \Gamma_1(t_\mathit{out}) = 0
           \\\\ \{x \mathrel{|} x\in\mv\; c \land \Gamma_2(x) > 0\} \subseteq \{t_{\mathit{out}}\}
           \\\\ \tinytext{Output Sensitivity}
           \\\\ \Gamma_1(\mathit{t_{out}}) = 0
           \quad \Gamma_{\mathit{out}} =
           \Gamma
           [\mathit{out} \mapsto \Gamma(\mathit{in})]\sigma'
           }
          {\tytriple
            {\Gamma}
            {\cmdbmap{\mathit{in}}{\mathit{out}}%
                     {t_\mathit{in}}{i}{t_\mathit{out}}{c}}
            {\Gamma_{\mathit{out}}, (0, 0)}\,\linear}

\inferrule{\tinytext{Termination, Deterministic}
           \\\\ \term{\Gamma}{c}
           \qquad \determ\, c
           \\\\ \tinytext{Should Not Modify}
           \\\\ \mathit{t_{in}}, \mathit{in},\mathit{out},\mathit{i} \notin \mv\, c
           \\\\ \tinytext{Abbreviation}
           \\\\ \sigma\hspace{0.3em} = [\mv\, c, i, in, out \mapsto \infty]
           \\\\ \sigma' = [\mv\, c, i, \mathit{t_{in}}, \mathit{t_{out}} \mapsto \infty]
           \\\\ \tinytext{Dependency}
           \\\\ \tytriple
           {\mathtt{stretch}\,\Gamma[\mathit{t_{in}} \mapsto \hspace{0.38em}0]\sigma}
           {c}
           {\Gamma_1, (0, 0)}
           \\\\ \tytriple{\mathtt{stretch}\,\Gamma[\mathit{t_{in}} \mapsto \infty]\sigma}
                 {c}
                 {\Gamma_2, (0, 0)}
           \\\\ \Gamma_1(t_\mathit{out}) = 0
           \\\\ \{x | x\in\mv\, c \land \Gamma_2(x) > 0\} \subseteq \{t_{\mathit{out}}\}
           \\\\ \tinytext{Linear}
           \\\\ \tytriple{\Gamma[\mathit{t_{in}}\mapsto 1]\sigma}
                 {c}
                 {\Gamma_3, (0, 0)}\,\linear
           \\\\ \tinytext{Output Sensitivity}
           \\\\ \Gamma_{\mathit{out}} =
           \Gamma[\mathit{out} \mapsto \Gamma(in) \cdot \Gamma_3(\mathit{t_{out}})]\sigma'
           }
          {\tytriple
            {\Gamma}
            {\cmdamap{\mathit{in}}{\mathit{out}}{t_\mathit{in}}{i}{t_\mathit{out}}{c}}
            {\Gamma_{\mathit{out}}, (0, 0)}\,\linear}
\end{mathparsmall}
\caption{Rules for $\linear$ (Part 1)}
\end{figure}

\begin{figure}[t]
\begin{mathparsmall}
\mprset{flushleft}
\inferrule{
           \tinytext{Termination, Deterministic}
           \\\\ \term{\Gamma}{c} \quad \determ\, c
           \\\\ \tinytext{Should Not Modify}
           \\\\ \mathit{t_{in}}, \mathit{in}, \mathit{out}, i, \mathit{out_{idx}} \notin \mv\, c
           \\\\ \tinytext{Abbreviation}
           \\\\ \sigma\hspace{0.3em} = [\mv\, c, i, in, \mathit{out_{idx}} \mapsto \infty]
           \\\\ \sigma' = [\mv\, c, i,
                           \mathit{t_{in}}, \mathit{t_{idx}}, \mathit{out_{idx}},
                           \mathit{t_{part}} \mapsto \infty]
           \\\\ \tinytext{Number of Partitions Non-Sensitive}
           \\\\ \hastype{\Gamma}{\mathit{nParts}}{0}{\tyint}
           \\\\ \fv\, \mathit{nParts} \cap \mv\, \mathtt{c} = \varnothing
           \\\\ i, \mathit{t_{in}}, \mathit{t_{idx}}, \mathit{out_{idx}}, \mathit{t_{part}} \notin \fv\,\mathit{nParts}
           \\\\ \tinytext{Dependency}
           \\\\
           \tytriple
           {\mathtt{stretch}\,\Gamma[\mathit{t_{in}} \mapsto \hspace{0.38em}0]\sigma}
           {c}
           {\Gamma_1, 0}
           \\\\ \tytriple{\mathtt{stretch}\,\Gamma[\mathit{t_{in}} \mapsto \infty]\sigma}
                 {c}
                 {\Gamma_2, 0}
           \\\\ \Gamma_1(\mathit{t_{out}}) = 0
           \\\\ \{x | x\in\mv\, c \land \Gamma_2(x) > 0\} \subseteq \{t_{\mathit{out}}\}
           \\\\ \tinytext{Output Sensitivity}
           \\\\ \Gamma_{\mathit{out}} =
           \Gamma[\mathit{out} \mapsto \Gamma(\mathit{in})]
                 \sigma'
          }
          {\tytriple
            {\Gamma}
            {\bm{\mathtt{partition}}
              (\mathit{in}, \mathit{out},
              t_\mathit{in}, i, \mathit{t_{out}}, \mathit{t_{idx}}, \mathit{out_{idx}},
              \mathit{t_{part}}, \mathit{nParts}, c)}
            {\Gamma_{\mathit{out}}, (0, 0)}\,\linear}

\mprset{center}
\inferrule{\literal\,\mathit{bound} \\\\ \mathit{bound} \geq 0\\\\
           \phi = \Gamma(\mathit{in}) \quad \Gamma_\mathit{out}
           = \Gamma[\mathit{out} \mapsto \phi\cdot \mathit{bound}][i, \mathit{t_{in}} \mapsto \infty]
          }
          {\tytriple
            {\Gamma}
            {\bm{\mathtt{bsum}} (\mathit{in}, \mathit{out}, i, t_\mathit{in}, \mathit{bound})}
            {\Gamma_\mathit{out}, (0, 0)}
          }
\end{mathparsmall}
\caption{Rules for $\linear$ (Part 2)}
\end{figure}

\subsection{Justifying $\mathtt{term}$ rules}
In \Cref{sec:extension}, we defined two auxiliary properties of \Fuzzi
programs---$\mathtt{term}$ and $\mathtt{linear}$---in order to typecheck various
extensions. The typing rules for these auxillary properties share a similar
design as the typing rules for sensitivity---we give the definition of each
auxillary property in the base logic $\mathcal{L}$ and give the typing rules as
theorems to be justified in $\mathcal{L}$. However, for expressions, we again
use inductive relations instead of foundational definition since we do not plan
on extending the rules for expressions.

\begin{defn}
\label{defn:well-shaped}
The well-shaped judgment $M \in \shape(\Gamma)$ for memories is defined as:
for any $x \in_\sigma \tau$ in $\Gamma$, there exists some $v$, such that $M(x)
= v$ and $v \in \tau$.
\end{defn}

\begin{lem}[Termination for expressions]
For an expression $e$, given the termination judgment $\term{\Gamma}{e}$, then
for any program state $M \in \shape(\Gamma)$, evaluating $\denote{e}\,M$ results
in some value $v$.
\end{lem}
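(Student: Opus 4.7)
The plan is to prove this by straightforward structural induction on the derivation of $\term{\Gamma}{e}$. Because the termination rules for expressions are syntax-directed and form an inductive relation (one rule per expression shape), this is equivalent to induction on $e$. The crucial design feature to exploit is that the termination rules deliberately omit a rule for the index expression $e[i]$: out-of-bound indexing is precisely the form of partiality that cannot be ruled out just from shape information about $M$, so it is excluded by construction from the $\mathtt{term}$ judgment. This means every case I must handle corresponds to an expression form whose partiality is either trivially absent or attributable only to type incompatibility, which shape-checking rules out.

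For the variable case, the premise $x \in \Gamma$ combined with $M \in \shape(\Gamma)$ directly gives $M(x) = v$ for some $v \in \tau$ by \Cref{defn:well-shaped}, and $\denote{x}M = M(x) = v$. For the literal case, evaluation is definitional and returns the literal itself. For the length case $e.\mathit{length}$, the induction hypothesis yields a value $v$ for $\denote{e}M$, and since $v$ is an array value (implied by well-shapedness of the subexpression), its length is a natural number.

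The only case with any content is the binary operation case $e_1\,\mathtt{op}\,e_2$. Here the induction hypothesis supplies values $v_1$ and $v_2$, and what remains is to check that $v_1 \denote{\mathtt{op}} v_2$ is defined rather than $\bot$. Partiality here can only come from type incompatibility between $v_1$ and $v_2$ (e.g.\ adding a boolean to an integer). This case is handled by appealing to shape checking: under the implicit assumption that $e$ is also shape-checked under $\shape(\Gamma)$, the operands of $\mathtt{op}$ have the types required by the operator, so the operator is defined on them.

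The main obstacle is that, as stated, the lemma relies on shape-well-typedness of $e$ to exclude type-mismatch partiality, and this assumption is not spelled out in the judgment $\term{\Gamma}{e}$ itself. In practice the expression is assumed to also pass shape checking under $\shape(\Gamma)$ (this is the usual convention of eliding shape checks throughout the paper), so the cleanest way to write the proof is either to carry this shape-checking premise as an additional hypothesis or to fold it into the definition of $\term{\Gamma}{e}$; either way, the induction itself is routine.
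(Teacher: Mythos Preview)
Your proposal is correct and matches the paper's intended approach: the paper elides the proof entirely, treating it as a straightforward structural induction on the derivation (in the same spirit as the earlier expression lemmas where it says ``We elide proofs by straightforward induction''). Your observation that the $\mathtt{term}$ rules deliberately omit $e[i]$, together with the implicit shape-checking assumption needed for the $\mathtt{op}$ case, is exactly the content that makes the induction go through; the paper simply does not spell any of this out.
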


\begin{defn}[Termination for commands]
For a command $c$, the judgment $\term{\Gamma}{c}$ is defined as: given any
program state $M \in \shape(\Gamma)$, evaluating $\denote{c}\,M$ results in a
proper distribution.
\end{defn}

In order to prove the soundness of termination rules for commands, we need a
lemma of the ``preservation'' property for well-shaped commands.

\begin{lem}[Preservation]
\label{lem:preservation}
If $M \in \shape(\Gamma)$ and $c$ is well-shaped according to $\Gamma$, then
for any $M' \in \mathtt{supp}(\denote{c}\, M)$, the program state
$M' \in \shape(\Gamma)$.
\end{lem}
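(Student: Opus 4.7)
The plan is to proceed by structural induction on the command $c$, reducing preservation of shape to the corresponding property for each semantic construct defined in \Cref{fig:cmd-sem}. The underlying fact being exploited is that the denotation of every command in \Fuzzi only produces states whose variable bindings come either from the prior state or from the evaluation of a well-shaped expression (or, in the Laplace case, from sampling a value of type $\tyfloat$), and $\shape(\Gamma)$ is pointwise-defined and oblivious to the sensitivity annotations. So at each step I only need to check that the update to the state lands in the same shape class the variable already had.

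First I would prove a small expression lemma: if $e$ has shape type $\tau$ under $\shape(\Gamma)$ and $M\in\shape(\Gamma)$, then whenever $\denote{e}M$ is defined its value lies in $\tau$. This is a routine induction on $e$, using that arithmetic, boolean operations, indexing and $.length$ each return values of the expected type when their subexpressions do. Then the base cases of the main induction are handled directly. For $x = e$, the new state is $M[x\mapsto v]$ with $v\in\tau$ where $\tau$ is the shape of $x$, so $M[x\mapsto v]\in\shape(\Gamma)$. For $x[i]=e$ and $x.length=e$, well-shapedness of $c$ guarantees the RHS expression has the element type (resp.\ $\tyint$); since $\mathtt{update}$ and $\mathtt{resize}$ return lists whose element type matches the original array's, shape is preserved. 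For $x = \laplace{b}{e}$, the support consists of states $M[x\mapsto v]$ with $v\in\mathbb{R}$, and $x$ must have shape $\tyfloat$, so we are done. The $\cmdskip$ case is immediate.

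For the compound cases, sequencing $c_1;c_2$ unfolds as $\mathtt{bind}$: any $M'$ in the support comes from some $M_1\in\mathtt{supp}(\denote{c_1}M)$ followed by $M'\in\mathtt{supp}(\denote{c_2}M_1)$, and two applications of the induction hypothesis give $M'\in\shape(\Gamma)$. The conditional case evaluates $e$ to a boolean (by the expression lemma) and then applies the IH to the chosen branch. The one genuinely tricky case is the while loop, since the semantics is defined as a supremum $\bigsqcup_n \denote{\cmdwhile{e}{c}}_n$ of bounded unrollings.

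The main obstacle is precisely handling this supremum. My plan is to show that $\mathtt{supp}(\bigsqcup_n \mu_n)=\bigcup_n \mathtt{supp}(\mu_n)$ for the monotone chain of unrollings, and then prove the claim separately for each finite unrolling $\denote{\cmdwhile{e}{c}}_n$ by an inner induction on $n$. For $n=0$ the distribution is $\mathtt{distr}_0$, whose support is empty, so there is nothing to prove. For $n > 0$, either the guard evaluates to false and $M'=M\in\shape(\Gamma)$, or it evaluates to true, in which case $M'$ arises from first running $c$---the outer structural IH gives us an intermediate $M_1\in\shape(\Gamma)$---and then running the $(n-1)$-unrolling from $M_1$, which by the inner IH leaves us in $\shape(\Gamma)$. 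Combining the two inductions and the support-of-supremum fact closes the proof.
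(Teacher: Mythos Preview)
The paper does not actually supply a proof of this lemma: it is stated and then immediately used, in the same spirit as the other ``routine induction'' lemmas the paper elides earlier. Your structural induction on $c$, with an auxiliary shape-preservation lemma for expressions and an inner induction on the unrolling index for the while case, is exactly the natural argument one would expect the authors had in mind, and it is correct as written.
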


\Cref{lem:preservation} tells us that well-shaped commands do not change the
data type of variables. In the sequence case, we know that both commands $c_1$
and $c_2$ terminates if executed under program states in $\shape(\Gamma)$. With
the preservation property, we know all variables in the resulting program state
after $c_1$ still have the same data type, so executing $c_2$ in the resulting
program state still terminates.

For extensions, if the extension takes a command as an argument, then the
termination of the expanded program certainly relies on the termination of the
argument. In the case of bag map and vector map, the while loop executes the
supplied map body $c$ for a finite number of iterations---the length of the
input bag or vector. Array accesses within the while loops are safe because the
index is bounded by the array length. So, requiring $c$ to terminate makes the
entire loop terminate.

For partition, the first while loop terminates because it's bounded by the value
of the supplied $\mathit{nParts}$ expression. The following bag map terminates
due to the same arguments as above. The next while loop also terminates, because
the loop is bounded by the number of entries in $\mathit{out_{idx}}$, and the
index variable $i$ is in range; the modification to $t_\mathit{part}$ also
uses an index expression that's in range.

The bag sum extension terminates since its loop is bounded by the input bag's
size, and the index variable $i$ is in range when accessing $\mathit{in}$.

\subsection{Justifying $\linear$ rules}
We will discuss the first 4 linear rules here and delay the linear rules for
extensions to \Cref{ap:bag-map-proof} because the linear property of extensions
are intimately tied to the proof of their sensitivity properties.

Recall \Cref{defn:linear-commands}: a deterministic and terminating command $c$
is linear with respect to $\Gamma_1$ and $\Gamma_2$, if for any $k \geq 0$, the
scaled typing judgment $\tytriple{k\Gamma_1}{c}{k\Gamma_2, (0, 0)}$ is true.

The rule for $\cmdskip$ is true because for any fixed $k$, we can show
$\tytriple{k\Gamma}{\cmdskip}{k\Gamma, (0, 0)}$ by reusing the core typing rule
for $\cmdskip$.

For assignment, we first prove a lemma that shows the typing relation on
expressions is linear.
\begin{lem}
Given $\hastype{\Gamma}{e}{\sigma}{\tau}$ and any $k > 0$, the judgment
$\hastype{k\Gamma}{e}{k\sigma}{\tau}$ is also true.
\end{lem}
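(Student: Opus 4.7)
The plan is to prove the lemma by induction on the derivation of the expression typing judgment $\hastype{\Gamma}{e}{\sigma}{\tau}$, using the arithmetic conventions $k \cdot \infty = \infty$ for $k > 0$ and $0 \cdot \infty = 0$ already introduced in \Cref{defn:linear-commands}. Note that we fix an arbitrary $k > 0$ at the start, so all cases need only handle strictly positive scale factors.

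For the leaf cases, the variable rule gives $\sigma = \Gamma(x)$ and $(k\Gamma)(x) = k\Gamma(x) = k\sigma$, so the judgment rederives immediately; literals always carry sensitivity $0$, and $k \cdot 0 = 0$ regardless of $k$. For the structural cases \textsc{Plus} and \textsc{Mult-L-Constant}, the induction hypothesis gives $\hastype{k\Gamma}{e_l}{ks}{\tau}$ and $\hastype{k\Gamma}{e_r}{kt}{\tau}$, and the conclusions follow since $ks + kt = k(s+t)$ and $k'\cdot(kt) = k(k' t)$ (the literal factor $k'$ is unaffected by the scaling). The \textsc{Mult} case requires checking that $\sapprox(ks_1, ks_2) = k\cdot\sapprox(s_1, s_2)$: if $s_1 = s_2 = 0$ then both sides are $0$, and if $s_1 + s_2 > 0$ then since $k > 0$ we have $ks_1 + ks_2 > 0$, so both sides equal $\infty$.

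The index rules require a little care with the $\phi < \infty$ side condition. In \textsc{Vector-Index}, the induction hypothesis yields $\hastype{k\Gamma}{e}{k\phi}{[\tau]}$, and since $\phi < \infty$ and $k$ is a finite positive real, $k\phi < \infty$, so the premise of \textsc{Vector-Index} is satisfied under the scaled context and the conclusion gives sensitivity $k\phi$ as required. For \textsc{Bag-Index}, the premise requires $0$-sensitive $e$ and $i$; since $k \cdot 0 = 0$, the induction hypothesis preserves both premises, and the conclusion's $\infty$ sensitivity matches $k \cdot \infty = \infty$.

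The only real subtlety lies in the $\infty$ arithmetic for \textsc{Mult} and \textsc{Bag-Index}, which is why the convention $k \cdot \infty = \infty$ for $k > 0$ is essential; were $k = 0$ allowed, the \textsc{Bag-Index} case would fail because its conclusion carries sensitivity $\infty$ while $0 \cdot \infty = 0$ by our convention. This is precisely why the statement restricts to $k > 0$, and it also explains why the corresponding linearity claim for commands (in \Cref{defn:linear-commands}) makes the same restriction.
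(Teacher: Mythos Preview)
Your proposal is correct and matches the paper's approach: the paper elides this proof entirely (consistent with its earlier remark that proofs of expression-typing lemmas are ``by straightforward induction''), and your case analysis on the derivation of $\hastype{\Gamma}{e}{\sigma}{\tau}$ is exactly that induction carried out in detail. Your handling of the $\infty$ arithmetic and the $\phi < \infty$ side condition in \textsc{Vector-Index} is the right place to be careful, and your observation about why $k > 0$ is needed is accurate.
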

%
Now, if we scale the pre-condition $\Gamma$ by $k$, we know $e$ has sensitivity
$k\sigma$ under $k\Gamma$. Applying the core assignment rule concludes the
proof.

For sequence of commands, since the two commands to be sequenced together are
linear with $\Gamma_1$, $\Gamma_2$ and $\Gamma_2$, $\Gamma_3$ respectively. We
know that for any $k > 0$ we pick, the typing judgments
$\tytriple{k\Gamma_1}{c}{k\Gamma_2, (0, 0)}$ and
$\tytriple{k\Gamma_2}{c}{k\Gamma_3, (0, 0)}$ hold. Now, applying the core
sequence rule concludes the proof.

In the case of conditional commands, we need another lemma that allows us to
change the post-condition in a linear typing judgment.
\begin{lem}
\label{lem:linear-conseq}
Given $\tytriple{\Gamma_1}{c}{\Gamma_2, (0, 0)}\,\linear$ and
$\Gamma_2 \leq \Gamma_3$, then $\tytriple{\Gamma_1}{c}{\Gamma_3, (0,
0)}\, \linear$ also holds.
\end{lem}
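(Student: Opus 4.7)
The plan is to unfold the definition of $\linear$ and reduce the claim to a weakening argument that is already available through the \textsc{Conseq} rule of \aprhl. By \Cref{defn:linear-commands}, the hypothesis $\tytriple{\Gamma_1}{c}{\Gamma_2,(0,0)}\,\linear$ means that $c$ is deterministic and terminating, and that for every $k > 0$ the scaled typing judgment $\tytriple{k\Gamma_1}{c}{k\Gamma_2,(0,0)}$ is valid. To conclude $\tytriple{\Gamma_1}{c}{\Gamma_3,(0,0)}\,\linear$, I need exactly the same three ingredients with $\Gamma_3$ replacing $\Gamma_2$; determinism and termination of $c$ are unchanged since they are properties of $c$ alone, so only the scaled typing judgment requires work.

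The key observation is that pointwise scaling preserves the pointwise ordering on typing contexts: if $\Gamma_2 \leq \Gamma_3$, then $k\Gamma_2 \leq k\Gamma_3$ for every $k > 0$, using the convention $k\cdot\infty = \infty$ from \Cref{defn:linear-commands}. Next, I would pass from the typing-judgment world to \aprhl via the embedding described in \Cref{sec:fuzzi-aprhl}: the scaled triple $\tytriple{k\Gamma_1}{c}{k\Gamma_2,(0,0)}$ is witnessed by the valid judgment $\aprhlstmt{c}{(0,0)}{c}{\denote{k\Gamma_1}}{\denote{k\Gamma_2}}$. Because the translation of a context into an \aprhl assertion is a conjunction of pointwise distance bounds of the form $d_\tau(x\ag{1}, x\ag{2}) \leq \Gamma(x)$, the inequality $k\Gamma_2 \leq k\Gamma_3$ immediately yields the semantic implication $\denote{k\Gamma_2} \Rightarrow \denote{k\Gamma_3}$.

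Applying the \textsc{Conseq} rule of \aprhl (with $\epsilon' = \epsilon = 0$ and $\delta' = \delta = 0$, and with identity pre-condition) to weaken the post-condition gives $\aprhlstmt{c}{(0,0)}{c}{\denote{k\Gamma_1}}{\denote{k\Gamma_3}}$, which unfolds back to $\tytriple{k\Gamma_1}{c}{k\Gamma_3,(0,0)}$. Since $k$ was arbitrary and the deterministic/terminating side conditions are inherited directly from the hypothesis, we obtain $\tytriple{\Gamma_1}{c}{\Gamma_3,(0,0)}\,\linear$, as required.

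The main obstacle, such as it is, is bookkeeping rather than mathematical: I need to be careful that the ``pointwise $\leq$'' relation on typing contexts is interpreted in the semantically correct direction (larger sensitivity means a weaker assertion, so $\Gamma_2 \leq \Gamma_3$ gives implication $\denote{\Gamma_2} \Rightarrow \denote{\Gamma_3}$, not the reverse), and that the two contexts agree on variable domains and shapes so that the translation and the \textsc{Conseq} application are well-defined. Once those matters are pinned down, the argument is a one-line invocation of \textsc{Conseq} followed by universal generalization over $k$.
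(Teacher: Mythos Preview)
Your proposal is correct and follows essentially the same approach as the paper: unfold the definition of $\linear$, observe that scaling by $k>0$ preserves the pointwise order $\Gamma_2 \leq \Gamma_3$, and apply the \textsc{Conseq} rule of \aprhl to weaken the post-condition from $\denote{k\Gamma_2}$ to $\denote{k\Gamma_3}$. Your write-up is slightly more explicit than the paper's about the direction of the implication and the carry-over of the determinism and termination side conditions, but the mathematical content is identical.
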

\begin{proof}
We need to show that for any $k > 0$, the typing judgement
$\tytriple{k\Gamma_1}{c}{k\Gamma_3, (0, 0)}$ is true.

From the premise, we know that $\tytriple{k\Gamma_1}{c}{k\Gamma_2, (0, 0)}$ is
true. Since $\Gamma_2 \leq \Gamma_3$, scaling both by $k$ preserves the
pointwise order $k\Gamma_2 \leq k\Gamma_3$. We are only weakening the
post-condition here, so applying the \textsc{Conseq} rule from \aprhl concludes
this proof.
\end{proof}
Using \Cref{lem:linear-conseq}, we know both branches are linear with respect to
$\Gamma$ and $\mathtt{max}(\Gamma_1, \Gamma_2)$. Then, since $e$ is
$0$-sensitive under $\Gamma$, this allows us to apply the \textsc{Cond} rule
from \aprhl concludes this case.

\section{Soundness Proofs for Extensions}
\label{ap:proofs}

\subsection{Bag-Map}
\label{ap:bag-map-proof}
To prove \textsc{Bag-Map}'s typing rule is sound, we need to show the \aprhl
judgement corresponding to the conclusion is true. This \aprhl judgement relates
two instances of the bag map program
$$\cmdbmap{{in}}{{out}}{{t_{in}}}{i}{{t_{out}}}{c}
\sim
\cmdbmap{{in}}{{out}}{{t_{in}}}{i}{{t_{out}}}{c}$$
Our general strategy for proving the soundness of these extensions is to first
prove some specification $f$ in the logic $\mathcal{L}$ specifies the
computation of the expanded extension code and then separately reason about the
sensitivity of the output from the specification $f$.

Our first step is to apply the \textsc{Equiv} rule and rewrite the pair of
related programs with an extra $\cmdskip$:
$$\cmdbmap{{in}}{{out}}{{t_{in}}}{i}{{t_{out}}}{c}; \cmdskip
\sim
 \cmdskip; \cmdbmap{{in}}{{out}}{{t_{in}}}{i}{{t_{out}}}{c}$$
\noindent And apply \textsc{Seq} followed by \textsc{While-L} and \textsc{While-R}
rule to perform one-sided reasoning. Since the two one-sided cases are
symmetric, we only discuss the case for
$\cmdbmap{{in}}{{out}}{{t_{in}}}{i}{{t_{out}}}{c} \sim \cmdskip$.

Since our goal is to give a specification of bag map, a natural choice would be
to model bag map with the $\mathtt{map}$ operator over lists, which applies a
function over each value in a bag and returns a new bag. However, we need to
find a function $f$ that adequately describes the semantics of $c$ as an argument to
$\mathtt{map}$.

We know, from the typing rule of \lstinline|bmap|, that $c$ is a deterministic
and terminating command. Here, we state an important lemma about deterministic
and terminating commands that will help us find such a function $f$.
\begin{lem}[Semantics of Deterministic Terminating Programs]
\label{lem:determ-command}
Given $\determ\, c$ and $\term{\Gamma}{c}$, then there exists a total function
$f_{\denote{c}} : \mathbb{M} \rightarrow \mathbb{M}$ such that $\denote{c}M
= \mathtt{ret}\, (f_{\denote{c}}\, M)$ for any program state $M$ in
$\shape(\Gamma)$.
\end{lem}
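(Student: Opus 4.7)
The plan is to combine the two hypotheses pointwise: termination guarantees that $\denote{c}M$ is a \emph{proper} (total mass one) sub-distribution on well-shaped inputs, and determinism guarantees that whatever mass is present is concentrated on a single state. Together these force $\denote{c}M$ to be a Dirac distribution, and we simply pick out its unique support point to define $f_{\denote{c}}(M)$.

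Concretely, I would first establish a ``point-mass'' characterization: for any $M \in \shape(\Gamma)$, there is a unique $M'$ with $\denote{c}M(M') = 1$ and $\denote{c}M(M'') = 0$ for all $M'' \neq M'$. The existence of such an $M'$ comes from $\term{\Gamma}{c}$, which says $\sum_{M''} \denote{c}M(M'') = 1$, and uniqueness comes from $\determ\, c$, which intuitively says the support of $\denote{c}M$ is a singleton. Formally, $\determ$ should be defined by induction on the command syntax so that it rules out only the Laplace sampling command; with that definition, a straightforward induction over $c$ shows that $\denote{c}M$ is always a Dirac distribution on well-shaped inputs, using \Cref{lem:preservation} in the sequencing case to keep the intermediate state well-shaped and termination in the \lstinline|while| case to lift the point-mass property through the semantic fixed point $\bigsqcup_n \denote{\cmdwhile{e}{c}}_n$.

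Given this characterization, I define $f_{\denote{c}}(M)$ to be the unique $M'$ above when $M \in \shape(\Gamma)$, and to be any fixed default value otherwise. The equality $\denote{c}M = \mathtt{ret}(f_{\denote{c}}(M))$ for $M \in \shape(\Gamma)$ then follows directly: both sides are probability measures that agree on the singleton $\{f_{\denote{c}}(M)\}$ with mass $1$, and are zero elsewhere. Totality of $f_{\denote{c}}$ is immediate from the way we defined it on the complement of $\shape(\Gamma)$.

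The main obstacle is pinning down the right structural definition of $\determ\, c$ and doing the inductive argument that deterministic, terminating programs always yield Dirac distributions. The \lstinline|while| case is the most delicate: the semantics is given by a supremum over bounded unrollings, and we need to know both (i) that termination implies some finite $n$ suffices for each input and (ii) that the point-mass property is preserved through \texttt{bind} with a point-mass continuation, so that the supremum still concentrates on a single state. The other cases (assignment, indexed/length assignment, sequence, conditional) reduce to routine monadic calculations using the fact that $\mathtt{bind}\,(\mathtt{ret}\, x)\, k = k\,x$ and that expression evaluation on well-shaped states is total by \Cref{lem:expr-type-coterm} together with preservation.
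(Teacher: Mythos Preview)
Your approach is sound and is exactly the natural argument one would give here. The paper itself does not spell out a proof of this lemma: it states \Cref{lem:determ-command} and immediately uses it in the \textsc{Bag-Map} soundness argument, so there is nothing to compare against in terms of proof structure. Your decomposition into ``termination gives total mass one'' plus ``determinism gives singleton support'' is the expected route, and your attention to the \lstinline|while| case (lifting the Dirac property through the supremum of bounded unrollings) identifies the only genuinely nontrivial step.

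One small correction: you cite \Cref{lem:expr-type-coterm} to justify that expression evaluation is total on well-shaped states, but that lemma is a \emph{relational} co-termination statement (it compares evaluation in two $\Gamma$-related states). What you actually need for the one-sided inductive argument is the unary ``Termination for expressions'' lemma from \Cref{ap:aux-checker}, which says that $\term{\Gamma}{e}$ guarantees $\denote{e}\,M$ yields a value for every $M \in \shape(\Gamma)$. With that substitution, your plan goes through without change.
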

%
From \Cref{lem:determ-command}, we know that the semantics of $c$
can be described by a total function $f_{\denote{c}}$ mapping program states to
program states. However, the dependency analysis we performed reveals more about
$f_{\denote{c}}$. First, consider an arbitrary iteration in the bag map
loop; let's divide the program state right before executing $c$ into $4$
parts: 1)
the value of $t_{\mathit{in}}$, 2) the values of all modified variables in $c$,
3) the values of all other non-sensitive variables, and 4) the values of all
other sensitive variables. From the dependency analysis, we know $c$'s modified
variables have no dependency on 2) and 4). The variables that hold values from
3) are not modified, so their values remain constant throughout the entire loop,
and are also the same in both executions. Using this information, we can build a
function $f_{\denote{c}_\mathit{spec}}$ that takes the value of $t_\mathit{in}$
as the sole input, but calculates the same program state as $f_\denote{c}$: let $v$
be the input to $f_{\denote{c}_\mathit{spec}}$, we first create a fictitious
program state $M'$ by instantiating variables of 1), 2), and 4) in $M'$ with
well-shaped default values, and copying 3) from $M$ into $M'$. Now, by feeding
$M'[t_\mathit{in} \mapsto v]$ to $f_{\denote{c}}$, and accessing its value at
$t_{\mathit{out}}$, we get what $c$ would have computed for $t_\mathit{out}$.

Knowing these properties of $f_{\denote{c}_\mathit{spec}}$, we can choose the
one-sided invariant as:
$$\mathit{out}[0\dots i] = \mathtt{map}\, f_{\denote{c}_{\mathit{spec}}}\, \mathit{in}[0\dots i]$$
where $f_{\denote{c}_\mathit{spec}}\, v= f_\mathit{\denote{c}} M'[t_\mathit{in} \mapsto v]$.

The notation $\mathit{v}[i\dots j]$ selects a sub-array from $v$ in the range
$[i, j)$; if $j \leq i$, the it selects an empty sub-array. Thus, at the end of
both execution of bag map with $c$, we know the output bags are computed by
mapping the semantic function of $c$ over the values of the input array.

Having established this specification of bag map, we now need to consider its
sensitivity properties. We can prove the following lemma for $\mathtt{map}$:
\begin{lem}
For any function $f : \tau \rightarrow \sigma$, and two input lists $x_1$ and
$x_2$, if the bag distance between $x_1$ and $x_2$ is $d$, then the values of
$\mathtt{map}\, f\, x_1$ and $\mathtt{map}\, f\, x_2$ have bag distance up to
$d$.
\end{lem}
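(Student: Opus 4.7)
The plan is to prove the lemma by a direct decomposition of each bag into its ``common'' and ``distinctive'' parts with respect to the other bag, and then tracking what $\mathtt{map}\,f$ does to each part. The key observation is that multiset difference interacts nicely with $\mathtt{map}$: for any function $f$ and multisets $A, B$, we have $\mathtt{map}\,f\,(A \uplus B) = \mathtt{map}\,f\,A \uplus \mathtt{map}\,f\,B$, and moreover $\mathtt{map}\,f\,(X \setminus Y) \supseteq \mathtt{map}\,f\,X \setminus \mathtt{map}\,f\,Y$ (as multisets), because collisions under $f$ can only create additional common elements, never destroy existing ones.

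Concretely, let $C = x_1 \cap x_2$, $A = x_1 \setminus x_2$, $B = x_2 \setminus x_1$, so that $x_1 = C \uplus A$ and $x_2 = C \uplus B$, with $|A| + |B| = d$ by hypothesis. Applying $\mathtt{map}\,f$ to each decomposition and using its distributivity over disjoint union, I would obtain
\[
\mathtt{map}\,f\,x_1 = \mathtt{map}\,f\,C \uplus \mathtt{map}\,f\,A,
\qquad
\mathtt{map}\,f\,x_2 = \mathtt{map}\,f\,C \uplus \mathtt{map}\,f\,B.
\]
Since multiset difference cancels the common $\mathtt{map}\,f\,C$ portion, it follows that $\mathtt{map}\,f\,x_1 \setminus \mathtt{map}\,f\,x_2$ is a sub-multiset of $\mathtt{map}\,f\,A$, hence has cardinality at most $|A|$; symmetrically for $B$. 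Summing these bounds yields $d_{\{\sigma\}}(\mathtt{map}\,f\,x_1, \mathtt{map}\,f\,x_2) \leq |A| + |B| = d$.

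The only mildly subtle step is the cancellation claim, which is not a definitional identity but relies on a small lemma about multisets: if $M = P \uplus Q$ and $N = P \uplus R$, then $M \setminus N \sqsubseteq Q$ as multisets. I would discharge this by pointwise counting (for each element $y$, the multiplicity of $y$ in $M \setminus N$ is $\max(0, \mathrm{mult}_Q(y) - \mathrm{mult}_R(y)) \leq \mathrm{mult}_Q(y)$). Everything else is bookkeeping. An equally valid alternative route is induction on $d$, using that every unit change to a bag corresponds to a unit change to its image under $\mathtt{map}\,f$ together with a triangle inequality for bag distance; I would prefer the decomposition proof since it does not require first establishing the triangle inequality as a separate lemma.
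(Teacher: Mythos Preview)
Your decomposition argument is correct and complete: splitting each bag into the shared part $C$ and the residuals $A,B$, pushing $\mathtt{map}\,f$ through $\uplus$, and then invoking the multiset cancellation lemma (which you discharge cleanly by pointwise multiplicity counting) gives the bound directly. The only thing you leave implicit is that $|\mathtt{map}\,f\,A| = |A|$, but that is immediate since $\mathtt{map}$ preserves length.

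The paper takes the alternative route you mention at the end: induction on $d$. For $d=0$ the bags are permutations so their images are too; for $d+1$ it peels off one element in the symmetric difference, applies the inductive hypothesis to the remaining pair at distance $d$, and observes that re-inserting a single element can raise the bag distance by at most one. Your approach is arguably tidier, since it avoids the inductive bookkeeping and does not need the ``adding one element changes bag distance by at most one'' step (which is essentially a one-step triangle inequality). The paper's version is shorter to state but leans on that step; yours makes the multiset algebra explicit and is self-contained once the cancellation lemma is in hand. Both are equally valid and of comparable difficulty.
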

\begin{proof}
By induction on $d$.

When $d = 0$, the two bags must be permutations of each other. So the mapped
values are also permutations of each other. Thus, the mapped values also have
bag distance $0$.

In the inductive case, if the two input bags have distance $d+1$, without loss
of generality, assume $x_1$ has an element that is missing from
$x_2$. Then, it must be the case that $f$ applied to this element
is also missing from $\mathtt{map}\, f\, x_2$. From the induction hypothesis, we
know the mapped values of $x_1$ without this extra element and that of $x_2$
have bag distance up to $d$. Thus, adding an extra element will increase the bag
distance up to $d+1$.
\end{proof}

We also need to show that the bag map extension is $\linear$ with respect to the
pre-condition and post-condition in the conclusion of its typing rule. First, we
note that our sensitivity analysis of the bag map specification reveals the
input and output bags have the same sensitivity. So, if input bag's sensitivity
is scaled by $k$, then the output bag's sensitivity will be scaled by the same
$k$ according to this typing rule. Next, we note that all other modified
variables have sensitivity $\infty$, for any $k>0$, multiplying $k$ with
$\infty$ still results in $\infty$. For the variables not modified by bag map,
their sensitivities are not changed in this typing rule, using
the \textsc{Frame} rule from \aprhl we can show their sensitivities are also
scaled by $k$. This accounts for the linear scaling of all the variables in the
pre- and post-conditions.

\subsection{Vector-Map}
We follow the same strategy applied in the proof of bag map and first build the
function $f_{\denote{c}_\mathit{spec}}$ that characterizes $c$ behavior using
just $t_\mathit{in}$ and the values of non-modified non-sensitive variables
before entering the vector map loop. We also establish the same loop invariant:
$$\mathit{out}[0\dots i] = \mathtt{map}\, f_{\denote{c}_\mathit{spec}}\, \mathit{in}[0\dots i]$$

From the premises of the vector map typing rule, we know $c$ is a linear
command. So, by \Cref{defn:linear-commands}, we know that if $v\ag{1}$ and
$v\ag{2}$ have distance $d$, then $f_{\denote{c}_\mathit{spec}}\,v\ag{1}$
and $f_{\denote{c}_\mathit{spec}}\,v\ag{2}$ have distance $sd$, where $s
= \Gamma_3(t_\mathit{out})$, and $\Gamma_3$ is the typing context as specified
in the typing rule for vector map. In fact, we give the following definition to
characterize functions like $f_{\denote{c}_\mathit{spec}}$:

\begin{defn}
\label{defn:lipschitz}
Given a function $f : \tau \rightarrow \sigma$, if there is a number
$s \in \mathbb{R}^{> 0} \cup \{\infty\}$ such that for any $x_1, x_2\in \tau$,
the distance $d_\sigma(f\, x_1, f\, x_2) \leq sd_\tau(x_1, x_2)$, then we call
$f$ a linear function with scale factor $s$ with the chosen distance functions
$d_\tau$ and $d_\sigma$.
\end{defn}

We remark that functions as defined by \Cref{defn:lipschitz} are also called
lipschitz functions with lipschitz constant $s$. However, in the usual
mathematical definition of $s$-lipschitz functions, the value of $s$ does not
include $\infty$.

We then prove the following lemma for such functions:

\begin{lem}
\label{lem:array-scale}
If $f : \tau \rightarrow \sigma$ is a linear function with scale factor $s$,
then $\mathtt{map}\, f$ is also a linear function with scale factor $s$ with
distance functions chosen as the array distances on $[\tau]$ and $[\sigma]$.
\end{lem}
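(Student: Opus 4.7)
The plan is to do a case analysis on whether the two input vectors have the same length, corresponding to the two cases in \Cref{defn:array-dist}. Fix two vectors $a_1, a_2 \in [\tau]$ and note that $\mathtt{map}\, f$ preserves length, so $\mathtt{map}\, f\, a_1$ and $\mathtt{map}\, f\, a_2$ have the same length iff $a_1$ and $a_2$ do.

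In the unequal-length case, $d_{[\tau]}(a_1, a_2) = \infty$ by \Cref{defn:array-dist}, and likewise $d_{[\sigma]}(\mathtt{map}\, f\, a_1, \mathtt{map}\, f\, a_2) = \infty$. Since the scale factor $s$ is in $\mathbb{R}^{>0} \cup \{\infty\}$, we have $s \cdot \infty = \infty$ (using the extended-arithmetic convention adopted earlier in the paper), so the inequality $d_{[\sigma]}(\mathtt{map}\, f\, a_1, \mathtt{map}\, f\, a_2) \leq s\cdot d_{[\tau]}(a_1,a_2)$ holds trivially.

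In the equal-length case, write $L$ for the common length. The key step is a pointwise application of the hypothesis that $f$ is linear with scale factor $s$, which gives $d_\sigma(f(a_1[i]), f(a_2[i])) \leq s\cdot d_\tau(a_1[i], a_2[i])$ for each index $i$. Summing over $i$ and pulling the constant $s$ out of the sum yields
\begin{align*}
d_{[\sigma]}(\mathtt{map}\, f\, a_1, \mathtt{map}\, f\, a_2)
 &= \sum_{i=0}^{L-1} d_\sigma(f(a_1[i]), f(a_2[i])) \\
 &\leq s \sum_{i=0}^{L-1} d_\tau(a_1[i], a_2[i]) \\
 &= s\cdot d_{[\tau]}(a_1, a_2),
\end{align*}
which is exactly the linearity condition of \Cref{defn:lipschitz} at type $[\tau] \rightarrow [\sigma]$.

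There is no serious obstacle here; the argument is essentially bookkeeping once the correct case split is made. The only subtlety worth flagging is the treatment of the $\infty$ values both in the vector-distance definition and in the scale factor, which is why I handle unequal lengths as a separate case rather than trying to absorb it into the sum.
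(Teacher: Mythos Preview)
Your proposal is correct and follows essentially the same approach as the paper: both proofs case-split on whether the vectors have equal length, dispatch the unequal-length case via $s\cdot\infty=\infty$, and handle the equal-length case by applying the pointwise linearity of $f$ and summing. The only cosmetic difference is that the paper phrases the equal-length case as an induction on the common length (base case $L=0$, inductive step splitting off the last element), whereas you write the sum directly; these are the same argument in different dress.
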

\begin{proof}
We first case analyze on whether the two input arrays have the same length.

If they have different lengths, then their distance is $\infty$, so for any
positive $s$, the scaled distance $s \cdot \infty = \infty$ is still
infinite. And we are done.

If they have the same length, then we proceed by induction on the length of
these arrays.

When the length is $0$, the two arrays have distance $0$, and so do the mapped
arrays. The inequality $0 \leq s \cdot 0 = 0$ holds, so we are done.

In the inductive case, assume both input arrays $x_1$ and $x_2$ have length
$n+1$. Consider the prefix sub-arrays of length $n$. Let $d_{\mathit{prefix}}$
be the distance between the two prefix sub-arrays, and let $d_{\mathit{last}} =
d_{[\tau]}(x_1, x_2) - d_\mathit{prefix}$. Let the distance between the mapped
prefix sub-arrays be $d_\mathit{prefix}'$.

From the induction hypothesis, we know $d_\mathit{prefix}' \leq
sd_\mathit{prefix}$. Now, let $d_\mathit{last}' = d_\sigma(f\, x_1[n], f\,
x_2[n])$. Since $f$ is a linear function with scale factor $s$, we know
$d_\mathit{last}' \leq sd_\mathit{last}$.

Combining this inequality with the previous one, we get $d_\mathit{prefix}' +
d_\mathit{last}' \leq s (d_\mathit{prefix} + d_\mathit{last})$. By the
definition of array distances, we know the distance between the mapped arrays is
less than the distance between the input arrays scaled with $s$.
\end{proof}

Applying \Cref{lem:array-scale} together with the loop invariant we established
concludes the proof for array map.

The vector map program is also linear with respect to the pre- and
post-conditions produced by its sensitivity typing rule. Since the output array
has a sensitivity in the post-condition that is a multiple of the input array's
sensitivity in the pre-condition, scaling the input array's sensitivity by some
$k>0$ will scale the output array's sensitivity by the same factor. The other
modified and non-modified variables follow the same argument from bag map.

\subsection{Partition}
\label{ap:partition-proof}
We use the same reasoning from bag map to show that $out_\mathit{idx}
= \mathtt{map}\, f_{\denote{c}_\mathit{spec}}\, \mathit{in}$ --- that is, the
partition indices are computed by mapping the semantics $f_{\denote{c}_\mathit{spec}}$ over the input
bag.

For the second while loop that places each input bag element into the
corresponding partition, we will specify it using the combinator
$\mathtt{foldl}$ over bags:
\begin{lstlisting}
foldl : (b -> a -> b) -> b -> {a} -> b
\end{lstlisting}
The first argument to $\mathtt{foldl}$ will be a function that specifies how to
place one element from the input bag into the output partitions. We will use the
function $\mathit{place}$ to build this argument:
\begin{lstlisting}
place : [{&$\tau$&}] -> (int * &$\tau$&) -> [{&$\tau$&}]
place parts (idx, elmt) =
  match nth idx parts with
    Some part -> update (part ++ [elmt]) idx parts
    None      -> parts
\end{lstlisting}
The $\mathit{place}$ function takes the partitions, followed by a pair of
partition index and the bag element, and produces new partitions such that for
pairs whose indices are in range, the bag element will be inserted at the end of
the indexed partition. The pairs whose indices are out of range are simply
ignored.

The specification of the second while loop is then
$$\mathit{out}[0\dots i] = \mathtt{foldl}\; \mathtt{place}\; \mathit{empty}\;
(\mathtt{zip}\; \mathit{out_{idx}}[0\dots i]\; \mathit{in}[0\dots i])$$
The right hand side can be expanded into
$$\mathtt{foldl}\; \mathtt{place}\; \mathit{empty}\;
(\mathtt{zip}\; (\mathtt{map}\, f_{\denote{c}_\mathit{spec}}\, \mathit{in}[0\dots
i])\; \mathit{in}[0\dots i])$$
using the specification established the first while loop.
The $\mathit{empty}$ value is the initial empty partition---an array of empty
bag values, whose length is equal to the value of the specified
$\mathit{nParts}$ parameter. The $\mathtt{zip}$ operator takes two lists and
produces a list of pairs.

We prove this specification characterizes the behavior of the second loop using
the one-sided \aprhl rules.

Next, we need to consider the sensitivity of $\mathit{out}$ using the
$\mathtt{foldl}$ specification. Our typing rule claims that the distance between
$\mathit{out}\ag{1}$ and $\mathit{out}\ag{2}$ is at most the distance between
the input bags $\mathit{in}\ag{1}$ and $\mathit{in}\ag{2}$. Let $d$ be the bag
distance between $\mathit{in}\ag{1}$ and $\mathit{in}\ag{2}$.

But first, we will need to establish a few more properties about values
in \Fuzzi.

\begin{defn}[Equivalence]
Two values $v_1$ and $v_2$ of type $\tau$ are equivalent if their distance
$d_\tau(v_1, v_2)$ is $0$.
\end{defn}

\begin{lem}
The equivalence definition is a proper equivalence relation: it is symmetric,
reflexive and transitive.
\end{lem}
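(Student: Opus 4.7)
The plan is to prove all three equivalence properties simultaneously by structural induction on the type $\tau$, with cases for the primitive types $\tyint$, $\tyfloat$, $\tybool$, for vectors $[\tau]$, and for bags $\{\tau\}$. In each case I will need to verify reflexivity ($d_\tau(v, v) = 0$), symmetry ($d_\tau(v_1, v_2) = d_\tau(v_2, v_1)$), and transitivity (if $d_\tau(v_1, v_2) = 0$ and $d_\tau(v_2, v_3) = 0$ then $d_\tau(v_1, v_3) = 0$).

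For the primitive types the distance is the absolute value of a numeric difference (or identity for $\tybool$), and all three properties follow directly from standard properties of absolute value. For vectors $[\tau]$ I would use \Cref{defn:array-dist}: reflexivity is immediate from the inductive hypothesis applied pointwise; symmetry follows from pointwise symmetry of $d_\tau$. For transitivity, the key observation is that if $d_{[\tau]}(v_1, v_2) = 0$ then, because each element-wise distance is non-negative and their sum is zero, every element-wise distance must itself be zero and the two vectors must have the same length; applying the inductive hypothesis pointwise then transports equivalence element by element from $v_1$ to $v_3$ via $v_2$, and summing back yields $d_{[\tau]}(v_1, v_3) = 0$.

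The bag case is the subtle one. Using \Cref{defn:bag-dist}, reflexivity is immediate since $a \setminus a$ is empty, and symmetry holds because the defining expression $|a_1 \setminus a_2| + |a_2 \setminus a_1|$ is symmetric in $a_1, a_2$. For transitivity, I would argue that $d_{\{\tau\}}(a_1, a_2) = 0$ forces both multiset differences to be empty, so $a_1$ and $a_2$ agree as multisets (each contains every element of the other with at least the same multiplicity). Transitivity of multiset equality, together with the inductive hypothesis on $\tau$ to handle any equivalence at the element level, then gives $d_{\{\tau\}}(a_1, a_3) = 0$.

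The main obstacle is reconciling the inductively given equivalence on elements with the multiset-difference operator in the bag case, since the latter is phrased with respect to a fixed notion of value equality rather than the equivalence being defined. I would address this by showing, as a preparatory lemma, that on each type the relation ``$d_\tau(\cdot, \cdot) = 0$'' coincides with the underlying value equality used in the semantics (trivially on primitives, pointwise on vectors, and by matching multiplicities on bags via the same inductive argument), which collapses the apparent gap and makes the bag case go through cleanly.
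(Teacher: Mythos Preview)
Your approach matches the paper's: its entire proof reads ``By induction on the type,'' and your case analysis for primitives and vectors is sound.

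There is one flaw, though: the preparatory lemma you propose for the bag case is false. In the paper's semantics, values of both $[\tau]$ and $\{\tau\}$ are finite \emph{sequences}, so two distinct sequences that are permutations of one another have bag distance $0$ while being unequal as values; hence ``$d_\tau(\cdot,\cdot)=0$'' does not coincide with the underlying value equality at bag types. Fortunately this lemma is also unnecessary. The multiset difference in the definition of bag distance is taken with respect to the fixed underlying equality on elements, not with respect to the equivalence being defined, so ``$d_{\{\tau\}}(a_1,a_2)=0$'' simply says that $a_1$ and $a_2$ are equal as multisets under that fixed equality, and transitivity of that relation is immediate without any appeal to the inductive hypothesis on $\tau$. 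The obstacle you anticipated does not actually arise, and the straightforward induction goes through.
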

\begin{proof}
By induction on the type.
\end{proof}

\begin{lem}
\label{lem:dist-equiv}
Given two values $x_1$ and $x_1'$ of type $\tau$ that are equivalent, for any
$x_2$ of the same type, $d_\tau(x_1, x_2) = d_\tau(x_1', x_2)$.
\end{lem}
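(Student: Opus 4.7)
The plan is to prove this by induction on the type $\tau$. The primitive base cases ($\tyint$, $\tyfloat$, $\tybool$) are immediate: the distance function is absolute difference (or boolean equality), so $d_\tau(x_1, x_1') = 0$ forces literal equality $x_1 = x_1'$, and the claim reduces to a tautology.

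For the vector case $\tau = [\sigma]$, equivalence $d_{[\sigma]}(x_1, x_1') = 0$ forces $x_1$ and $x_1'$ to have a common length $L$ (otherwise the distance would be $\infty$), and since the L1 distance is a sum of non-negative terms from \Cref{defn:array-dist}, it also forces $d_\sigma(x_1[i], x_1'[i]) = 0$ for every index $i$. If $x_2$ has a length different from $L$, then both sides are $\infty$ and we are done. Otherwise, apply the inductive hypothesis index-by-index to conclude $d_\sigma(x_1[i], x_2[i]) = d_\sigma(x_1'[i], x_2[i])$, and sum over $i$ to obtain $d_{[\sigma]}(x_1, x_2) = d_{[\sigma]}(x_1', x_2)$.

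For the bag case $\tau = \{\sigma\}$, I would first establish an auxiliary compatibility lemma: if $d_{\{\sigma\}}(x_1, x_1') = 0$, then for any bag $y$ the multiset-difference sizes $|x_1 \setminus y|$ and $|x_1' \setminus y|$ coincide (and likewise $|y \setminus x_1| = |y \setminus x_1'|$). From \Cref{defn:bag-dist}, the hypothesis means $x_1$ and $x_1'$ are multiset-permutations of each other up to pointwise equivalence of their elements, and the inductive hypothesis at type $\sigma$ lets us interchange equivalent elements inside the multiset-difference operation. Summing the two equalities yields $d_{\{\sigma\}}(x_1, x_2) = d_{\{\sigma\}}(x_1', x_2)$.

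The main obstacle is the bag case, because the formal definition of multiset difference relies on an underlying notion of element equality, whereas at nested types ``equivalence'' (distance $0$) may be strictly coarser than literal equality — for example, two nested bag values can be equivalent without being syntactically identical. The technical heart of the proof is therefore the compatibility lemma above, which shows that the multiset-difference operator is well-defined modulo the equivalence relation on $\sigma$, and this is exactly where the inductive hypothesis is used. An alternative, perhaps slicker, route would be to first verify that $d_\tau$ satisfies symmetry and the triangle inequality (each also by induction on $\tau$), and then deduce \Cref{lem:dist-equiv} in one line from $|d_\tau(x_1, x_2) - d_\tau(x_1', x_2)| \leq d_\tau(x_1, x_1') = 0$; the bag case of the triangle inequality still requires essentially the same reasoning about multiset differences.
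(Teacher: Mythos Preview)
Your proposal is correct and follows the same route as the paper, which simply records ``By induction on the type'' with no further detail; your write-up is a faithful expansion of that one-line proof.

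One remark on the bag case: you flag it as ``the main obstacle'' and invoke the inductive hypothesis at $\sigma$ to interchange equivalent elements inside multiset differences, but under the paper's reading of \Cref{defn:bag-dist} (multiset difference taken with respect to syntactic equality of elements, as confirmed by the paper's later remark that ``two bags are equivalent if they are permutations of each other''), the hypothesis $d_{\{\sigma\}}(x_1,x_1')=0$ already forces $x_1$ and $x_1'$ to be the \emph{same} multiset, not merely elementwise-equivalent. Hence $x_1\setminus x_2$ and $x_1'\setminus x_2$ coincide on the nose and the bag case goes through without appealing to the inductive hypothesis at all; the vector case is where the induction actually does work.
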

\begin{proof}
By induction on the type.
\end{proof}

For bags, two bags are equivalent if they are permutations of each other.

For arrays, two arrays are equivalent if they have the same length and each
pair of elements in their corresponding positions are equivalent.

So, two arrays of bags are equivalent if these two arrays have the same length
and each pair of bags at every position are permutations of each other.

Let's establish a notation for writing down permutations. We will use a sequence
of integers to express a permutation. For example, given the bag $\{a, b, c\}$,
an equivalent bag under the permutation $312$ is $\{c,a,b\}$.

Next, we show that the partition specification respects the equivalence
relation.

\begin{lem}
\label{lem:fold-equiv}
Given two bags $x_1$ and $x_2$ and any total function $f$, if $x_1$ and $x_2$
are equivalent, then
$$\mathtt{foldl}\; \mathtt{place}\; \mathit{empty}\;
(\mathtt{zip}\; (\mathtt{map}\, f\, x_1)\; x_1)$$
and
$$\mathtt{foldl}\; \mathtt{place}\; \mathit{empty}\;
(\mathtt{zip}\; (\mathtt{map}\, f\, x_2)\; x_2)$$
are also equivalent.
\end{lem}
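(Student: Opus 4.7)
The plan is to reduce the statement to a multiset characterization of each output partition. Concretely, for any bag $x$, any total $f$, and any index $i \in [0, \mathit{nParts})$, I would prove the specification lemma
\[
  \bigl(\mathtt{foldl}\; \mathtt{place}\; \mathit{empty}\;
        (\mathtt{zip}\; (\mathtt{map}\, f\, x)\; x)\bigr)[i]
  \;\equiv\; \mathtt{filter}\;(\lambda y.\, f\,y = i)\; x
\]
as bags, where $\equiv$ denotes the equivalence of bags (permutation). The length of the output array is $\mathit{nParts}$ on both sides, since the initial accumulator $\mathit{empty}$ already has that length and $\mathtt{place}$ never changes the array length. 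So once the per-index characterization is established, the array equivalence of the two folds reduces to showing that, for each $i$, the two filtered sub-bags are equivalent.

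The per-index characterization is proved by induction on $x$, generalized over an arbitrary accumulator $\mathit{acc}$ of length $\mathit{nParts}$ so that the inductive step goes through. In the cons case, an input element $y$ is paired with $f\,y$ by $\mathtt{zip} \circ \mathtt{map}$; applying $\mathtt{place}$ either appends $y$ at the end of $\mathit{acc}[f\,y]$ (when $0 \leq f\,y < \mathit{nParts}$) or leaves $\mathit{acc}$ unchanged. In both subcases the inductive hypothesis on the tail, combined with the fact that appending only affects partition $f\,y$, yields the desired multiset equality.

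With this lemma in hand, equivalence of $x_1$ and $x_2$ means they are permutations of each other, so they have the same multiset of elements. Since $f$ is a function, $\mathtt{filter}\;(\lambda y.\, f\,y = i)\; x_1$ and $\mathtt{filter}\;(\lambda y.\, f\,y = i)\; x_2$ are then also permutations of each other, hence equivalent as bags. Applying this coordinate-wise, and combining with the shared length $\mathit{nParts}$, gives array equivalence of the two folds by the definition of array equivalence used immediately before the lemma. The main obstacle is mostly bookkeeping: carefully generalizing the induction over an arbitrary accumulator (including the out-of-range case where $\mathtt{place}$ is the identity), and invoking \Cref{lem:dist-equiv} to propagate pointwise bag equivalences into array equivalence, rather than any deep probabilistic or relational reasoning.
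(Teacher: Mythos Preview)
Your argument is correct, but it takes a different route from the paper. The paper proceeds by induction on the number of inversions in the permutation $\sigma$ with $\sigma(x_1)=x_2$: it locates an adjacent inversion, swaps it to obtain $\sigma'$ with one fewer inversion, invokes the inductive hypothesis for $x_1$ versus $\sigma'(x_1)$, and then argues directly that the outputs of $\sigma'(x_1)$ and $\sigma(x_1)$ agree up to equivalence by a case split on whether the two swapped elements land in the same partition. You instead establish a closed-form characterization of each output cell as $\mathtt{filter}\,(\lambda y.\, f\,y=i)\,x$ (in fact this holds exactly as lists, not merely as bags, once you generalize the accumulator) and then observe that filtering respects permutations. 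Your approach is arguably cleaner and yields a reusable specification that would also streamline the subsequent sensitivity lemma for \textsc{Partition}; the paper's inversion argument avoids the need for a closed form and mirrors the technique reused in \Cref{lem:fold-plus-equiv} for \textsc{Bag-Sum}. One small remark: the step ``pointwise bag equivalence implies array equivalence'' follows immediately from the definition of vector distance (a sum of zeros is zero); \Cref{lem:dist-equiv} is not really the tool you need there.
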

\begin{proof}
Let $\sigma$ be the permutation such that $\sigma(x_1) = x_2$.

We call $i, j$ an inversion in $\sigma$ if $i$ appears before $j$, but $i >
j$. We proceed by induction on the number of inversions in $\sigma$.

In the base case, $\sigma$ is identity, so $x_1 = x_2$, and we are done.

In the inductive case, let there be $k+1$ inversions in $\sigma$. There must be
an adjacent inversion in $\sigma$. An adjacent inversion is a length-$2$
subsequence $ij$ in $\sigma$ such that $i$ appears immediately before $j$, but
$i > j$. If there were no such adjacent inversions, then $\sigma$ must be
identity again, and this would be a contradiction with the number of inversions
$k+1$.

We produce a new permutation $\sigma'$ by swapping $i$ and $j$. The new
permutation $\sigma'$ must have $k$ inversions. Consider the following
illustration:
$$\sigma' = \sigma_1\sigma_2\dots j i \dots\sigma_n$$ For all numbers $\sigma_1$ upto
$\sigma_{k}$ right before $i$, their relative position did not change with
respect to $i$ or $j$, so we did not introduce nor eliminate inversions by
swapping $i$ and $j$. The same argument goes for all numbers after $j$.

Since $ij$ itself is an inversion, the total number of inversions must have
decreased by $1$.

So, by induction hypothesis, the output from folding $x_1$ and $\sigma'(x_1)$
must be equivalent.

Now, we just need to show the output from folding $\sigma'(x_1)$ and $x_2$ must
be equivalent as well, then we are done. Recall $x_2 = \sigma(x_1)$.

The only difference between these two bags is that $x_1[i]$ and $x_1[j]$ appears
in swapped orders in $\sigma'(x_1)$ and $\sigma(x_1)$.

Let's do a case analysis on whether $f\, x_1[i] = f\, x_1[j]$. Let the
partition index computed by applying $f$ to $x_1[i]$ and $x_1[j]$ be $m$ and
$n$.

If $m\neq n$, then the output from $\mathtt{foldl}$ will in fact be
identical. This is because since $x_1[i]$ is placed into the $m$-th output bag,
and $x_1[j]$ is placed into the $n$-th output bag, but the order in which they
are processed have not changed with respect to other elements in their
respective output bags. So the output remains identical.

If $m = n$, then only the $m$-th output bag will be impacted. And in that output
bag, the values $x_1[i]$ and $x_1[j]$ will be swapped since the order in which
they are processed is changed. But this just permutes the $m$-th output bag. So,
the output arrays from folding $\sigma'(x_1)$ and $x_2$ remain equivalent.
\end{proof}

By \Cref{lem:dist-equiv} and \Cref{lem:fold-equiv}, we know that for any two
inputs $x_1$ and $x_2$, permuting them does not change the distance on their
output from the specification of partition.

Knowing this, we are ready to prove the sensitivity condition for partition.

\begin{lem}
Given two bags $x_1$ and $x_2$ with distance $d$ and a total function $f$, the
distance between
$$\mathtt{foldl}\; \mathit{place}\; \mathit{empty}\;
(\mathtt{zip}\; (\mathtt{map}\, f\, x_1)\; x_1)$$
and
$$\mathtt{foldl}\; \mathit{place}\; \mathit{empty}\; (\mathtt{zip}\;
(\mathtt{map}\, f\, x_2)\; x_2)$$ is also $d$.
\end{lem}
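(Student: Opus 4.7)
The plan is to proceed by induction on the bag distance $d = d_{\{\tau\}}(x_1, x_2)$, using the two supporting lemmas already established (\Cref{lem:dist-equiv} and \Cref{lem:fold-equiv}) to absorb permutation freedom so that we can add or remove elements at a convenient position (the end of the bag, which is where $\mathtt{foldl}$ sees them last).

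For the base case $d = 0$, the bags $x_1$ and $x_2$ are permutations of each other, so \Cref{lem:fold-equiv} gives that the two folds produce equivalent outputs (i.e., output vectors whose component bags are permutations of each other), and \Cref{lem:dist-equiv} then gives vector-of-bags distance $0$, as required. For the inductive case $d = k+1$, without loss of generality at least one element $e$ belongs to $x_1 \setminus x_2$ (the symmetric case is analogous). Let $x_1' = x_1 \setminus \{e\}$; then $d_{\{\tau\}}(x_1', x_2) = k$. By \Cref{lem:fold-equiv} and \Cref{lem:dist-equiv}, we may reorder $x_1$ so that $e$ is the last element, i.e., treat $x_1$ as $x_1' \mathbin{+\!\!+} [e]$ without changing any output distances.

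The key local step is then to show that the fold of $x_1' \mathbin{+\!\!+} [e]$ differs from the fold of $x_1'$ by at most $1$ in vector-of-bags distance. Since $\mathtt{foldl}$ is left-associative and $e$ is last, the output for $x_1$ is obtained from the output for $x_1'$ by a single call to $\mathit{place}$ with argument $(f\,e, e)$. By the definition of $\mathit{place}$, there are two cases: if $f\,e$ is out of range, the output is unchanged (distance $0$); otherwise exactly one partition has $e$ appended (increasing its bag distance by at most $1$, and leaving all other partitions equal), contributing vector-of-bags distance exactly $1$. Combining this with the induction hypothesis $d_{[\{\tau\}]}(\mathrm{fold}(x_1'), \mathrm{fold}(x_2)) \leq k$ via the triangle inequality for vector-of-bags distance yields the bound $k+1 = d$.

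The main obstacle is justifying the reduction to ``append at the end'': the specification takes $x_1$ in its given order, so we need \Cref{lem:fold-equiv} to guarantee that reordering the input yields an equivalent (permutation-of-components) output, and \Cref{lem:dist-equiv} to guarantee that equivalence preserves the distance we are bounding. The rest is mechanical: a careful case split on in-range versus out-of-range partition indices in $\mathit{place}$, plus the standard triangle inequality for L1-style distances, which vector-of-bags distance inherits from summing pointwise bag distances.
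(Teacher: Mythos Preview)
Your proposal is correct and follows essentially the same approach as the paper: induction on the bag distance $d$, using the permutation-invariance lemmas (\Cref{lem:dist-equiv} and \Cref{lem:fold-equiv}) to move the differing element to the end, then analyzing a single application of $\mathit{place}$. Your version is in fact slightly more careful than the paper's, since you explicitly handle the out-of-range case for $f\,e$ and phrase the conclusion as an upper bound (which is what the typing rule actually needs), whereas the paper's proof glosses over the out-of-range branch.
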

\begin{proof}
Since we know permuting $x_1$ and $x_2$ does not change the distance on the
outputs, with out loss of generality, we can assume that $x_1$ and $x_2$ are
arranged such that $x_1 = px_1'$ and $x_2 = px_2'$, where $p$ is a common
prefix, and $x_1'$ is the multiset difference between $x_1$ and $x_2$, and
similarly $x_2'$ is the multiset difference between $x_2$ and $x_1$. This
implies $d = |x_1'| + |x_2'|$.

Proceed by induction on $d$.

In the base case, we know $x_1 = x_2 = p$, and we are done.

In the inductive case, assume the bag distance is $d+1$, and $x_1 = px_1'v$ and
$x_2 = px_2'$, where $v$ is the last element in the multiset difference between
$x_1$ and $x_2$.

Since $d + 1 = |x_1'v| + |x_2'|$, which implies $d = |x_1'| + |x_2'|$. By
induction hypothesis, we know the distance between folding $px_1'$ and $x_2$ is
$d$. We just need to consider how adding the last value $v$ to $x_1$ changes the
distance.

Let $i = f\; v$. So $v$ will be placed at the end of the $i$-th output
bag. Since we are only adding $v$ to the output from $px_1'$ and nothing is
added to the output from $x_2$, this implies the distance must increase by
$1$. This concludes the proof.
\end{proof}

\subsection{Bag-Sum}
\label{ap:bag-sum-proof}
We again deploy the same strategy used so far. We can establish a specification
that models bag sum with the one-sided loop invariant
$$\mathit{out} =
\mathtt{foldl}\, (\lambda\,\mathit{sum}\,\mathit{v}.\, \mathit{sum} + \mathtt{clip}\, \mathit{bound}\, v)\, 0\, \mathit{in}[0\dots i]$$
where $\mathtt{clip}$ returns a value whose magnitude is within the bound set by
its first argument.

We also show that this specification respects equivalence relations on bags.

\begin{lem}
\label{lem:fold-plus-equiv}
Given two equivalent bags $x_1$ and $x_2$, the values $\mathtt{foldl}\,
(\lambda\,\mathit{sum}\,\mathit{v}.\, \mathit{sum}
+ \mathtt{clip}\, \mathit{bound}\, v)\, 0\,x_1$ and $\mathtt{foldl}\,
(\lambda\,\mathit{sum}\,\mathit{v}.\, \mathit{sum}
+ \mathtt{clip}\, \mathit{bound}\, v)\, 0\,x_2$ are the same.
\end{lem}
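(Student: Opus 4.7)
The plan is to reduce this to the elementary fact that summation is commutative and associative, so that permuting the input bag does not change the accumulated sum. Recall that two bags $x_1$ and $x_2$ are equivalent exactly when their bag distance is $0$, which by \Cref{defn:bag-dist} means $|x_1\setminus x_2| + |x_2\setminus x_1| = 0$, i.e.\ they are permutations of one another.

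First I would observe that for any bag $x$, the value
\[
\mathtt{foldl}\,(\lambda\,\mathit{sum}\,v.\ \mathit{sum} + \mathtt{clip}\,\mathit{bound}\,v)\,0\,x
\]
equals $\sum_{v \in x} \mathtt{clip}\,\mathit{bound}\,v$, where the sum ranges over elements of $x$ with multiplicity. This follows by a straightforward induction on the length of $x$, using the definition of $\mathtt{foldl}$. Since the function $\mathtt{clip}\,\mathit{bound}$ is a pure function of a single element, no positional information about the bag is used beyond the multiset of clipped values.

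Next I would argue by induction on the size of the common part that for any permutation $\sigma$ of $x$, $\sum_{v \in x} \mathtt{clip}\,\mathit{bound}\,v = \sum_{v \in \sigma(x)} \mathtt{clip}\,\mathit{bound}\,v$, which is immediate from commutativity and associativity of real addition (the same ``adjacent swap'' strategy used in the proof of \Cref{lem:fold-equiv} applies here, but is now trivial because the two adjacent clipped values $\mathtt{clip}\,\mathit{bound}\,x_1[i]$ and $\mathtt{clip}\,\mathit{bound}\,x_1[j]$ are simply added together and can be exchanged freely). Combining these two facts, both folds compute the same real number, so in particular they are at distance $0$ and thus equivalent.

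There is no real obstacle here: unlike the \textsc{Partition} case, where the fold accumulator is an array of bags and one must worry about whether swapping adjacent elements perturbs positions within the output partitions, here the accumulator is a single real number under $+$, which is an abelian monoid. Consequently, the argument is essentially a one-line appeal to commutativity after rewriting $\mathtt{foldl}$ as an explicit sum. The only care needed is to stay within \Fuzzi's semantic model of reals so that the rewriting of $\mathtt{foldl}$ into a finite sum is justified for any bag input, which follows because $\mathtt{clip}\,\mathit{bound}$ is total.
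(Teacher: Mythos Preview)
Your proposal is correct and matches the paper's own argument: the paper also proves this by induction on the number of inversions in the permutation taking $x_1$ to $x_2$, invoking commutativity of addition in the inductive step. Your additional step of first rewriting the $\mathtt{foldl}$ as an explicit sum is a harmless presentational choice that makes the appeal to commutativity and associativity even more transparent, but the core idea is identical.
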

\begin{proof}
The proof again proceeds by induction on the number of inversions in the
permutation $\sigma$ where $x_2 = \sigma(x_1)$.

In the inductive case, we apply commutativity of addition to conclude the proof.
\end{proof}

So, by \Cref{lem:dist-equiv} and \Cref{lem:fold-plus-equiv}, we can again
permute the inputs without changing the distance on the outputs.

Finally, to reason about the sensitivity of the specification, we consider the
following lemma.

\begin{lem}
Given two bags $x_1$ and $x_2$ of distance $d$ and a non-negative real number
$\mathit{bound}$, then distance between $\mathtt{foldl}\,
(\lambda\,\mathit{sum}\,\mathit{v}.\, \mathit{sum}
+ \mathtt{clip}\, \mathit{bound}\, v)\, 0\,x_1$ and $\mathtt{foldl}\,
(\lambda\,\mathit{sum}\,\mathit{v}.\, \mathit{sum}
+ \mathtt{clip}\, \mathit{bound}\, v)\, 0\,x_2$ is at most
$d\cdot \mathit{bound}$.
\end{lem}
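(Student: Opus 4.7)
The plan is to reduce to a canonical arrangement of the two bags using the permutation-invariance already proved, and then use the triangle inequality on clipped values.

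First, I would invoke \Cref{lem:fold-plus-equiv} together with \Cref{lem:dist-equiv}: since permuting either input bag changes neither the output of the fold nor the bag distance $d_{\{\tau\}}(x_1,x_2)$, I may rearrange both $x_1$ and $x_2$ so that they share a common prefix. Concretely, write $x_1 = p\, x_1'$ and $x_2 = p\, x_2'$ where $p = x_1 \cap x_2$ (as multisets), $x_1' = x_1 \setminus x_2$, and $x_2' = x_2 \setminus x_1$. By \Cref{defn:bag-dist} we then have $d = |x_1'| + |x_2'|$.

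Next, I would exploit the fact that the fold accumulates via addition, so it decomposes over concatenation: for any bag $y$ and prefix $p$,
\begin{align*}
\mathtt{foldl}\, g\, 0\, (p\, y) \;=\; \mathtt{foldl}\, g\, 0\, p \;+\; \sum_{v \in y} \mathtt{clip}\, \mathit{bound}\, v,
\end{align*}
where $g = \lambda\, \mathit{sum}\, v.\, \mathit{sum} + \mathtt{clip}\, \mathit{bound}\, v$. Let $S = \mathtt{foldl}\, g\, 0\, p$. Then the two fold values are $S + \Sigma_1$ and $S + \Sigma_2$ where $\Sigma_i = \sum_{v \in x_i'} \mathtt{clip}\, \mathit{bound}\, v$, so their difference is $|\Sigma_1 - \Sigma_2|$, independent of $S$.

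Finally, I would apply the triangle inequality and the pointwise clip bound. Since $|\mathtt{clip}\, \mathit{bound}\, v| \leq \mathit{bound}$ for every $v$, we get $|\Sigma_i| \leq |x_i'| \cdot \mathit{bound}$, and so
\begin{align*}
|\Sigma_1 - \Sigma_2| \;\leq\; |\Sigma_1| + |\Sigma_2| \;\leq\; (|x_1'| + |x_2'|) \cdot \mathit{bound} \;=\; d \cdot \mathit{bound}.
\end{align*}
The only subtle step is the initial reduction to a common prefix: I must check that \Cref{lem:fold-plus-equiv} applies symmetrically to both $x_1$ and $x_2$ (permuting each independently), which follows immediately since equivalence of bags is an equivalence relation. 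Everything after that reduction is routine arithmetic, so I do not expect a serious obstacle.
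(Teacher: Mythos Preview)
Your proposal is correct and follows essentially the same approach as the paper: both first invoke \Cref{lem:fold-plus-equiv} and \Cref{lem:dist-equiv} to reduce to the canonical form $x_1 = p\,x_1'$, $x_2 = p\,x_2'$ with $d = |x_1'| + |x_2'|$, and then bound the contribution of the differing elements by $\mathit{bound}$ each. The only cosmetic difference is that the paper carries out the last step by induction on $d$ (peeling off one extra element at a time), whereas you unfold the fold over concatenation and apply the triangle inequality in one shot; these are equivalent formalizations of the same argument.
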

\begin{proof}
With out loss of generality, we can again assume $x_1 = p x_1'$ and $x_2 = p
x_2'$ just like we did for partition. We know $d = |x_1'| + |x_2'|$.

Proceed by induction on $d$. The base case follows directly from $x_1 = x_2 =
p$.

In the inductive case, we assume $x_1 = px_1'v$, where $v$ is the last element
of $x_1$. Since $d = |x_1'| + |x_2'|$, by induction hypothesis, we know the
distance between bag summing $px_1'$ and $x_2$ is at most
$d\cdot \mathit{bound}$.

The bag sum of $x_1$ simply adds $\mathtt{clip}\, \mathit{bound}\, v$ to the bag
sum of $px_1'$. But the absolute value of $\mathtt{clip}\, \mathit{bound}\, v$
is at most $\mathit{bound}$. So, the distance between the bag sum of $x_1$ and
$x_2$ is at most $(d+1)\cdot \mathit{bound}$. This concludes the proof.
\end{proof}

The bag sum program is also linear with respect to the typing contexts admitted
by its sensitivity rules. Scaling the pre-condition by $k>0$ will
correspondingly cause the output sum's sensitivity to be scaled by $k$ in the
post-condition. The other modified variables have $\infty$-sensitivity, and the
scaled post-condition also has their sensitivity as $\infty$.

\subsection{Advanced Composition}
The advanced composition rule is a straightforward application of the \aprhl
advanced composition rule.

\fi 

\section{\Fuzzi Implementation of Differentially Private Gradient Descent}
\label{ap:gradient-descent}
We show the full implementation of differentially private gradient descent for
logistic regression (as discussed in \Cref{sec:eval-logistic-regression}) here.
This code shown here is largely comprised of three parts:
\begin{enumerate*}
\item a \lstinline|bmap| application that preprocesses the input data,
\item a second \lstinline|bmap| application that computes the private gradients,
\item and a final step that releases noised gradients and updates model parameter.
\end{enumerate*}
The code also uses a special extension called \lstinline|repeat|. This extension
takes a loop index variable, a constant literal integer and a \Fuzzi command as
parameters, and expands to a while loop that executes the command for the
specified number of times. The typing rule for this extension simply unrolls the
loop for the specified number of times, but perform no special deduction on the
sensitivities and privacy cost for the entire loop. We had elided this extension
from the main body of the paper because it only provides a better programming
experience (one could simply copy the loop body for the specified number of
times to reach same result), but does not provide additional insight to \Fuzzi's
design.

\newpage
\begin{lstlisting}
lamb = 0.1;
rate = 0.1;
epoch = 0;
size $= lap(10.0, fc(length(db)));
/* used advanced composition for 100 total passes */
ac(epoch, 100, 1.0e-6,
  /* extend each row to account for bias */
  bmap(db, db1, trow, i, trow1,
    trow1 = zero_786;
    trow1[0] = 1.0;
    repeat(j, 785, trow1[j+1] = trow[j];);
    j = 0;
  );
  /* compute the gradient for each row */
  i = 0;
  trow1 = zero_786;
  bmap(db1, dws, trow1, i, twout,
    twout = zero_785;
    repeat(j, 785, twout[j] = trow1[j];);
    j = 0;
    dt = clip(dot(twout, w), 100.0);
    temp = exp(-1.0 * trow1[785] * dt);
    prob = 1.0 / (1.0 + temp);
    sc = (1.0 - prob) * trow1[785];
    twout = scale(sc, twout);
    dt = 0.0;
    temp = 0.0;
    prob = 0.0;
    sc = 0.0;
  );
  /* compute noised gradient and update model parameter */
  repeat(j, 785,
    i = 0; twout = zero_785; tf_out = 0.0;
    bmap(dws, dws_j, twout, i, tf_out, tf_out = twout[j];);
    i = 0;tf_out = 0.0;
    bsum(dws_j, j_sum, i, tf_out, 1.0);
    j_sum $= lap(5000.0, j_sum);
    w[j] = w[j] + (j_sum / size - 2.0 * lamb * w[j]) * rate;
  );
  /* clear aux variables */
  db1 = {}; dt = 0.0;
  dws = {}; dws_j = {};
  i = 0; j = 0;
  prob = 0.0; sc = 0.0; temp = 0.0; tf_out = 0.0;
  trow = zero_785; trow1 = zero_786; twout = zero_785;
);
\end{lstlisting}

\end{document}